	\newenvironment{breakablealgorithm}
	  {
	   \begin{center}
	     \refstepcounter{algorithm}
	     \hrule height.8pt depth0pt \kern2pt
	     \renewcommand{\caption}[2][\relax]{
	       {\raggedright\textbf{\ALG@name~\thealgorithm} ##2\par}%
	       \ifx\relax##1\relax 
	         \addcontentsline{loa}{algorithm}{\protect\numberline{\thealgorithm}##2}%
	       \else 
	         \addcontentsline{loa}{algorithm}{\protect\numberline{\thealgorithm}##1}%
	       \fi
	       \kern2pt\hrule\kern2pt
	     }
	  }{
	     \kern2pt\hrule\relax
	   \end{center}
	  }
	\DeclareMathOperator*{\argmin}{argmin}
\theoremstyle{plain}
\newtheorem{theorem}{Theorem}[section]
\newtheorem{proposition}[theorem]{Proposition}
\newtheorem{lemma}[theorem]{Lemma}
\newtheorem{corollary}[theorem]{Corollary}
\theoremstyle{definition}
\newtheorem{definition}[theorem]{Definition}
\theoremstyle{remark}
\newcommand{\cp}{X}
\newcommand{\ccp}{\textnormal{SP3}}
\newcommand{\csp}[2]{$\textnormal{SP3}_{#1,#2}$}
	\pgfplotsset{compat=newest}
	\pgfplotsset{every axis/.style={
			error bars/y dir=both,
			error bars/y explicit,
			error bars/error bar style={line width=0.5pt},
			error bars/error mark options={rotate=90, mark size=0.4ex, line width=0.5pt},
		}	
	}
\definecolor{primary}{rgb}{0,0,0}
\colorlet{secondary}{blue!80!black}
\colorlet{tertiary}{green!80!black}
\colorlet{deep-red}{red!80!black}
\newenvironment{delayedproof}[1]
    {\begin{proof}[\phantomsection\label{proof:#1}\textbf{Proof of \Cref{#1}}]}
    {\end{proof}}
\newenvironment{cpfc}[1][]
{\begin{trivlist} \item[] {\em Proof #1.}}
{$\hfill\diamond$ \end{trivlist}}
\begin{document}
	
\title{\bf Partial Optimality in Cubic Correlation Clustering}

\author{David Stein
\thanks{
             \url{david.stein1@tu-dresden.de}.
             }
\and
Silvia Di~Gregorio
\thanks{
             \url{silvia.di_gregorio@tu-dresden.de}.
             }
\and
Bjoern Andres
\thanks{
			 \url{bjoern.andres@tu-dresden.de}
			 }
             }
         
\date{TU Dresden}

\maketitle

\begin{abstract}
The higher-order correlation clustering problem is an expressive model, and recently, local search heuristics have been proposed for several applications. Certifying optimality, however, is \textsc{np}-hard and practically hampered already by the complexity of the problem statement. Here, we focus on establishing partial optimality conditions for the special case of complete graphs and cubic objective functions. In addition, we define and implement algorithms for testing these conditions and examine their effect numerically, on two datasets.
\end{abstract}


\section{Introduction}

We study an optimization problem whose feasible solutions are all partitions of a finite set $S$.
Given a cost $c_p \in \mathbb{R}$ for every (unordered) pair 
$p \in \tbinom{S}{2}$ and a cost $c_t \in \mathbb{R}$ for every (unordered) triple
$t \in \tbinom{S}{3}$, the objective is to find a partition $\Pi$ of $S$ so as to minimize the sum of the costs of those pairs and triples whose elements all belong to the same set in $\Pi$:

\begin{definition}\label{def: first def}
The instance of the \emph{cubic set partition} problem with respect to 
a finite set $S$, the set $P_S$ of all partitions of $S$, and a function $c \colon \tbinom{S}{3} \cup \tbinom{S}{2} \cup \{\emptyset\} \to \mathbb{R}$ is:
\begin{align}
\min_{\Pi \in P_S} \quad
\sum_{R \in \Pi} \sum_{t \in \tbinom{R}{3}} c_t
+ \sum_{R \in \Pi} \sum_{p \in \tbinom{R}{2}} c_p
+ c_\emptyset
\label{eq:problem}
\end{align}
\end{definition}

The cubic set partition problem is \textsc{np}-hard, as it  generalizes the \textsc{np}-hard clique partitioning problem for complete graphs \cite{goetschel-1989}, specializing to the latter in the case that $c_t = 0$ for all $t \in \tbinom{S}{3}$.
Applications of cubic set partitioning include the tasks of fitting equilateral triangles to points in a plane (\Cref{sec:experiments-geometric}), and subspace clustering as discussed in \cite{LevKarAndKeu22}.

In this article, we ask whether we can compute a partial solution to the problem efficiently, i.e.~to decide efficiently for some pairs or triples whether their elements are in the same set or distinct sets of an optimal partition.
In order to find such \emph{partial optimality}, we characterize \emph{improving maps} and state efficiently verifiable sufficient conditions of their improvingness, 
a technique introduced by \citet{shekhovtsov-2013}; see also \cite{shekhovtsov-2014,shekhovtsov-2015}. 
In order to examine the effectiveness of these partial optimality conditions numerically, we implement algorithms for testing these, and conduct experiments, cf.~\Cref{figure:experiments}.
\ifthenelse{\boolean{proofs}}{}{For conciseness, all proofs are deferred to the appendix.}

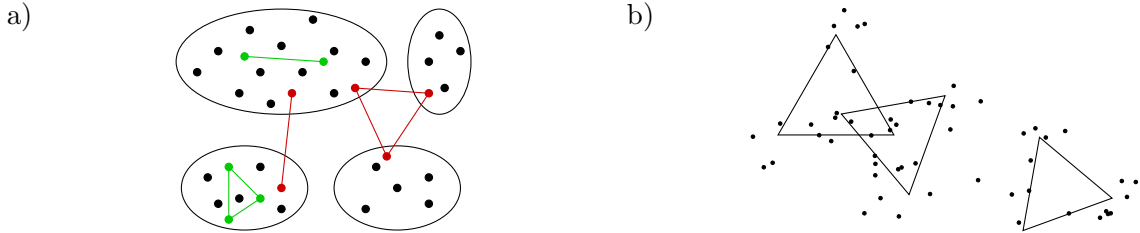
\begin{figure}[t]
	\centering
	\begin{minipage}[t]{0.5\linewidth}
		\centering
		\hspace{-\linewidth}
		\vspace{-2ex}
		a)
		\linebreak
		\begin{tikzpicture}[scale=1.4,rotate=90]
			\label{fig:illustration-partition}
			\draw (1.6, 0.3) ellipse (0.5 and 1);
			\draw (1.6, -1.2) ellipse (0.5 and 0.3);
			\draw (0.4, -0.8) ellipse (0.4 and 0.6);
			\draw (0.4, 0.65) ellipse (0.4 and 0.6);
			
			\filldraw[tertiary] (1.65, 0.65) circle (1pt);
			\filldraw[tertiary] (1.6, -0.1) circle (1pt);
			
			\draw[tertiary] (1.65, 0.65) -- (1.6, -0.1);
			
			\filldraw[tertiary] (0.3, 0.5) circle (1pt);
			\filldraw[tertiary] (0.6, 0.8) circle (1pt);
			\filldraw[tertiary] (0.1, 0.8) circle (1pt);			
			\draw[tertiary] (0.3, 0.5) -- (0.6, 0.8) -- (0.1, 0.8) -- cycle;
			%
			\filldraw[deep-red] (1.3, 0.2) circle (1pt);
			\filldraw[deep-red] (0.4, 0.3) circle (1pt);
			\draw[deep-red] (1.3, 0.2) -- (0.4, 0.3);
			
			\filldraw[deep-red] (1.35, -0.4) circle (1pt);
			\filldraw[deep-red] (0.7, -0.7) circle (1pt);
			\filldraw[deep-red] (1.3, -1.1) circle (1pt);
			\draw[deep-red] (1.35, -0.4) -- (0.7, -0.7) -- (1.3, -1.1) -- cycle;
			
			\filldraw (0.5, 1) circle (1pt);
			\filldraw (0.6, 0.5) circle (1pt);
			\filldraw (0.2, 0.3) circle (1pt);
			\filldraw (0.3, 0.7) circle (1pt);
			\filldraw (0.25, 0.9) circle (1pt);
			
			\filldraw (1.5, 1.1) circle (1pt);
			\filldraw (1.75, 0.3) circle (1pt);
			\filldraw (1.7, 0.9) circle (1pt);
			\filldraw (2.0, 0) circle (1pt);
			\filldraw (1.5, 0.1) circle (1pt);
			\filldraw (1.6, -0.5) circle (1pt);
			\filldraw (1.7, -0.2) circle (1pt);
			\filldraw (1.9, 0.6) circle (1pt);
			\filldraw (1.2, 0.4) circle (1pt);
			\filldraw (1.3, 0.7) circle (1pt);
			\filldraw (1.5, 0.5) circle (1pt);
			\filldraw (1.3, -0.2) circle (1pt);
			
			\filldraw (0.5, -1.1) circle (1pt);
			\filldraw (0.6, -0.6) circle (1pt);
			\filldraw (0.2, -0.5) circle (1pt);
			\filldraw (0.4, -0.8) circle (1pt);
			\filldraw (0.25, -1.1) circle (1pt);
			
			\filldraw (1.7, -1.4) circle (1pt);
			\filldraw (1.85, -1.2) circle (1pt);
			\filldraw (1.6, -1.1) circle (1pt);
			\filldraw (1.35, -1.25) circle (1pt);
		\end{tikzpicture}%
	\end{minipage}%
	\begin{minipage}[t]{0.5\linewidth}
		\centering
		\hspace{-\linewidth}
		\vspace{-2ex}
		b)
		\linebreak
	  \begin{tikzpicture}[xscale=1.4, yscale=-1.4,rotate=70]
	        	\label{fig:illustration-equilateral}
			\draw[black] (0.110281, 1.62544) -- (-0.596785, 0.782788) -- (0.486504, 0.591775) -- cycle;
			\draw[black] (0.302535, 0.00746821) -- (1.06858, 0.650256) -- (0.128886, 0.992276) -- cycle;
			\draw[black] (1.75981, -0.25) -- (0.980385, -0.7) -- (1.75981, -1.15) -- cycle;
			\draw plot[mark=*, mark size=0.1ex, mark options={draw=black, fill=black}] coordinates {(0.124757, 1.29645)};
			\draw plot[mark=*, mark size=0.1ex, mark options={draw=black, fill=black}] coordinates {(0.0156063, 1.56903)};
			\draw plot[mark=*, mark size=0.1ex, mark options={draw=black, fill=black}] coordinates {(0.0442783, 1.85471)};
			\draw plot[mark=*, mark size=0.1ex, mark options={draw=black, fill=black}] coordinates {(0.352589, 1.85776)};
			\draw plot[mark=*, mark size=0.1ex, mark options={draw=black, fill=black}] coordinates {(0.341288, 1.76396)};
			\draw plot[mark=*, mark size=0.1ex, mark options={draw=black, fill=black}] coordinates {(-0.788512, 0.629457)};
			\draw plot[mark=*, mark size=0.1ex, mark options={draw=black, fill=black}] coordinates {(-0.512841, 0.893121)};
			\draw plot[mark=*, mark size=0.1ex, mark options={draw=black, fill=black}] coordinates {(-0.681334, 0.771977)};
			\draw plot[mark=*, mark size=0.1ex, mark options={draw=black, fill=black}] coordinates {(-0.599445, 0.491229)};
			\draw plot[mark=*, mark size=0.1ex, mark options={draw=black, fill=black}] coordinates {(-0.214179, 0.741795)};
			\draw plot[mark=*, mark size=0.1ex, mark options={draw=black, fill=black}] coordinates {(-0.600574, 0.563377)};
			\draw plot[mark=*, mark size=0.1ex, mark options={draw=black, fill=black}] coordinates {(0.684411, 0.85045)};
			\draw plot[mark=*, mark size=0.1ex, mark options={draw=black, fill=black}] coordinates {(0.431443, 0.609925)};
			\draw plot[mark=*, mark size=0.1ex, mark options={draw=black, fill=black}] coordinates {(0.337814, 0.569987)};
			\draw plot[mark=*, mark size=0.1ex, mark options={draw=black, fill=black}] coordinates {(0.402965, 0.536616)};
			\draw plot[mark=*, mark size=0.1ex, mark options={draw=black, fill=black}] coordinates {(0.807089, 0.664689)};
			\draw plot[mark=*, mark size=0.1ex, mark options={draw=black, fill=black}] coordinates {(0.611389, 0.82912)};
			\draw plot[mark=*, mark size=0.1ex, mark options={draw=black, fill=black}] coordinates {(0.244796, 0.331031)};
			\draw plot[mark=*, mark size=0.1ex, mark options={draw=black, fill=black}] coordinates {(0.422437, -0.0280672)};
			\draw plot[mark=*, mark size=0.1ex, mark options={draw=black, fill=black}] coordinates {(0.371178, 0.0800537)};
			\draw plot[mark=*, mark size=0.1ex, mark options={draw=black, fill=black}] coordinates {(0.242193, -0.101483)};
			\draw plot[mark=*, mark size=0.1ex, mark options={draw=black, fill=black}] coordinates {(0.621105, 0.0697063)};
			\draw plot[mark=*, mark size=0.1ex, mark options={draw=black, fill=black}] coordinates {(0.468242, -0.292406)};
			\draw plot[mark=*, mark size=0.1ex, mark options={draw=black, fill=black}] coordinates {(0.315929, 0.175373)};
			\draw plot[mark=*, mark size=0.1ex, mark options={draw=black, fill=black}] coordinates {(0.810572, 0.490854)};
			\draw plot[mark=*, mark size=0.1ex, mark options={draw=black, fill=black}] coordinates {(0.801826, 0.628778)};
			\draw plot[mark=*, mark size=0.1ex, mark options={draw=black, fill=black}] coordinates {(0.786745, 0.888599)};
			\draw plot[mark=*, mark size=0.1ex, mark options={draw=black, fill=black}] coordinates {(1.09038, 1.10942)};
			\draw plot[mark=*, mark size=0.1ex, mark options={draw=black, fill=black}] coordinates {(1.13157, 0.460952)};
			\draw plot[mark=*, mark size=0.1ex, mark options={draw=black, fill=black}] coordinates {(1.23018, 0.811745)};
			\draw plot[mark=*, mark size=0.1ex, mark options={draw=black, fill=black}] coordinates {(1.00766, 0.916428)};
			\draw plot[mark=*, mark size=0.1ex, mark options={draw=black, fill=black}] coordinates {(0.228776, 0.929358)};
			\draw plot[mark=*, mark size=0.1ex, mark options={draw=black, fill=black}] coordinates {(0.245888, 1.25887)};
			\draw plot[mark=*, mark size=0.1ex, mark options={draw=black, fill=black}] coordinates {(0.105466, 1.03045)};
			\draw plot[mark=*, mark size=0.1ex, mark options={draw=black, fill=black}] coordinates {(0.142065, 1.05879)};
			\draw plot[mark=*, mark size=0.1ex, mark options={draw=black, fill=black}] coordinates {(0.33803, 1.17063)};
			\draw plot[mark=*, mark size=0.1ex, mark options={draw=black, fill=black}] coordinates {(0.41758, 0.794262)};
			\draw plot[mark=*, mark size=0.1ex, mark options={draw=black, fill=black}] coordinates {(1.75563, -0.715098)};
			\draw plot[mark=*, mark size=0.1ex, mark options={draw=black, fill=black}] coordinates {(1.42387, -0.433054)};
			\draw plot[mark=*, mark size=0.1ex, mark options={draw=black, fill=black}] coordinates {(1.67543, -0.238974)};
			\draw plot[mark=*, mark size=0.1ex, mark options={draw=black, fill=black}] coordinates {(1.16393, -0.0127456)};
			\draw plot[mark=*, mark size=0.1ex, mark options={draw=black, fill=black}] coordinates {(1.40133, -0.267431)};
			\draw plot[mark=*, mark size=0.1ex, mark options={draw=black, fill=black}] coordinates {(1.1587, -0.546172)};
			\draw plot[mark=*, mark size=0.1ex, mark options={draw=black, fill=black}] coordinates {(1.00785, -0.949736)};
			\draw plot[mark=*, mark size=0.1ex, mark options={draw=black, fill=black}] coordinates {(0.89444, -0.571423)};
			\draw plot[mark=*, mark size=0.1ex, mark options={draw=black, fill=black}] coordinates {(0.886873, -0.692756)};
			\draw plot[mark=*, mark size=0.1ex, mark options={draw=black, fill=black}] coordinates {(1.00278, -0.755562)};
			\draw plot[mark=*, mark size=0.1ex, mark options={draw=black, fill=black}] coordinates {(1.79729, -1.28684)};
			\draw plot[mark=*, mark size=0.1ex, mark options={draw=black, fill=black}] coordinates {(1.63846, -1.29231)};
			\draw plot[mark=*, mark size=0.1ex, mark options={draw=black, fill=black}] coordinates {(1.88776, -1.06207)};
			\draw plot[mark=*, mark size=0.1ex, mark options={draw=black, fill=black}] coordinates {(1.69246, -1.41393)};
			\draw plot[mark=*, mark size=0.1ex, mark options={draw=black, fill=black}] coordinates {(1.89045, -1.0887)};
			\draw plot[mark=*, mark size=0.1ex, mark options={draw=black, fill=black}] coordinates {(1.89538, -0.978857)};
			\draw plot[mark=*, mark size=0.1ex, mark options={draw=black, fill=black}] coordinates {(1.85791, -1.05489)};
		\end{tikzpicture}
	\end{minipage}
	\caption{In order to examine the effectiveness of partial optimality conditions numerically, we implement algorithms for testing these conditions and measure the fraction of fixed variables with respect to a parameter controlling the noise of the problem, for (a) synthetic instances with four clusters and noisy costs, and (b) instances for the task of finding equilateral triangles in a noisy point cloud.}
	\label{figure:experiments}
\end{figure}


\section{Related Work}\label{sec: related work}

We choose to state the cubic set partition problem 
(\Cref{def: first def})
in the form of a non-linear binary program 
(\Cref{def: pb}), 
a special case of the higher-order correlation clustering problem introduced by \citet{kim-2014}.
Combinatorial optimization problems like this involving higher-order objective functions 
have interesting application as accurate models of intrinsically non-linear tasks
\cite{AgaLimZelPerKriBel05,KapSpeReiSch16,kim-2014,LevKarAndKeu22,OchBro12,PulChiSadSut17}. In particular, higher-order correlation clustering has been used for subspace clustering in~\cite[Section~5.1]{LevKarAndKeu22} by introducing negative costs for points sufficiently close to a subspace. 

Being able to efficiently fix some variables to an optimal value and thus reducing the size of the problem can be valuable in practice.
Consequently, much effort has been devoted to studying partial optimality for non-convex problems 
\cite{AdaLasShe98,BilSut92,HamHanSim84,KapSpeReiSch13,KohEtAl08,shekhovtsov-2014,shekhovtsov-2015}.
In particular, we mention the impressive application of partial optimality conditions to Potts models for image segmentation in which more than 95\% of the variables can be fixed~\cite[Fig.~1]{shekhovtsov-2015}.
In contrast to the customary approach of considering a convex, usually linear, relaxation and establishing partial optimality conditions regarding the variables in the extended formulation, we study such conditions directly in the original variable space.
Unlike the above-mentioned articles, we concentrate on taking advantage of the specific structure of the cubic clique partitioning problem.

To this end, we build on the works of \citet{alush-2012} and \citet{Lange-2018,Lange-2019} who establish partial optimality conditions for problems equivalent to correlation clustering with a linear objective function.
Regarding their terminology, we remark that the correlation clustering problem, the clique partitioning problem, and the multicut problem are equivalent if the objective functions are linear. 
The correlation clustering problem keeps attracting considerable attention by the community also in the context of approximation algorithms \cite{Vel22}.
The cubic set partition problem we consider here generalizes the specialization to complete graphs of both the correlation clustering problem and the multicut problem.
Note that correlation clustering for arbitrary, weighted graphs does not become more specific by considering only complete graphs. Instead, any such problem with respect to an arbitrary graph can be stated as a problem with respect to a complete graph and excessive edges having cost zero. 

Here, we transfer all partial optimality conditions established by 
\citet{alush-2012} and \citet{Lange-2018,Lange-2019} 
for the correlation clustering problem and the multicut problem to the cubic set partition problem.
In addition, we establish new results. 
Unlike in \citet{Lange-2018}, the algorithm we define does not exploit the sparsity of edges with non-zero cost and, in this sense, is designed for complete graphs. Moreover, we do not contribute persistency conditions for the max cut problem.

\section{Preliminaries}\label{section:preliminaries}

In order to establish partial optimality conditions for the cubic set partition problem (\Cref{def: first def}), we state this problem in the form of the non-linear integer program introduced by \citet{kim-2014}:

\begin{proposition}\label{def: pb}
	The instance of the cubic set partition problem with respect to a finite set $S$ and a function $c \colon \tbinom S3 \cup \tbinom S2 \cup \{\emptyset\} \to \mathbb{R}$ has the form of the cubic integer program
		\begin{align}\label{eq: pb}
			\min_{x: \tbinom S2 \to \{0, 1\}}
			& \sum_{pqr \in \tbinom S3} c_{pqr}x_{pq}x_{pr}x_{qr} + \sum_{pq \in \tbinom S2}c_{pq}x_{pq} + c_\emptyset 
			\\
			\mathrm{subject~to}\ \ 
			& \forall p \in S\ \forall q \in S \setminus \{p\}\ \forall r \in S \setminus \{p,q\} \colon  \quad x_{pq} + x_{qr} - x_{pr} \leq 1
			\label{eq:def-ccp}
		\end{align}
\end{proposition}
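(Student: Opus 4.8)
The plan is to establish a value-preserving bijection between the set $P_S$ of partitions and the feasible region of~\eqref{eq:def-ccp}, so that the two minimizations range over sets in correspondence and attain equal objective values, hence the same optimum. Throughout I would read the variable $x_{pq}$ as the indicator of $p$ and $q$ lying in a common block. Accordingly, define a map $\Phi$ sending a partition $\Pi \in P_S$ to the vector $x^\Pi \colon \tbinom{S}{2} \to \{0,1\}$ with $x^\Pi_{pq} = 1$ exactly when some $R \in \Pi$ contains both $p$ and $q$.

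First I would verify that $x^\Pi$ is feasible. For any three distinct $p,q,r$, if $x^\Pi_{pq} = x^\Pi_{qr} = 1$ then $p,q,r$ share a block, so $x^\Pi_{pr} = 1$ and the inequality $x_{pq} + x_{qr} - x_{pr} \le 1$ holds with equality; otherwise at least one of $x^\Pi_{pq}, x^\Pi_{qr}$ vanishes, so the left-hand side is bounded above by $x_{pq} + x_{qr} \le 1$. Thus $\Phi$ maps into the feasible region. I would then show $\Phi$ is a bijection onto it. Injectivity is immediate, since a partition is recovered from its co-membership relation as the family of classes. For surjectivity, given a feasible $x$ I would define a relation on $S$ by $p \sim q \iff x_{pq} = 1$, together with $p \sim p$. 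This relation is reflexive by construction and symmetric because $x$ is indexed by unordered pairs; the essential point is transitivity, which follows from feasibility: whenever $x_{pq} = x_{qr} = 1$, the inequality $x_{pq} + x_{qr} - x_{pr} \le 1$ forces $x_{pr} = 1$. Hence $\sim$ is an equivalence relation, its classes form a partition $\Pi$, and $\Phi(\Pi) = x$.

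Finally I would match the two objectives term by term under $\Phi$. A triple $\{p,q,r\}$ is contained in a single block of $\Pi$ if and only if $x_{pq} x_{pr} x_{qr} = 1$, so $\sum_{R \in \Pi} \sum_{t \in \tbinom{R}{3}} c_t = \sum_{pqr \in \tbinom{S}{3}} c_{pqr}\, x_{pq} x_{pr} x_{qr}$; an identical argument gives $\sum_{R \in \Pi} \sum_{p \in \tbinom{R}{2}} c_p = \sum_{pq \in \tbinom{S}{2}} c_{pq}\, x_{pq}$, while $c_\emptyset$ is a constant shared by both formulations. Since $\Phi$ is a value-preserving bijection between the feasible sets, the objective of~\eqref{eq:problem} at $\Pi$ equals that of~\eqref{eq:def-ccp} at $\Phi(\Pi)$, and the two minima coincide.

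The only step requiring genuine care is the surjectivity direction, and specifically the transitivity argument: one must confirm that the inequalities of~\eqref{eq:def-ccp}, quantified over all ordered triples of distinct elements, rule out the configuration in which exactly two of the three pairs within a triangle are set to $1$, and therefore force the induced relation to be transitive. Once the interpretation of $x_{pq}$ as a co-membership indicator is fixed, every other step is routine bookkeeping.
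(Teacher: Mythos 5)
Your proposal is correct and follows essentially the same route as the paper: both define the co-membership indicator $x_{pq}$, use the triangle inequalities to establish the one-to-one correspondence between partitions and feasible vectors, and observe that the objectives agree under this bijection. The only difference is that the paper delegates the bijection claim to a citation of \citet{goetschel-1989}, whereas you prove it directly (feasibility, injectivity, and surjectivity via transitivity of the induced relation), which is a fine, self-contained elaboration of the same argument.
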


\ifthenelse{\boolean{proofs}}{
\begin{proof}
For each partition $\Pi$ of the set $S$ and every distinct $p, q \in S$, let $x_{pq} = 1$ if and only if $p$ and $q$ are in the same set of $\Pi$.
This establishes a one-to-one relation between the set $P_S$ of all partitions of $S$ and the feasible set $X_S$ of all $x: \tbinom S2 \to \{0, 1\}$ that satisfy the above inequalities \citep{goetschel-1989}.
Under this bijection, the objective functions of \Cref{def: first def} and \Cref{def: pb}
are equivalent.
\end{proof}
}{}

Below, we let \csp{S}{c} denote this instance of the problem, $\phi_c$ its objective function, and $X_S$ its feasible set, i.e.~the set of all $x: \tbinom S2 \to \{0, 1\}$ that satisfy the above inequalities.

\medskip

\label{section:improving-maps}
Our main technique is the construction of improving maps \citep{shekhovtsov-2013}, which is based on the following preliminary notions.
\begin{definition}
	Let $X \neq \emptyset$, $\phi\colon X \to \mathbb{R}$ and $\sigma \colon X \to X$. If for every $x \in X$, we have $\phi(\sigma(x)) \leq \phi(x)$, then
	$\sigma$ is called \emph{improving} for the problem $\min_{x\in X}\phi(x)$.
\end{definition}
\begin{proposition}
	\label{lemma:persistency-predicate}
	Let $X \neq \emptyset$, $\phi \colon X \to \mathbb{R}$ and $\sigma \colon X \to X$ an improving map. Moreover, let $Q\subseteq X$. 
	If, for every $x \in X$, $\sigma(x) \in Q$,
	then there is an optimal solution $x^*$ to $\min_{x\in X}\phi(x)$ such that $x^*\in Q$.
\end{proposition}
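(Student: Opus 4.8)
The plan is to start from an arbitrary optimal solution and push it into $Q$ by a single application of $\sigma$, exploiting the fact that an improving map can never strictly increase the objective. First I would fix an optimal solution $x' \in X$ of $\min_{x \in X} \phi(x)$; such a solution exists because in the application $X = X_S$ is finite (being a subset of the finitely many maps $\tbinom S2 \to \{0,1\}$), so the minimum is attained. More abstractly, the phrase ``optimal solution $x^*$'' in the statement already presupposes that the minimum is attained, so I may take $x'$ as given.

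Next I would define the candidate $x^* := \sigma(x')$. By the hypothesis that $\sigma(x) \in Q$ for every $x \in X$, applied to $x = x'$, we obtain $x^* \in Q$ immediately, which is half of the claim.

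It then remains to check that $x^*$ is itself optimal. On the one hand, since $\sigma$ is improving, $\phi(x^*) = \phi(\sigma(x')) \leq \phi(x')$. On the other hand, $x'$ is optimal and $x^* \in X$, so $\phi(x') \leq \phi(x^*)$. Combining the two inequalities yields $\phi(x^*) = \phi(x')$, so $x^*$ attains the minimum and is optimal. Together with $x^* \in Q$, this is precisely the assertion.

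I do not expect a genuine obstacle: the argument is essentially a one-line unfolding of the definitions of \emph{improving} and \emph{optimal}. The only point deserving care is the existence of an optimal solution, which is not spelled out among the abstract hypotheses but is guaranteed in every instance of interest because the feasible set $X_S$ is finite; absent some such attainment assumption, the conclusion would be vacuous, so I would flag this explicitly rather than let it pass silently.
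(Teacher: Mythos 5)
Your proof is correct and follows essentially the same route as the paper's: apply $\sigma$ to an optimal solution, use improvingness plus optimality to conclude $\phi(\sigma(x')) = \phi(x')$, and invoke the hypothesis $\sigma(X) \subseteq Q$. The paper's version merely phrases it contrapositively (starting from an optimal $x^* \notin Q$), and it likewise leaves the attainment of the minimum implicit, so your flagging of that point is a harmless refinement rather than a divergence.
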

\ifthenelse{\boolean{proofs}}{
\begin{proof}
	Let $x^*$ be an optimal solution to $\min_{x\in X}\phi(x)$ such that $x^*\not\in Q$. Then $\sigma(x^*)$ is also an optimal solution to $\min_{x\in X}\phi(x)$ and $\sigma(x^*)\in Q$.
\end{proof}
}{}
\begin{corollary}
	\label{lemma:persistency-variable}
	Let $S \neq \emptyset$, $X \subseteq \{0, 1\}^S$, $\phi \colon X \to \mathbb{R}$ and $\sigma \colon X \to X$ an improving map. 
	Moreover, let $s \in S$ and $\beta \in \{0, 1\}$. 
	If for every $x \in X$, $\sigma(x)_s = \beta$,
	then there is an optimal solution $x^*$ to $\min_{x\in X}\phi(x)$ such that $x^*_s = \beta$.
\end{corollary}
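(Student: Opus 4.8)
The plan is to derive this statement as an immediate specialization of \Cref{lemma:persistency-predicate}, encoding the condition ``the $s$-th coordinate equals $\beta$'' as membership in a suitable subset $Q \subseteq X$. Concretely, I would set
\[
Q = \{x \in X \mid x_s = \beta\},
\]
so that the desired conclusion, the existence of an optimal solution $x^*$ with $x^*_s = \beta$, is literally the statement that there is an optimal solution lying in $Q$.

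It then remains only to verify the hypotheses of \Cref{lemma:persistency-predicate} for this choice of $Q$. The map $\sigma$ is improving by assumption, and the premise that $\sigma(x)_s = \beta$ for every $x \in X$ says precisely that $\sigma(x) \in Q$ for every $x \in X$, which is exactly the required condition. Applying \Cref{lemma:persistency-predicate} with this $Q$ therefore produces an optimal solution $x^*$ with $x^* \in Q$, i.e.\ $x^*_s = \beta$, as claimed.

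There is essentially no obstacle here: the mathematical content resides entirely in \Cref{lemma:persistency-predicate}, and this corollary merely records the single-variable instance that will be used repeatedly to fix individual edge variables $x_{pq}$ to $0$ or $1$. The only point meriting a moment's care is that \Cref{lemma:persistency-predicate} requires $X \neq \emptyset$; in the intended applications $X$ is the feasible set $X_S$, which is nonempty whenever $S \neq \emptyset$ since it contains, for example, the indicator vector of the one-block partition, so this premise is automatically met.
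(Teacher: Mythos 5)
Your proof is correct and is exactly the argument the paper intends: the corollary is stated without proof precisely because it is the specialization of \Cref{lemma:persistency-predicate} to $Q = \{x \in X \mid x_s = \beta\}$, which is the choice you make, with the hypothesis $\sigma(x)_s = \beta$ for all $x$ translating verbatim into $\sigma(x) \in Q$. Your closing remark about $X \neq \emptyset$ is a sensible observation, since the corollary implicitly inherits this requirement from \Cref{lemma:persistency-predicate} and it holds in all of the paper's applications.
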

Our construction starts from the elementary maps of \citet{Lange-2019}, i.e.~the map $\sigma_{\delta(R)}$ that cuts a set $R \subseteq S$ from its complement, and the map $\sigma_R$ that joins all sets intersecting with a set $R \subseteq S$:
\begin{definition}
	For any finite, non-empty set $S$ and $R \subseteq S$,
	the \emph{elementary cut map} $\sigma_{\delta(R)}\colon \cp_S \to \cp_S$ is such that for all $x\in \cp_S $ and all $pq\in \tbinom S2$:
	\begin{equation}
		\sigma_{\delta(R)}(x)_{pq} := \begin{cases}
			0 & \textnormal{if $| \{p, q\} \cap R| = 1$} \\
			x_{pq} & \textnormal{otherwise}
		\end{cases}
		\enspace .
	\end{equation}
\end{definition}

\begin{definition}
	For any finite, non-empty set $S$ and $R \subseteq S$,
	the \emph{elementary join map} $\sigma_R \colon \cp_S \to \cp_S$ is such that for all $x \in \cp_S$ and all $pq\in \tbinom S2$:
	\begin{equation}
		\sigma_R(x)_{pq} := \begin{cases}
			1 & \textnormal{if $pq\in \tbinom R2$} \\
			1 & \textnormal{if $\forall p'\in \{p, q\}\setminus R \; \exists q'\in R \colon x_{p'q'} = 1$} \\
			x_{pq} & \textnormal{otherwise}
		\end{cases}
		\enspace .
	\end{equation}
\end{definition}

\section{Partial Optimality Conditions}
\label{section:partial-optimality-criteria}

In this section, we establish partial optimality conditions for the cubic set partition problem by constructing improving maps, starting from the elementary maps $\sigma_{\delta(R)}$ and $\sigma_R$ defined in \Cref{section:preliminaries}.

For simplicity, we introduce some notation:
For any $r\in \mathbb{R}$, let $r^\pm := \max\{0, \pm r\}$.
For any function $f\colon X \to Y$ and any $X'\subseteq X$, let $f\vert_{X'}\colon X' \to Y$ denote the restriction of $f$ to $X'$.
From here onwards, $S$ will always denote a finite set.
For any non-empty set $S$ and any $R, R', R'' \subseteq S$, let
\begin{align}
	\delta(R, R') & := \left\{pq \in \tbinom S2 \;\middle|\; p \in R \land q \in R' \right\} 
	\\	
	\delta(R) & := \delta(R, S \setminus R) \\
	T_{RR'R''} & := \left\{pqr\in \tbinom S3 \;\middle|\; p\in R\land q \in R'\land r\in R'' \right\}
	\enspace .
\end{align}
For $\mathcal{I}_S := \tbinom{S}{3} \cup \tbinom{S}{2} \cup \{\emptyset\}$ and any $c \colon \mathcal{I}_S \to \mathbb{R}$, let
\begin{align}
	P^\pm & := \left\{ pq\in \tbinom S2 \;\middle|\; c_{pq} \gtrless 0 \right\} \\
	T^\pm & := \left\{ pqr\in \tbinom S3 \;\middle|\; c_{pqr} \gtrless 0 \right\} \\
	T_{P'} & := \left\{ pqr\in \tbinom S3 \;\middle|\; P' \cap \tbinom{pqr}{2} \neq \emptyset \right\} \quad \forall P' \subseteq \tbinom S2
	\enspace .	
\end{align}

\subsection{Cut Conditions}
\label{section:partial-optimality-criteria-cuts}
Here, we establish partial optimality conditions that imply the existence of an optimal solution $x^*$ to $\ccp_{S, c}$ such that $x^*_{ij} = 0$ for some $ij \in \tbinom S2$ or $x^*\in \{x\in \cp_S \mid x_{ij}x_{ik}x_{jk} = 0\}$ for some $ijk \in \tbinom S3$.

The following Proposition~\ref{lemma:persistency-subset-separation} generalizes to cubic objective functions the specialization for complete graphs of Theorem~1 of \citet{alush-2012}. 
Intuitively, it says that if there exists a subset $R$ for which joining any pair or triple that has some items in $R$ and some outside of $R$ leads to a penalty, then we can safely cut the whole set $R$ from the rest. 
\begin{proposition}
	\label{lemma:persistency-subset-separation}	
	Let $S \neq \emptyset$, and let $c \in \mathbb{R}^{\mathcal{I}_S}$. 
	If there exists $R \subseteq S$ such that
	\begin{align}
		\label{eq:edge-cut-condition-1}
		c_{pq} &\geq 0 \quad \forall pq\in \delta(R) \\
		\label{eq:edge-cut-condition-2}
		c_{pqr} &\geq 0 \quad \forall pqr\in T_{\delta(R)}
	\end{align}
	then there is an optimal solution $x^*$ to \csp{S}{c} such that $x^*_{ij} = 0$ for all $ij \in \delta(R)$.
\end{proposition}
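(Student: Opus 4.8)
The plan is to show that the elementary cut map $\sigma_{\delta(R)}$ is improving for \csp{S}{c} under the stated sign conditions, and then to invoke \Cref{lemma:persistency-predicate} with the predicate set $Q := \{x \in \cp_S \mid x_{ij} = 0 \text{ for all } ij \in \delta(R)\}$. By the very definition of $\sigma_{\delta(R)}$, every crossing edge is set to $0$, so $\sigma_{\delta(R)}(x) \in Q$ for all $x \in \cp_S$; hence, once improvingness is established, \Cref{lemma:persistency-predicate} immediately yields an optimal solution $x^* \in Q$, which is exactly the assertion that $x^*_{ij} = 0$ for all $ij \in \delta(R)$.

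First I would verify that $\sigma_{\delta(R)}$ indeed maps $\cp_S$ into $\cp_S$, i.e.\ that it preserves feasibility. Writing $y := \sigma_{\delta(R)}(x)$, it suffices to check the transitivity constraints \eqref{eq:def-ccp}, equivalently that $y_{pq} = y_{qr} = 1$ implies $y_{pr} = 1$. Here $y_{pq} = 1$ forces $x_{pq} = 1$ with $p, q$ on the same side of the cut, and likewise $y_{qr} = 1$ forces $q, r$ on the same side; thus $p, q, r$ all lie on one side, so $pr \notin \delta(R)$, and transitivity of $x$ gives $y_{pr} = x_{pr} = 1$.

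The main step is to show $\phi_c(\sigma_{\delta(R)}(x)) \leq \phi_c(x)$ for every $x \in \cp_S$, which I would establish by comparing the two objectives term by term, using the pointwise inequality $y_{pq} \leq x_{pq}$ (each edge is either unchanged or lowered to $0$). For the linear part, only edges $pq \in \delta(R)$ change, each contributing $c_{pq}(y_{pq} - x_{pq}) = -c_{pq} x_{pq} \leq 0$ by \eqref{eq:edge-cut-condition-1}. For the cubic part, the product $x_{pq} x_{pr} x_{qr}$ is altered precisely when at least one incident edge is cut, i.e.\ exactly for the triples in $T_{\delta(R)}$; for these the product can only decrease (a product of nonnegative factors, at least one of which drops to $0$), while $c_{pqr} \geq 0$ by \eqref{eq:edge-cut-condition-2}, so each such term contributes nonpositively. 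Triples outside $T_{\delta(R)}$ retain all three of their edges and are unaffected, so the sign of their coefficients is irrelevant. Summing these contributions gives $\phi_c(y) - \phi_c(x) \leq 0$.

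The only delicate point is the bookkeeping in the cubic part: one must confirm that the set of triples whose product is altered by $\sigma_{\delta(R)}$ is contained in $T_{\delta(R)}$ — which holds because a product changes only if some incident edge is cut, and a cut edge of $pqr$ lies in $\delta(R) \cap \tbinom{pqr}{2}$ — so that hypothesis \eqref{eq:edge-cut-condition-2} applies exactly where it is needed. Everything else is a routine case distinction.
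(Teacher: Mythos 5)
Your proposal is correct and follows essentially the same route as the paper: apply the elementary cut map $\sigma_{\delta(R)}$ (the paper wraps it in a case distinction, but since $\sigma_{\delta(R)}$ acts as the identity on any $x$ already satisfying the conclusion, the two maps coincide), show it is improving because \eqref{eq:edge-cut-condition-1}--\eqref{eq:edge-cut-condition-2} make every altered linear term and every altered cubic term (exactly those indexed by $\delta(R)$ and $T_{\delta(R)}$) nonpositive, and conclude via the persistency principle. Your two refinements—citing \Cref{lemma:persistency-predicate} with $Q = \{x \in \cp_S \mid x_{ij} = 0 \ \forall ij \in \delta(R)\}$, which is the cleaner reference for the simultaneous claim (the paper cites \Cref{lemma:persistency-variable}, stated for a single coordinate), and explicitly verifying that $\sigma_{\delta(R)}$ preserves feasibility—are sound but do not change the substance of the argument.
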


\ifthenelse{\boolean{proofs}}{
\begin{proof}
We define $\sigma \colon \cp_S \to \cp_S$ such that for all $x \in \cp_S$ we have 
\begin{equation}
\sigma(x) := \begin{cases}
x & \text{ if } x_{ij} = 0 \:\:	 \forall ij \in \delta(R) \\
\sigma_{\delta(R)}(x) & \textnormal{otherwise}
\end{cases}
\enspace .
\end{equation}
For any $x\in \cp_S$, let $x' = \sigma(x)$. 
Firstly, the map $\sigma$ is such that $x'_{ij} = 0$ 
for all $ij \in \delta(R)$. 
Secondly, for any $x\in \cp_S$ such that there exists $ij\in \delta(R)$ such that $x_{ij} = 1$, we have 
\begin{align}
\phi_c(x') - \phi_c(x) & = - \sum_{pqr \in T_{\delta(R)}} c_{pqr}x_{pq}x_{pr}x_{qr}- \sum_{pq\in \delta(R)} c_{pq}x_{pq} \\
& \leq - \sum_{pqr\in T_{\delta(R)}\cap T^-} c_{pqr} - \sum_{pq\in \delta(R)\cap P^-}c_{pq} \\
& = 0
\enspace .
\end{align}
The last equality is due to the fact that those sums vanish by Assumptions~\eqref{eq:edge-cut-condition-1} and \eqref{eq:edge-cut-condition-2}.
Applying Corollary~\ref{lemma:persistency-variable} concludes the proof.
\end{proof}
}{
In the proof, we show that the map that keeps vectors $x\in \cp_S$ with $x^*_{ij} = 0$ for all $ij \in \delta(R)$ as they are and applies $\sigma_{\delta(R)}$ to the others is improving if \eqref{eq:edge-cut-condition-1} and \eqref{eq:edge-cut-condition-2} are satisfied. 
}

This condition can be exploited: When satisfied for a set $R$, \csp{S}{c} decomposes into two independent subproblems. Firstly,
\begin{equation}
	\min_{x\in \cp_S} \phi_c(x) 
	= 
	\min_{x\in \cp_{R}} \phi_{c\vert_{\mathcal{I}_R}}(x) 
	+ 
	\min_{x\in \cp_{S\setminus R}} \phi_{c\vert_{\mathcal{I}_{S\setminus R}}}(x)
	\enspace .
\end{equation}
Secondly, given solutions 
\begin{align}
x' & \in \argmin_{x\in \cp_{R}}\phi_{c\vert_{\mathcal{I}_R}}(x) \\
x'' & \in \argmin_{x\in \cp_{S\setminus R}}\phi_{c\vert_{\mathcal{I}_{S \setminus R}}}(x)
\enspace ,
\end{align}
an optimal solution to the problem $\min_{x\in \cp_S} \phi_c(x)$ is given by the $x\in \cp_S$ such that 
\begin{align}
x_{pq} = \begin{cases}
	x'_{pq} & \text{if\ } pq\in \tbinom R2 \\
	x''_{pq} & \text{if\ } pq\in \tbinom{S\setminus R}{2} \\
	0 & \text{if\ } pq\in \delta(R)
\end{cases}
\enspace .
\end{align}

The following \Cref{proposition:edge-cut-persistency}, together with \Cref{lemma:edge-join-persistency} further below, generalize to cubic objective functions the specialization for complete graphs of Theorem~1 of \citet{Lange-2019}. 
The idea behind this statement is the following: if there exists a pair $ij$ and a subset $R$ that cuts $ij$ such that the penalty that we would have to pay if we were to join $i$ and $j$ is so large that it is at least the best possible reward achieved by joining $R$ and its complement, then it is best to keep $i$ and $j$ separated.  
%
%
\begin{proposition}
	\label{proposition:edge-cut-persistency}
	Let $S \neq \emptyset$, 
	and let $c\in \mathbb{R}^{\mathcal{I}_S}$. Moreover, let $ij\in \tbinom S2$. 
	If there exists $R \subseteq S$ such that $ij\in \delta(R)$ and 
	\begin{equation}
		\label{eq:assumption-edge-cut-inequality}
		c_{ij}^+ \geq \sum_{pqr\in T_{\delta(R)}} c_{pqr}^- + \sum_{pq\in \delta(R)} c_{pq}^-
		\enspace ,
	\end{equation}
	then there is an optimal solution $x^*$ to \csp{S}{c} such that $x^*_{ij} = 0$.
\end{proposition}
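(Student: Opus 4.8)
The plan is to mirror the strategy of \Cref{lemma:persistency-subset-separation}: construct an improving map $\sigma\colon \cp_S \to \cp_S$ whose image always satisfies $x_{ij}=0$, and then invoke \Cref{lemma:persistency-variable}. The natural candidate leaves a feasible point untouched when it already separates $i$ and $j$, and otherwise applies the elementary cut map $\sigma_{\delta(R)}$ to sever $R$ from its complement:
\[
\sigma(x) := \begin{cases} x & \text{if } x_{ij} = 0 \\ \sigma_{\delta(R)}(x) & \text{otherwise} \end{cases}.
\]
Since $ij \in \delta(R)$ we have $\lvert\{i,j\}\cap R\rvert = 1$, so $\sigma_{\delta(R)}(x)_{ij}=0$ by definition; hence $\sigma(x)_{ij}=0$ for every $x\in\cp_S$, and $\sigma$ is a well-defined self-map of $\cp_S$ because $\sigma_{\delta(R)}$ is.

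It remains to verify that $\sigma$ is improving, for which only the case $x_{ij}=1$ requires work. Writing $x' = \sigma_{\delta(R)}(x)$ and observing that $x'$ agrees with $x$ except that it zeroes out exactly the pairs in $\delta(R)$, the only objective terms that change are the pairs lying in $\delta(R)$ and the triples in $T_{\delta(R)}$, each of the latter losing at least one of its three edges. This yields
\[
\phi_c(x') - \phi_c(x) = - \sum_{pqr\in T_{\delta(R)}} c_{pqr} x_{pq}x_{pr}x_{qr} - \sum_{pq\in \delta(R)} c_{pq}x_{pq},
\]
exactly as in the proof of \Cref{lemma:persistency-subset-separation}.

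The key step, and the place where the hypothesis $x_{ij}=1$ enters (in contrast to \Cref{lemma:persistency-subset-separation}), is to isolate the contribution of the pair $ij$ in the second sum. Since $x_{pq}\in\{0,1\}$, each term obeys $-c_{pq}x_{pq}\le c_{pq}^-$ and $-c_{pqr}x_{pq}x_{pr}x_{qr}\le c_{pqr}^-$; bounding every term this way except $ij$, and using $x_{ij}=1$ to write $-c_{ij}x_{ij}=-c_{ij}=c_{ij}^- - c_{ij}^+$ so that the leftover $c_{ij}^-$ folds back into $\sum_{pq\in\delta(R)}c_{pq}^-$, I obtain
\[
\phi_c(x')-\phi_c(x) \le \sum_{pqr\in T_{\delta(R)}} c_{pqr}^- + \sum_{pq\in\delta(R)} c_{pq}^- - c_{ij}^+ \le 0,
\]
where the final inequality is precisely assumption \eqref{eq:assumption-edge-cut-inequality}. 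Thus $\sigma$ is improving with image contained in $\{x\in\cp_S \mid x_{ij}=0\}$, and \Cref{lemma:persistency-variable} produces the claimed optimal $x^*$.

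The computation is essentially routine once the map is fixed; the only subtlety is the bookkeeping of which pairs and triples change under the cut, together with the separate treatment of the $ij$ term, whose positive part $c_{ij}^+$ must survive to cancel against the best-case reward of the cut. I anticipate no genuine obstacle beyond checking that isolating $c_{ij}$ and reabsorbing $c_{ij}^-$ reproduces the right-hand side of the hypothesis exactly as stated.
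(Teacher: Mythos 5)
Your proposal is correct and follows essentially the same route as the paper: the identical case-defined map $\sigma$ built from $\sigma_{\delta(R)}$, the same expression for $\phi_c(\sigma(x))-\phi_c(x)$, and the same bookkeeping step of using $x_{ij}=1$ to convert $-c_{ij}$ into $c_{ij}^{-}-c_{ij}^{+}$ and reabsorbing $c_{ij}^{-}$ into the sum, followed by \Cref{lemma:persistency-variable}. No gaps.
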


\ifthenelse{\boolean{proofs}}{
\begin{proof}
Let  
$\sigma \colon \cp_S \to \cp_S$ be constructed as 
\begin{equation}
\sigma(x) := \begin{cases}
x & \textnormal{if $x_{ij} = 0$} \\
\sigma_{\delta(R)}(x) & \textnormal{otherwise}
\end{cases}
\enspace .
\end{equation}
For any $x\in \cp_S$, let $x' = \sigma(x)$. 
First of all, the map $\sigma$ is such that $x'_{ij} = 0$ for all $x\in \cp_S$. 
Next, for any $x\in \cp_S$ such that $x_{ij} = 1$, we have 
\begin{align}
\phi_c(x') - \phi_c(x) 
& = -c_{ij} -\sum_{pqr\in T_{\delta(R)}} c_{pqr}x_{pq} x_{pr}x_{qr} - \sum_{\substack{pq\in \delta(R) \\ pq \neq ij}}c_{pq}x_{pq} \\
& \leq -c_{ij} + \sum_{pqr\in T_{\delta(R)}}c_{pqr}^- + \sum_{\substack{pq \in \delta(R)\\ pq \neq ij}}c_{pq}^- \\
& = -c_{ij}^+ + \sum_{pqr\in T_{\delta(R)}}c_{pqr}^- + \sum_{pq \in \delta(R)}c_{pq}^- \\
& \leq 0
\enspace .
\end{align}
The last inequality follows from Assumption~\eqref{eq:assumption-edge-cut-inequality}.
We conclude the proof by applying Corollary~\ref{lemma:persistency-variable}. 
\end{proof}
}{
In the proof, we apply $\sigma_{\delta(R)}$ to any $x\in \cp_S$ with $x_{ij} = 1$, on the one hand, and apply the identity to the feasible vectors with $x_{ij} = 0$, on the other hand. We prove that this is an improving map under the assumption. 
}

The following Proposition~\ref{lemma:persistency-triplet-cut} establishes a partial optimality result that implies the existence of an optimal solution $x^*$ such that $x^*_{ij}x^*_{ik}x^*_{jk} = 0$ for some $ijk\in \tbinom S3$. 
The intuition is similar as for \Cref{proposition:edge-cut-persistency}, except here it involves a triple instead of a pair. 
Note, however, that one cannot conclude which of the variables $x^*_{ij}$, $x^*_{ik}$ or $x^*_{jk}$ equals zero.
%

%
\begin{proposition}
	\label{lemma:persistency-triplet-cut}
	Let $S \neq \emptyset$, 
	and let $c\in \mathbb{R}^{\mathcal{I}_S}$. 
	Moreover, let $ijk\in \tbinom S3$ and $R \subseteq S$ such that $ij, ik \in \delta(R)$.
	If
	\begin{align}
		c_{ijk}^+ + c_{ij}^+ + c_{ik}^+
		\label{eq:triplet-cut-condition}
		\geq &\sum_{pqr\in T_{\delta(R)}} c_{pqr}^-+ \sum_{pq\in \delta(R)} c_{pq}^-
		\enspace ,
	\end{align}
	there is an optimal solution $x^*$ to \csp{S}{c} such that $x^*_{ij}x^*_{ik}x^*_{jk} = 0$.
\end{proposition}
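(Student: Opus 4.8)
The plan is to follow the same improving-map strategy used in the proofs of \Cref{proposition:edge-cut-persistency} and \Cref{lemma:persistency-subset-separation}. The key structural observation is that the product $x_{ij}x_{ik}x_{jk}$ equals $1$ precisely when the triangle $ijk$ is entirely contained in one block of the partition; since $ij,ik \in \delta(R)$, this forces at least one of $i,j,k$ to lie in $R$ and at least one outside $R$ whenever all three are joined. So I would define the piecewise map
\begin{equation}
	\sigma(x) := \begin{cases}
		x & \textnormal{if $x_{ij}x_{ik}x_{jk} = 0$} \\
		\sigma_{\delta(R)}(x) & \textnormal{otherwise}
	\end{cases}
	\enspace ,
\end{equation}
and set $x' := \sigma(x)$. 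By construction $x'_{ij}x'_{ik}x'_{jk} = 0$ for every $x \in \cp_S$: in the first branch this is immediate, and in the second branch $\sigma_{\delta(R)}$ sets $x'_{ij} = x'_{ik} = 0$ since $ij,ik \in \delta(R)$.

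First I would handle the nontrivial case, namely $x \in \cp_S$ with $x_{ij}x_{ik}x_{jk} = 1$, which in particular means $x_{ij} = x_{ik} = x_{jk} = 1$. Then $\sigma$ applies $\sigma_{\delta(R)}$, which zeroes out exactly the variables indexed by $\delta(R)$ while leaving all others unchanged. Computing $\phi_c(x') - \phi_c(x)$, the only surviving contributions are negative: the removed pair terms $-\sum_{pq \in \delta(R)} c_{pq} x_{pq}$ and the removed triple terms $-\sum_{pqr \in T_{\delta(R)}} c_{pqr} x_{pq}x_{pr}x_{qr}$, where I should single out the three terms $c_{ijk}, c_{ij}, c_{ik}$ that are guaranteed to vanish (their variables equal $1$ in $x$ but $0$ in $x'$). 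Bounding each remaining sum above by its negative part $c^-$, I obtain
\begin{equation}
	\phi_c(x') - \phi_c(x) \leq -c_{ijk} - c_{ij} - c_{ik} + \sum_{pqr\in T_{\delta(R)}} c_{pqr}^- + \sum_{pq\in \delta(R)} c_{pq}^-
	\enspace ,
\end{equation}
and then replacing $-c_{ijk} - c_{ij} - c_{ik}$ by $-c_{ijk}^+ - c_{ij}^+ - c_{ik}^+$ (which only increases the right-hand side) reduces the claim to exactly the hypothesized inequality~\eqref{eq:triplet-cut-condition}, giving $\phi_c(x') - \phi_c(x) \leq 0$. The trivial case $x_{ij}x_{ik}x_{jk} = 0$ gives $x' = x$, so improvingness holds there vacuously. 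I would then invoke \Cref{lemma:persistency-variable} (applied with the binary quantity $x_{ij}x_{ik}x_{jk}$, or more carefully \Cref{lemma:persistency-predicate} with $Q := \{x \in \cp_S \mid x_{ij}x_{ik}x_{jk} = 0\}$) to extract an optimal $x^*$ lying in $Q$.

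The main subtlety, and the step I would check most carefully, is the bookkeeping of which terms actually appear with a forced positive coefficient. The inequality allows replacing $c_{ijk}, c_{ij}, c_{ik}$ by their positive parts $c_{ijk}^+, c_{ij}^+, c_{ik}^+$ only because these three terms are definitely switched off (they appear in $x$ but not $x'$); I must confirm that $ijk \in T_{\delta(R)}$ — which holds since $ij \in \delta(R)$ means one endpoint is in $R$ and one outside, so the triple $ijk$ indeed meets $\delta(R)$ — and that $c_{jk}$ is \emph{not} among the subtracted terms, since $jk$ need not belong to $\delta(R)$ and its contribution stays in $x'$. This asymmetry (two of the three pair variables are cut, the third may survive) is the reason only $c_{ij}^+$ and $c_{ik}^+$, but not $c_{jk}^+$, appear on the left-hand side of~\eqref{eq:triplet-cut-condition}, and getting this accounting right is the crux of the argument.
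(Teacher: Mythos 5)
Your map, the case split, and the final appeal to \Cref{lemma:persistency-predicate} with $Q = \{x \in \cp_S \mid x_{ij}x_{ik}x_{jk} = 0\}$ coincide exactly with the paper's proof, and your structural remarks (that $ijk \in T_{\delta(R)}$, and that $jk$ is not cut by $R$) are correct. The gap is in the inequality chain. With the paper's convention $r^{\pm} = \max\{0, \pm r\}$ one has $-c = c^- - c^+ \geq -c^+$, so replacing $-c_{ijk} - c_{ij} - c_{ik}$ by $-c_{ijk}^+ - c_{ij}^+ - c_{ik}^+$ \emph{decreases} the right-hand side rather than increasing it; your parenthetical justification is backwards. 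Equivalently, you spend the quantities $c_{ijk}^-, c_{ij}^-, c_{ik}^-$ twice: once inside the full sums $\sum_{pqr \in T_{\delta(R)}} c_{pqr}^- + \sum_{pq \in \delta(R)} c_{pq}^-$ of your intermediate bound, and once again in the (false) passage from $-c$ to $-c^+$. That this is not cosmetic is shown by $c_{ijk} = -2$, $c_{ij} = c_{ik} = 1$ and all other costs zero: hypothesis \eqref{eq:triplet-cut-condition} holds with equality ($2 \geq 2$), the true decrease is $\phi_c(x') - \phi_c(x) = -c_{ijk} - c_{ij} - c_{ik} = 0$, but your intermediate bound evaluates to $+2$ and therefore cannot certify that the map is improving.

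The repair is precisely the paper's bookkeeping: keep the remaining sums restricted to the terms you did not single out, i.e.
\begin{align}
\phi_c(x') - \phi_c(x) \leq -c_{ijk} - c_{ij} - c_{ik} + \sum_{\substack{pqr\in T_{\delta(R)}\\ pqr \neq ijk}} c_{pqr}^- + \sum_{\substack{pq\in \delta(R)\\ pq\not\in \{ij, ik\}}} c_{pq}^-
\enspace ,
\end{align}
and only then apply the identity $-c = -c^+ + c^-$ to each of the three singled-out costs. Their negative parts are exactly what is needed to pad the restricted sums back up to the full sums, so the bound converts by an \emph{equality} into $-c_{ijk}^+ - c_{ij}^+ - c_{ik}^+ + \sum_{pqr\in T_{\delta(R)}} c_{pqr}^- + \sum_{pq\in \delta(R)} c_{pq}^-$, which is $\leq 0$ by \eqref{eq:triplet-cut-condition}. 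In short: the negative parts of $c_{ijk}, c_{ij}, c_{ik}$ may be used either to enlarge the sums or to pass from $-c$ to $-c^+$, but not both.
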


\ifthenelse{\boolean{proofs}}{
\begin{proof}
We define $\sigma \colon \cp_S\to \cp_S$ as 
\begin{equation}
\sigma(x) := \begin{cases}
x & \textnormal{if $x_{ij}x_{ik}x_{jk} = 0$}\\
\sigma_{\delta(R)}(x) & \textnormal{otherwise}
\end{cases}.
\end{equation}
For any $x\in \cp_S$, we denote $\sigma(x)$ by $x'$. 
Firstly, we observe that 
$x'_{ij}x'_{ik}x'_{jk} = 0$ for all $x\in \cp_S$. 
Secondly, for any $x\in \cp_S$ such that $x_{ij}x_{ik}x_{jk} = 1$, we have
\begin{align}
\phi_c(x') - \phi_c(x) & = - \sum_{pqr \in T_{\delta(R)}} c_{pqr}x_{pq}x_{pr}x_{qr} - \sum_{pq\in \delta(R)} c_{pq}x_{pq} \\
& \leq -c_{ijk} - c_{ij} - c_{ik} + \sum_{\substack{pqr\in T_{\delta(R)}\\ pqr \neq ijk}} c_{pqr}^- + \sum_{\substack{pq\in \delta(R)\\ pq\not\in \{ij, ik\}}} c_{pq}^-\\
& = -c_{ijk}^+ - c_{ij}^+ - c_{ik}^+ + \sum_{pqr\in T_{\delta(R)}} c_{pqr}^- + \sum_{pq\in \delta(R)} c_{pq}^- \\
& \leq 0
\enspace .
\end{align}
The last inequality holds because of 
Assumption~\eqref{eq:triplet-cut-condition}.
Applying Proposition~\ref{lemma:persistency-predicate} with $Q = \{x\in \cp_S \mid x_{ij}x_{ik}x_{jk} = 0\}$ concludes the proof.
\end{proof}
}{
The map used in the proof is similar to the previous ones:
If we start from a feasible vector such that $x_{ij} x_{ik}x_{jk} = 0$ already, we do not change it.
Otherwise, we apply $\sigma_{\delta(R)}$.
}
We remark that \Cref{lemma:persistency-triplet-cut}, together with its counterpart, \Cref{proposition:triplet-join}, is a novel result that does not extend prior work.

\subsection{Join Conditions}
\label{section:partial-optimality-criteria-joins}

Next, we establish partial optimality conditions that imply the existence of an optimal solution $x^*$ to \csp{S}{c} such that $x^*_{ij} = 1$ for some $ij\in \tbinom S2$. 
This property can be used to simplify a given instance by joining the elements $i$ and~$j$.

As mentioned earlier, the following Proposition~\ref{lemma:edge-join-persistency} transfers a result of \citet{Lange-2019} to the cubic set partition problem. 
Condition \eqref{eq:edge-join-inequality} is rather restrictive.
The idea behind it is the following: if there exists a subset of items $R$ that cuts $i$ and $j$, and the total potential reward for joining $i$ and $j$ (meaning not only joining the pair $ij$ but also the triples that could end up together once $i$ and $j$ are in the same cluster) is higher than the sum of rewards and penalties incurred by joining $R$ and its complement, then it is beneficial to put $i$ and $j$ together.
\begin{proposition}
	\label{lemma:edge-join-persistency}
	Let $S \neq \emptyset$, 
	and let $c \in \mathbb{R}^{\mathcal{I}_S}$. 
	Moreover, let $ij\in \tbinom S2$. 
	If there exists an $R \subseteq S$ such that $ij \in \delta(R)$ and 
	\begin{align}
		2c_{ij}^- + \sum_{pqr\in T_{\{ij\}}} c_{pqr}^- 
		\label{eq:edge-join-inequality}
		\geq  \sum_{pqr\in T_{\delta(R)}}\vert c_{pqr}\vert + \sum_{pq\in \delta(R)}\vert c_{pq}\vert
		\enspace ,
	\end{align}
	there is an optimal solution $x^*$ to \csp{S}{c} such that $x^*_{ij} = 1$.
\end{proposition}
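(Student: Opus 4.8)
The plan is to follow the same improving-map template as in the preceding cut propositions, but now aiming at the predicate $Q = \{x \in \cp_S \mid x_{ij} = 1\}$ and invoking \Cref{lemma:persistency-variable} with $\beta = 1$. I would define $\sigma$ to be the identity on $\{x \in \cp_S \mid x_{ij} = 1\}$, and on the remaining vectors (those with $x_{ij} = 0$) to be the composition $\sigma_{\{i,j\}} \circ \sigma_{\delta(R)}$: first cut $R$ from its complement, then join the set containing $i$ with the set containing $j$. The point of cutting first is that in $y := \sigma_{\delta(R)}(x)$ no edge of $\delta(R)$ is active, so the set containing $i$ is contained in $R$ while the set containing $j$ is contained in $S \setminus R$. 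Hence the join $\sigma_{\{i,j\}}$ can only switch on edges running between these two sets, all of which lie in $\delta(R)$. Consequently $\sigma(x)_{ij} = 1$ for every $x$, and every edge or triple whose value differs between $x$ and $\sigma(x)$ lies in $\delta(R)$ or $T_{\delta(R)}$, respectively.

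Given this localization, the remaining work is to bound $\phi_c(\sigma(x)) - \phi_c(x)$ for $x$ with $x_{ij} = 0$ by the negative of the left-hand side plus the right-hand side of \eqref{eq:edge-join-inequality}. For a generic edge $pq \in \delta(R)$ I would use $|\sigma(x)_{pq} - x_{pq}| \leq 1$ to bound its contribution by $|c_{pq}|$, and likewise bound a generic triple of $T_{\delta(R)}$ by $|c_{pqr}|$. The two ``reward'' terms on the left arise from treating $ij$ and the triples through it separately. For the edge $ij$, whose value goes from $0$ to $1$, the exact contribution is $c_{ij} = |c_{ij}| - 2c_{ij}^-$, which supplies the factor $2c_{ij}^-$ once $|c_{ij}|$ is absorbed into $\sum_{pq \in \delta(R)} |c_{pq}|$. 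For a triple $pqr \in T_{\{ij\}}$, the key observation is that $x_{ij} = 0$ forces $x_{pq}x_{pr}x_{qr} = 0$, so such a triple can only turn on; hence its contribution is at most $c_{pqr}^+ = |c_{pqr}| - c_{pqr}^-$, supplying the term $-\sum_{pqr \in T_{\{ij\}}} c_{pqr}^-$. Summing over $\delta(R)$ and $T_{\delta(R)}$ yields precisely $-\bigl(2c_{ij}^- + \sum_{pqr \in T_{\{ij\}}} c_{pqr}^-\bigr) + \bigl(\sum_{pqr \in T_{\delta(R)}} |c_{pqr}| + \sum_{pq \in \delta(R)} |c_{pq}|\bigr)$, which is $\leq 0$ by \eqref{eq:edge-join-inequality}.

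The main obstacle is the choice of map. Unlike the cut conditions, a single elementary map does not work directly: the join map $\sigma_R$ need not satisfy $\sigma_R(x)_{ij} = 1$ (for instance when $j$ is not connected to $R$ in $x$), while naively merging the sets of $i$ and $j$ in $x$ itself could alter edges and triples arbitrarily far from $\delta(R)$, which cannot be controlled by the right-hand side of \eqref{eq:edge-join-inequality}. Cutting along $\delta(R)$ before joining is exactly what confines the perturbation to $\delta(R)$ and $T_{\delta(R)}$, and this is why the condition is stated with absolute values $|c_{pq}|$ and $|c_{pqr}|$: the cut step may switch boundary edges off while the join step switches some of them on, so either sign of change must be accounted for. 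The only remaining care is bookkeeping, namely verifying that the composite map indeed lands in $\cp_S$ and that the two special contributions are extracted with the correct signs, which is routine.
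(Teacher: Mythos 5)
Your proposal is correct and coincides with the paper's own proof: the same map (identity on $\{x \mid x_{ij}=1\}$, and $\sigma_{ij}\circ\sigma_{\delta(R)}$ otherwise), the same localization of all changes to $\delta(R)$ and $T_{\delta(R)}$, and the same bookkeeping via $c_{ij} = \vert c_{ij}\vert - 2c_{ij}^-$ and $c_{pqr}^+ = \vert c_{pqr}\vert - c_{pqr}^-$ for the triples in $T_{\{ij\}}$, concluding with Corollary~\ref{lemma:persistency-variable}. Your justification of why the composite map only perturbs tuples meeting $\delta(R)$ is in fact spelled out more explicitly than in the paper, which merely asserts $x'_{pq}=x_{pq}$ for $pq\notin\delta(R)$.
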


\ifthenelse{\boolean{proofs}}{
\begin{proof}
Let $\sigma \colon \cp_S \to \cp_S$ such that for all $x \in \cp_S$, we have 
\begin{equation}
\sigma(x) := \begin{cases}
x & \textnormal{if $x_{ij} = 1$} \\
\left(\sigma_{ij}\circ \sigma_{\delta(R)}\right)(x) & \textnormal{otherwise}
\end{cases}
\enspace .
\end{equation}
For any $x\in \cp_S$, let $x' = \sigma(x)$. 
The map $\sigma$ is such that $x'_{ij} = 1$ for all $x\in \cp_S$. 
We show that $\sigma$ is improving. 
In particular, let $x \in \cp_S$ such that $x_{ij} = 0$. 
We observe that $x'_{pq} = x_{pq}$ for all $pq \not\in \delta(R)$. 
Therefore,
\begin{align}
\phi_c(x') - \phi(x) & = \sum_{pqr\in T_{\delta(R)}}c_{pqr}\left(x'_{pq}x'_{pr}x'_{qr} - x_{pq}x_{pr}x_{qr}\right) + \sum_{pq\in \delta(R)}c_{pq}\left(x'_{pq} - x_{pq}\right) \\
& = \sum_{pqr\in T_{\delta(R)}\setminus T_{\{ij\}}}c_{pqr}\left(x'_{pq}x'_{pr}x'_{qr} - x_{pq}x_{pr}x_{qr}\right) +\sum_{pqr\in T_{\{ij\}}}c_{pqr}x'_{pq}x'_{pr}x'_{qr}  + c_{ij} \\
& \qquad + \sum_{\substack{pq\in \delta(R)\\ pq \neq ij}}c_{pq}\left(x'_{pq} - x_{pq}\right)\\
& \leq \; \sum_{pqr\in T_{\delta(R)}\setminus T_{\{ij\}}}\vert c_{pqr}\vert + \sum_{pqr\in T_{\{ij\}}}c_{pqr}^+ + c_{ij} + \sum_{\substack{pq\in \delta(R)\\ pq \neq ij}}\vert c_{pq}\vert \\
& = \sum_{pqr\in T_{\delta(R)}}\vert c_{pqr}\vert - \sum_{pqr\in T_{\{ij\}}}c_{pqr}^- - 2c_{ij}^-  + \sum_{pq\in \delta(R)}\vert c_{pq}\vert \\
& \leq  0 
\enspace .
\end{align}
The last inequality here is due to Assumption~\eqref{eq:edge-join-inequality}.
\end{proof}
}{
In the proof, we start from a feasible vector $x \in \cp_S$.
If $x$ satisfies our thesis already, we are done. 
Otherwise, we apply $\sigma_{ij} \circ \sigma_{\delta(R)}$ and show that this map is improving granted that \eqref{eq:edge-join-inequality} is satisfied.
}

While \Cref{lemma:edge-join-persistency} only considered the issue of deciding whether a pair $ij$ of items should end up together or not, \Cref{proposition:triplet-join} deals with the same question for a triple $ijk$ of points. 
The statement is a bit more involved but the intuition is in line with that of \Cref{lemma:edge-join-persistency}. 
\begin{proposition}
	\label{proposition:triplet-join}
	Let $S \neq \emptyset$, 
	and let $c\in \mathbb{R}^{\mathcal{I}_S}$. 
	Moreover, let $ijk\in \tbinom S3$ and $R \subseteq S$ such that $ij, ik \in \delta(R)$. If 
	{
		\begin{align}
			&2c_{ijk}^- + 2c_{ij}^- + 2c_{ik}^- + c_{jk}^- - \sum_{pqr\in \tbinom S3}c_{pqr}^+ - \sum_{pq\in \tbinom S2}c_{pq}^+  +  \min_{\substack{x\in \cp_{ijk} \\ x_{ij}x_{ik}x_{jk} = 0}}\sum_{pq\in \tbinom{ijk}{2}}c_{pq} x_{pq} 
			\\
			\label{eq:assumption-triplet-join}
			\geq \quad &\sum_{pqr \in T_{\delta(R)} }c_{pqr}^- +  \sum_{pq \in \delta(R)}c_{pq}^-
			\enspace ,
		\end{align}
	}%
	there is an optimal solution $x^*$ to \csp{S}{c} such that $x^*_{ij}x^*_{ik}x^*_{jk} = 1$.
\end{proposition}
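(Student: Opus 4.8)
The plan is to produce an improving map whose image lies in $Q = \{x \in \cp_S \mid x_{ij}x_{ik}x_{jk} = 1\}$ and then invoke \Cref{lemma:persistency-predicate}. Mirroring the edge‑join construction of \Cref{lemma:edge-join-persistency}, I would define $\sigma \colon \cp_S \to \cp_S$ by $\sigma(x) = x$ whenever $x_{ij}x_{ik}x_{jk} = 1$, and $\sigma(x) = (\sigma_{ijk} \circ \sigma_{\delta(R)})(x)$ otherwise, where $\sigma_{ijk}$ is the elementary join map for $R' = \{i,j,k\}$. Both branches land in $\cp_S$, and by construction $\sigma(x)_{ij}\sigma(x)_{ik}\sigma(x)_{jk} = 1$ for every $x$, so the entire content of the proof is the inequality $\phi_c(\sigma(x)) \le \phi_c(x)$ for every $x$ with $x_{ij}x_{ik}x_{jk} = 0$.

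Next I would pin down which entries actually change. Since $ij, ik \in \delta(R)$, we may assume $i \in R$ and $j, k \in S \setminus R$ (the case $i \in S \setminus R$ is symmetric, as $\delta(R) = \delta(S \setminus R)$), whence $jk \notin \delta(R)$. Writing $y = \sigma_{\delta(R)}(x)$, the cut step sets every edge in $\delta(R)$ to $0$ and leaves all others untouched; the subsequent join step merges the three clusters $A \ni i$ (inside $R$), $B_j \ni j$, $B_k \ni k$ (inside $S \setminus R$) of $y$ into one. The edges restored between $A$ and $B_j \cup B_k$ lie in $\delta(R)$, but---and this is the crux---if $j$ and $k$ were separated in $x$ (that is, $x_{jk} = 0$) the join also creates brand‑new edges between $B_j$ and $B_k$ that lie entirely inside $S \setminus R$, hence outside $\delta(R)$. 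This spill outside $\delta(R)$ is exactly what distinguishes the triple case from \Cref{lemma:edge-join-persistency}, where only $\delta(R)$‑edges change, and it is the main obstacle: it cannot be charged to the $\delta(R)$‑terms, so it must be absorbed by the global positive sums $\sum_{pqr \in \tbinom S3} c_{pqr}^+ + \sum_{pq \in \tbinom S2} c_{pq}^+$ appearing in the hypothesis.

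With the change set understood, I would split $\phi_c(\sigma(x)) - \phi_c(x)$ into three groups. First, the intrinsic contribution of the triple $ijk$ and the three pairs $ij, ik, jk$ evaluates exactly to $c_{ijk} + \sum_{pq \in \tbinom{ijk}{2}} c_{pq}(1 - x_{pq})$; lower‑bounding $\sum_{pq \in \tbinom{ijk}{2}} c_{pq} x_{pq}$ over the feasible restriction $x\vert_{\tbinom{ijk}{2}} \in \cp_{ijk}$, which satisfies $x_{ij}x_{ik}x_{jk} = 0$, by the constrained minimum yields precisely the $\min$‑term of \eqref{eq:assumption-triplet-join}. Second, the cut step contributes $-\sum_{pqr \in T_{\delta(R)}} c_{pqr} x_{pq}x_{pr}x_{qr} - \sum_{pq \in \delta(R)} c_{pq} x_{pq}$; after excising the already‑counted indices $ijk, ij, ik$ this is at most $\sum_{T_{\delta(R)}} c_{pqr}^- + \sum_{\delta(R)} c_{pq}^- - c_{ijk}^- - c_{ij}^- - c_{ik}^-$. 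Third, the join step only raises products and edges, so its remaining part is at most $\sum_{\tbinom S3} c_{pqr}^+ + \sum_{\tbinom S2} c_{pq}^+ - c_{ijk}^+ - c_{ij}^+ - c_{ik}^+ - c_{jk}^+$.

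Finally I would add the three bounds, substitute \eqref{eq:assumption-triplet-join} to cancel both the $\min$‑term and the two global $c^+$‑sums, and collect the survivors index by index. For each pair this leaves $c_{pq} + c_{pq}^- - c_{pq}^+ = 0$ and for the triple $c_{ijk} + c_{ijk}^- - c_{ijk}^+ = 0$, so the total is $\le 0$ and $\sigma$ is improving; \Cref{lemma:persistency-predicate} with $Q$ as above then gives the claim. The delicate point throughout is the bookkeeping: the four indices $ijk, ij, ik, jk$ must be removed from both the cut‑reward sums and the join‑penalty sums and treated exactly, which is what the asymmetric coefficients $2c_{ijk}^-, 2c_{ij}^-, 2c_{ik}^-, c_{jk}^-$ in the hypothesis are calibrated to absorb.
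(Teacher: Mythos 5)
Your proposal is correct and follows essentially the same route as the paper's proof: the same map $\sigma$ (identity on $\{x \mid x_{ij}x_{ik}x_{jk}=1\}$, otherwise $\sigma_{ijk}\circ\sigma_{\delta(R)}$), the same appeal to \Cref{lemma:persistency-predicate} with $Q = \{x\in \cp_S \mid x_{ij}x_{ik}x_{jk} = 1\}$, and the same final bound in which the assumption \eqref{eq:assumption-triplet-join} cancels everything to zero. The only difference is organizational---you telescope the cost change through the intermediate vector $\sigma_{\delta(R)}(x)$ and bound the cut and join contributions separately, while the paper bounds the combined change in one pass---but the resulting index-by-index accounting (including the exact treatment of $ijk, ij, ik, jk$ and the absorption of the new intra-$(S\setminus R)$ joins by the global $c^+$ sums) is identical.
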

\ifthenelse{\boolean{proofs}}{
\begin{proof}
We construct 
$\sigma \colon \cp_S \to \cp_S$ as follows. 
\begin{equation}
\sigma(x) := \begin{cases}
x & \textnormal{if $x_{ij}x_{ik}x_{jk} = 1$} \\
\left(\sigma_{ijk}\circ \sigma_{\delta(R)}\right)(x) & \textnormal{otherwise}
\end{cases}
\end{equation}
For any $x \in \cp_S$, let $x' = \sigma(x)$. 
The map $\sigma$ is such that $x'_{ij}x'_{ik}x'_{jk} = 1$ for all $x\in \cp_S$. 
Note that, for any $x\in \cp_S$ such that $x_{ij}x_{ik}x_{jk} = 0$, we have 
$x'_{pq} \geq x_{pq}$ for all $pq \not\in \delta(R)$. 
It follows that 
\begin{align}
\phi_c(x') - \phi(x) & = \sum_{\substack{pqr\in T_{\delta(R)} \\ pqr \neq ijk}}c_{pqr}(x'_{pq}x'_{pr}x'_{qr} - x_{pq}x_{pr}x_{qr}) + c_{ijk} + \sum_{pqr\not\in T_{\delta(R)}}c_{pqr}(x'_{pq}x'_{pr}x'_{qr} - x_{pq}x_{pr}x_{qr}) \\					
& \qquad + \sum_{pq\in \{ij, ik, jk\}}c_{pq}(1-x_{pq}) + \sum_{\substack{pq\in \delta(R)\\ pq \not\in \{ij, ik\}}}c_{pq}(x'_{pq} - x_{pq}) + \sum_{pq\not\in \delta(R)\cup \{jk\}}c_{pq}(x'_{pq} - x_{pq})\\	
& \leq \; c_{ijk} +  \max_{\substack{x\in \cp{ijk} \\ x_{ij}x_{ik}x_{jk} = 0}}\sum_{pq\in \tbinom{ijk}{2}}c_{pq} (1 - x_{pq}) + \sum_{\substack{pqr\in T_{\delta(R)} \\ pqr \neq ijk}}\vert c_{pqr}\vert  + \sum_{\substack{pqr\in T^+ \\ pqr \notin T_{\delta(R)}}} c_{pqr} \\
& \qquad + \sum_{\substack{pq\in \delta(R) \\ pq \notin \{ij, ik\}}} \vert c_{pq} \vert + \sum_{\substack{pq \in P^+ \\ pq \notin \left(\delta(R)\cup \{jk\}\right)}}c_{pq} \\	
& = \; c_{ijk} +  \max_{\substack{x\in \cp_{ijk} \\ x_{ij}x_{ik}x_{jk} = 0}}\sum_{pq\in \tbinom{ijk}{2}}c_{pq} (1 - x_{pq}) + \sum_{\substack{pqr\in T^+ \cup T_{\delta(R)} \\ pqr \neq ijk}}\vert c_{pqr}\vert + \sum_{\substack{pq\in P^+ \cup \delta(R)\\ pq \not\in \{ij, ik, jk\}}}\vert c_{pq} \vert \\
& = \; -2c_{ijk}^- -  2c_{ij}^- - 2c_{ik}^- - c_{jk}^- - \min_{\substack{x\in \cp_{ijk} \\ x_{ij}x_{ik}x_{jk} = 0}}\sum_{pq\in \tbinom{ijk}{2}}c_{pq} x_{pq}\\
& + \sum_{\substack{pqr\in T_{\delta(R)}}}c_{pqr}^- +\sum_{pqr\in \tbinom S3}c_{pqr}^+ + \sum_{pq\in \tbinom S2}c_{pq}^+ + \sum_{pq\in \delta(R)}c_{pq}^- \\
& \leq  0
\enspace .
\end{align}
Assumption \eqref{eq:assumption-triplet-join} provides the last inequality.
We arrive at the thesis by applying Proposition~\ref{lemma:persistency-predicate} with $Q = \{x\in \cp_S \mid x_{ij}x_{ik}x_{jk} = 1\}$. 
\end{proof}
}{
The map that we show to be improving in the proof follows the actions described previously, namely it is constructed as $\sigma_{ijk} \circ \sigma_{\delta(R)}$.
}

The following \Cref{lemma:persistency-triangle-edge-join} expands to the cubic set partition problem Corollary~1 of \citet{Lange-2019}.
%
It considers triplets $ijk\in \tbinom S3$ and states three requirements under which joining the pair $ik$ does not compromise optimality. 
Firstly, \eqref{eq:triangle-edge-join-1} and \eqref{eq:triangle-edge-join-2} say there is a subset $R\subseteq S$ such that the total potential reward from joining $i, j$ and $k$ is greater than or equal to the sum of rewards and penalties incurred by joining $R$ and its complement. 
Secondly, \eqref{eq:triangle-edge-join-3} states that the cost of joining the triple $ijk$ must be at most the negative of the sum of rewards incurred when joining $ijk$ and its complement. Under these assumptions, we can put $i$ and $k$ together.
%
\begin{proposition}
	\label{lemma:persistency-triangle-edge-join}
	Let $S \neq \emptyset$, 
	and let $c\in \mathbb{R}^{\mathcal{I}_S}$. 
	Moreover, let $ijk \in \tbinom S3$ and $R, R'\subseteq S$ such that $ij, ik \in \delta(R)$ and $jk, ik\in \delta(R')$. 
	If all of the following conditions hold, there exists an optimal solution $x^*$ to \csp{S}{c} such that $x^*_{ik} = 1$.
	\begin{align}
		c_{ijk}^- + 2c_{ij}^- + 2c_{ik}^- + \sum_{pqr\in T_{\{ij, ik\}}}c_{pqr}^-
		& \geq \sum_{pqr\in T_{\delta(R)}}\vert c_{pqr}\vert + \sum_{pq\in \delta(R)}\vert c_{pq} \vert  \label{eq:triangle-edge-join-1}		
		\\
		c_{ijk}^- + 2c_{jk}^- + 2c_{ik}^- + \sum_{pqr\in T_{\{jk, ik\}}}c_{pqr}^-  		
		& \geq \sum_{pqr\in T_{\delta(R')}}\vert c_{pqr}\vert + \sum_{pq\in \delta(R')}\vert c_{pq} \vert  \label{eq:triangle-edge-join-2}		
		\\
		c_{ijk} + c_{ij} + c_{ik} + c_{jk} 	
		& \leq - \sum_{\substack{pqr\in T_{\delta(ijk)}\cap T^-\\ pqr\not\in T_{\{ij, ik, jk\}}}}\vert c_{pqr}\vert
		- \sum_{pq\in \delta(ijk)\cap P^-}\vert c_{pq} \vert \label{eq:triangle-edge-join-3}
	\end{align}
\end{proposition}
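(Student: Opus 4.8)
The plan is to construct an improving map $\sigma\colon\cp_S\to\cp_S$ whose image is contained in $Q:=\{x\in\cp_S\mid x_{ik}=1\}$ and then to invoke \Cref{lemma:persistency-variable} with the pair $ik$ and $\beta=1$. The starting point is that, by the transitivity constraints \eqref{eq:def-ccp}, a feasible $x$ can realize on the triangle $ijk$ only the five patterns of $(x_{ij},x_{ik},x_{jk})$ consistent with an equivalence relation; the three assignments with exactly two coordinates equal to $1$ are infeasible. I would therefore split $\cp_S$ into four cases: (i) $x_{ik}=1$ (so necessarily $x_{ij}=x_{jk}$); (ii) $x_{ij}=1,\ x_{ik}=x_{jk}=0$; (iii) $x_{jk}=1,\ x_{ij}=x_{ik}=0$; and (iv) $x_{ij}=x_{ik}=x_{jk}=0$. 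On case (i) I set $\sigma(x)=x$, which is improving and already lands in $Q$.

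On the three remaining cases I would define $\sigma$ as a cut followed by the join $\sigma_{ijk}$, with the cut chosen so that the pair already present survives it. Note that $ij,ik\in\delta(R)$ forces $jk\notin\delta(R)$, and $jk,ik\in\delta(R')$ forces $ij\notin\delta(R')$. Hence on case (iii) I take $\sigma=\sigma_{ijk}\circ\sigma_{\delta(R)}$: cutting $\delta(R)$ preserves $jk$ while disconnecting $i$'s side from the side of $\{j,k\}$, and $\sigma_{ijk}$ then draws $i$ into the block of $\{j,k\}$; this case is governed by \eqref{eq:triangle-edge-join-1}. Symmetrically, on case (ii) I take $\sigma=\sigma_{ijk}\circ\sigma_{\delta(R')}$, which preserves $ij$ and brings in $k$, and is governed by \eqref{eq:triangle-edge-join-2}. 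On case (iv) I take $\sigma=\sigma_{ijk}\circ\sigma_{\delta(ijk)}$: cutting $\delta(ijk)$ isolates $\{i,j,k\}$, so $\sigma_{ijk}$ merely forms it into a single block; this case is governed by \eqref{eq:triangle-edge-join-3}. In all cases $\sigma(x)_{ik}=1$, so $\sigma(\cp_S)\subseteq Q$.

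Case (iv) is the cleanest to estimate and I would do it first. Because the internal pairs $ij,ik,jk$ do not lie in $\delta(ijk)$, the cut leaves them at $0$, and since no external vertex is adjacent to $\{i,j,k\}$ after the cut, the join only activates the three internal pairs and the triple $ijk$, while no external pair or external–external triple changes. The cut itself can only destroy structure crossing $\delta(ijk)$, and the crucial observation is that any crossing triple containing an internal pair already had product $0$ (its internal pair was $0$), so only the crossing pairs and the crossing triples with a single vertex in $\{i,j,k\}$ — exactly those outside $T_{\{ij,ik,jk\}}$ — contribute. This yields $\phi_c(\sigma(x))-\phi_c(x)\le c_{ij}+c_{ik}+c_{jk}+c_{ijk}+\sum_{pqr\in T_{\delta(ijk)}\cap T^-,\ pqr\notin T_{\{ij,ik,jk\}}}|c_{pqr}|+\sum_{pq\in\delta(ijk)\cap P^-}|c_{pq}|$, which is $\le 0$ precisely by \eqref{eq:triangle-edge-join-3}.

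I expect the main obstacle to be the estimates for cases (ii) and (iii), which follow the same template as the proof of \Cref{proposition:triplet-join} (and of \Cref{lemma:edge-join-persistency}). Here the join $\sigma_{ijk}$ may merge the side of $i$ with the side of $\{j,k\}$ entirely, creating many new pairs and triples; the point is that every such newly created object crosses $\delta(R)$ (resp. $\delta(R')$) and is therefore absorbed by $\sum_{pq\in\delta(R)}|c_{pq}|+\sum_{pqr\in T_{\delta(R)}}|c_{pqr}|$, whereas the rewards that must dominate come from the internal pairs $ij,ik$ and the triples in $T_{\{ij,ik\}}$ turned on by attaching $i$. Combining each newly activated boundary reward $c_\bullet$ with the bound $|c_\bullet|$ from the cut produces the characteristic factor of two on $c_{ij}^-,c_{ik}^-$, and the explicit $c_{ijk}^-$ together with its appearance in $\sum_{pqr\in T_{\{ij,ik\}}}c_{pqr}^-$ accounts for the triple. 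Keeping track of which boundary terms are genuinely altered by the cut versus only over-counted — so that exactly the left-hand side of \eqref{eq:triangle-edge-join-1} survives — is the delicate part; once the three one-case inequalities are in hand, improvingness of $\sigma$ on $\cp_S$ follows and \Cref{lemma:persistency-variable} completes the proof.
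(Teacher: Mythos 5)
Your proposal is correct and follows essentially the same route as the paper's proof: the same four-case split of the triangle patterns, the same cut-then-join maps per case (your $\sigma_{ijk}$ in cases (ii) and (iii) coincides with the paper's $\sigma_{ik}$ there, since after cutting $\delta(R)$ resp.\ $\delta(R')$ the surviving pair puts two of the three vertices in one block), the same per-case association with \eqref{eq:triangle-edge-join-1}--\eqref{eq:triangle-edge-join-3}, and the same conclusion via \Cref{lemma:persistency-variable}. Your case (iv) estimate is exactly the paper's, and the mechanism you describe for cases (ii)/(iii) — newly activated objects all cross the cut and are absorbed by the absolute-value sums, with the rewritings $c_{ij}+\lvert c_{ij}\rvert\mapsto 2c_{ij}^+$ producing the factor-two terms and the $T_{\{ij,ik\}}$ accounting for the triples — is precisely the algebra carried out in the paper.
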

\ifthenelse{\boolean{proofs}}{
\begin{proof}
Let $\sigma \colon \cp_S \to \cp_S$ be defined as 
\begin{equation}
\sigma(x) := \begin{cases}
x & \textnormal{if $x_{ik} = 1$}\\
(\sigma_{ik} \circ \sigma_{\delta(R)})(x) & \textnormal{if $x_{ik} = x_{ij} = 0 ,x_{jk} = 1$} \\
(\sigma_{ik}\circ \sigma_{\delta(R')})(x) & \textnormal{if $x_{ik} = x_{jk} = 0 , x_{ij} = 1$} \\
(\sigma_{ijk}\circ \sigma_{\delta(ijk)})(x) & \textnormal{if $x_{ik} = x_{ij} = x_{jk} = 0$}
\end{cases}
\enspace .
\end{equation}
We use the notation $x' = \sigma(x)$ for all $x \in \cp$.
Firstly, the map $\sigma$ is such that $x'_{ik} = 1$ for all $x\in \cp_S$. 
Secondly, for any $x\in \cp_S$ such that $x_{ik} = x_{ij} = 0$ and $x_{jk} = 1$, the map $\sigma_{ik}\circ \sigma_{\delta(R)}$ is such that $x'_{ij}=x'_{ik} = x'_{jk} = 1$ and $x'_{pq} = x_{pq}$ for any $pq \not\in \delta(R)$. 
We conclude:
\begin{align}
\phi_c(x') - \phi_c(x) & = c_{ijk} + c_{ij} + c_{ik} + \sum_{\substack{pqr \in T_{\{ij, ik\}}\\ pqr\neq ijk}} c_{pqr}x'_{pq}x'_{pr}x'_{qr} \\
& \qquad + \sum_{\substack{pqr\in T_{\delta(R)} \\ pqr \notin T_{\{ij, ik\}}}} c_{pqr}\left(x'_{pq}x'_{pr}x'_{qr} - x_{pq}x_{pr}x_{qr}\right) + \sum_{\substack{pq\in \delta(R) \\ pq \not\in \{ij, ik\}}} c_{pq}(x'_{pq} - x_{pq}) 
\\
& \leq c_{ijk} + c_{ij} + c_{ik} +\sum_{\substack{pqr\in T_{\{ij, ik\}}\cap T^+ \\ pqr\neq ijk}}c_{pqr} + \sum_{\substack{pqr\in T_{\delta(R)} \\ pqr \notin T_{\{ij, ik\}}}}\vert c_{pqr}\vert + \sum_{\substack{pq\in \delta(R) \\ pq \not\in \{ij, ik\}}}  \vert c_{pq}\vert \\
& = -c_{ijk}^- - 2c_{ij}^- - 2c_{ik}^- +\sum_{\substack{pqr\in T_{\{ij, ik\}}}}c_{pqr}^+ + \sum_{pqr\in T_{\delta(R)}}\vert c_{pqr}\vert - \sum_{pqr\in T_{\{ij, ik\}}}\vert c_{pqr}\vert+ \sum_{pq\in \delta(R)}  \vert c_{pq}\vert 
\\
& = -c_{ijk}^- - 2c_{ij}^- - 2c_{ik}^- -\sum_{\substack{pqr\in T_{\{ij, ik\}}}}c_{pqr}^- + \sum_{pqr\in T_{\delta(R)}}\vert c_{pqr}\vert + \sum_{pq\in \delta(R)}  \vert c_{pq}\vert \\
& \leq 0
\enspace .
\end{align}	
The last inequality follows from 
Assumption~\eqref{eq:triangle-edge-join-1}.
Thirdly, for any $x\in \cp_S$ such that $x_{ik} = x_{jk} = 0$ and $x_{ij} = 1$, the map $\sigma_{ik}\circ \sigma_{\delta(R')}$ is improving by analogous arguments and Assumption~\eqref{eq:triangle-edge-join-2}.
Finally, for any $x\in \cp_S$ such that $x_{ik} = x_{jk} = x_{ij} = 0$, the map $\sigma_{ijk}\circ \sigma_{\delta(ijk)}$ is such that 
\begin{equation}
(\sigma_{ijk}\circ \sigma_{\delta(ijk)})_{pq} = \begin{cases}
0 & \textnormal{if $pq \in \delta(ijk)$} \\
1 & \textnormal{if $pq \in \{ij, ik, jk\}$} \\
x_{pq} & \textnormal{otherwise}
\end{cases}
\enspace .
\end{equation}
Therefore, 
\begin{align}
\phi_c(x') - \phi_c(x) & = c_{ijk} + c_{ij} + c_{ik} + c_{jk} - \sum_{pqr\in T_{\delta(ijk)}\setminus T_{\{ij, ik, jk\}}} c_{pqr}x_{pq}x_{pr}x_{qr} - \sum_{pq\in \delta(ijk)} c_{pq} x_{pq} 
\\
& \leq c_{ijk} + c_{ij} + c_{ik} + c_{jk}  +\sum_{\substack{pqr\in T_{\delta(ijk)} \cap T^- \\ pqr\not\in T_{\{ij, ik, jk\}}}} \vert c_{pqr} \vert + \sum_{pq\in \delta(ijk)\cap P^-} \vert c_{pq}\vert  \leq 0
\enspace .
\end{align}
The last inequality is true thanks to Assumption~\eqref{eq:triangle-edge-join-3}.
Applying Corollary~\ref{lemma:persistency-variable} concludes the proof.
\end{proof}
}{
In the proof, we distinguish four cases. 
As always, we start by fixing $x \in \cp_S$.
Firstly, if $x_{ik} = 1$, we are good.
Secondly, if $x_{ik} = x_{ij} = 0$, $x_{jk} = 1$ and \eqref{eq:triangle-edge-join-1} is satisfied, we show $\sigma_{ik} \circ \sigma_{\delta(R)}$ to be improving.  
Thirdly, if $x_{ik} = x_{jk} = 0$, $x_{ij} = 1$ and \eqref{eq:triangle-edge-join-2} is satisfied, we prove that $\sigma_{ik} \circ \sigma_{\delta(R')}$ is improving.
Lastly, if $x_{ij} = x_{jk} = x_{ik} = 0$ and \eqref{eq:triangle-edge-join-3} is satisfied, then we apply $\sigma_{ijk}\circ \sigma_{\delta(ijk)}$ and show that it is improving.
}

Next, we discuss a generalization of Theorem~2 of \citet{Lange-2019} 
in the context of instances on complete graphs with a cubic objective function.
To this end, let $S_H \subseteq S$. 
We define $c'\in \mathbb{R}^{\mathcal{I}_{S_H}}$ by the equations written below.
\begin{align}
	c'_\emptyset 
	& = \frac 12 \sum_{pqr\in \tbinom{S_H}{3}}c_{pqr} + \sum_{pq\in \tbinom{S_H}{2}}c_{pq}	
	\label{eq: defcprime1} 
	\\
	\forall pq\in \tbinom{S_H}{2} \colon \quad 
	c'_{pq} 
	& = - c_{pq} + \frac 12 \sum_{\substack{r\in S_H \\ r\neq p, q}} c_{pqr} 
	\label{eq: defcprime2} 
	\\
	\forall pqr\in \tbinom{S_H}{3} \colon \quad 
	c'_{pqr} 
	& = -2c_{pqr}  
	\label{eq: defcprime3}
\end{align}
\Cref{lemma:general-subgraph-edge-join} studies subsets $S_H \subseteq S$ such that $\mathbbm{1}_{\tbinom{S_H}{2}}$ is a trivial solution to the problem $\max_{x\in \cp_{S_H}}\phi_{c'}(x)$, i.e.~$\max_{x\in \cp_{S_H}}\phi_{c'}(x) = 0$. 
Here, we compare the negative parts of the costs involved in cutting $S_H$ from its complement with the total cost of any inner cut of $S_H$.
See also~\Cref{fig:illustration-subset-join}a.
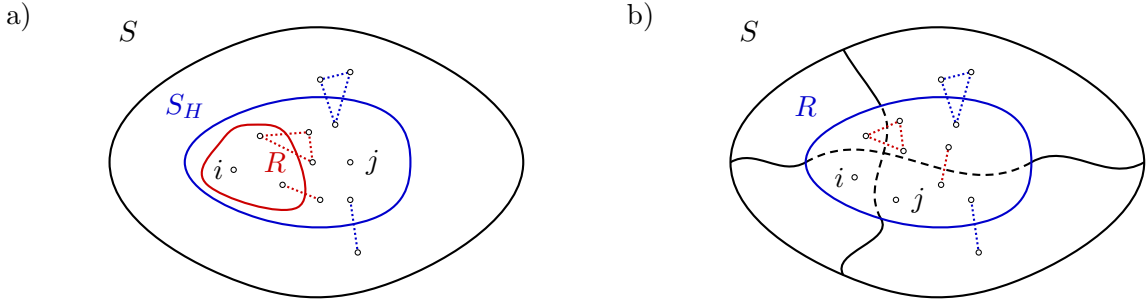
\begin{figure}
	\centering
	\begin{minipage}{0.5\linewidth}
		a)
		\hspace{-2ex}
		\vspace{-6ex}
		\linebreak
		\begin{center}
				\begin{tikzpicture}
				\coordinate (c1) at (0, 0);
				\coordinate (c2) at (1.5, -1.5);
				\coordinate (c3) at (3.75, -1.6);
				\coordinate (c4) at (3.75, 1.6); 
				\coordinate (c5) at (1.5, 1.5);
				\coordinate (c6) at (5.5, 0);
				
				\draw[thick] plot[smooth cycle, tension=0.75] coordinates{(c1) (c2) (c3) (c6) (c4) (c5)};
				
				\coordinate (sh1) at (1, 0);
				\coordinate (sh2) at (2, -0.75);
				\coordinate (sh3) at (3.5, -0.75);
				\coordinate (sh4) at (3.5, 0.75);
				\coordinate (sh5) at (2, 0.75);
				\coordinate (sh6) at (4, 0);
				
				\coordinate (r1) at (1.5, 0.3);
				\coordinate (r2) at (1.9, 0.5);
				\coordinate (r3) at (2.5, -0.6);
				\coordinate (r4) at (1.3, -0.3);
				\coordinate (r5) at (2.4, 0.3);
				
				\draw[deep-red, thick] plot[smooth cycle, tension=0.75] coordinates{(r1) (r2) (r5) (r3) (r4)};
				\node[deep-red] at (2.2, 0.0) {$R$};
				\draw[deep-red, thick, densely dotted] (2.7, 0.0) -- (2.65, 0.4) -- (2.0, 0.35) -- cycle;
				\draw[deep-red, thick, densely dotted] (2.8, -0.5) -- (2.3, -0.3);
				\draw[fill=white] (2.7, 0.0) circle (1pt);
				\draw[fill=white] (2.65, 0.4) circle (1pt);
				\draw[fill=white] (2.0, 0.35) circle (1pt);
				\draw[fill=white] (2.8, -0.5) circle (1pt);
				\draw[fill=white] (2.3, -0.3) circle (1pt);
				
				\draw[secondary, thick] plot[smooth cycle, tension=0.75] coordinates{(sh1) (sh2) (sh3) (sh6) (sh4) (sh5)};
				\node[secondary] at (1, 0.75) {$S_H$};
				\draw[secondary, thick, densely dotted] (3, 0.5) -- (3.2, 1.2) -- (2.8, 1.1) -- cycle;
				\draw[secondary, thick, densely dotted] (3.2, -0.5) -- (3.3, -1.2);
				\draw[fill=white] (3.0, 0.5) circle (1pt);
				\draw[fill=white] (3.2, 1.2) circle (1pt);
				\draw[fill=white] (2.8, 1.1) circle (1pt);
				\draw[fill=white] (3.2, -0.5) circle (1pt);
				\draw[fill=white] (3.3, -1.2) circle (1pt);

				%

				\coordinate (i) at (1.45, -0.1);
				\draw ($(i) + (0.2, 0.0)$) circle (1pt);
				\node at (i) {$i$};
				
				\coordinate (j) at (3.5, 0);
				\draw ($(j) + (-0.3, 0)$) circle (1pt);
				\node at (j) {$j$};
				
				\node at (0.25, 1.75) {$S$};		
			\end{tikzpicture}
		\end{center}
	\end{minipage}%
	\begin{minipage}{0.5\linewidth}
		b)
		\hspace{-2ex}
		\vspace{-6ex}
		\linebreak
		\begin{center}
			\begin{tikzpicture}
				\coordinate (c1) at (0, 0);
				\coordinate (c2) at (1.5, -1.5);
				\coordinate (c3) at (3.75, -1.6);
				\coordinate (c4) at (3.75, 1.6); 
				\coordinate (c5) at (1.5, 1.5);
				\coordinate (c6) at (5.5, 0);
				
				\draw[thick] plot[smooth cycle, tension=0.75] coordinates{(c1) (c2) (c3) (c6) (c4) (c5)};
				
				\coordinate (sh1) at (1, 0);
				\coordinate (sh2) at (2, -0.75);
				\coordinate (sh3) at (3.5, -0.75);
				\coordinate (sh4) at (3.5, 0.75);
				\coordinate (sh5) at (2, 0.75);
				\coordinate (sh6) at (4, 0);
				
				\coordinate (r1) at (1.5, 0.3);
				\coordinate (r2) at (1.9, 0.5);
				\coordinate (r3) at (2.5, -0.6);
				\coordinate (r4) at (1.3, -0.3);
				\coordinate (r5) at (2.4, 0.3);
				
				\draw[solid, black, thick] (c2) to[out=120, in=-60] (sh2);
				\draw[densely dashed, black, thick] (sh2) to[out=120, in=-60] (sh5);
				\draw[solid, black, thick] (sh5) to[out=120, in=-60] (c5);
				
				\draw[solid, black, thick] (c1) to [out=30, in=210] (sh1);
				\draw[densely dashed, black, thick] (sh1) to [out=30, in=210] (sh6);
				\draw[solid, black, thick] (sh6) to [out=30, in=210] (c6);

				\draw[deep-red, thick, densely dotted] (2.3, 0.15) -- (2.25, 0.55) -- (1.8, 0.35) -- cycle;
				\draw[deep-red, thick, densely dotted] (2.8, -0.3) -- (2.9, 0.2);
				\draw[fill=white] (2.3, 0.15) circle (1pt);
				\draw[fill=white] (2.25, 0.55) circle (1pt);
				\draw[fill=white] (1.8, 0.35) circle (1pt);
				\draw[fill=white] (2.9, 0.2) circle (1pt);
				\draw[fill=white] (2.8, -0.3) circle (1pt);
				
				\draw[secondary, thick] plot[smooth cycle, tension=0.75] coordinates{(sh1) (sh2) (sh3) (sh6) (sh4) (sh5)};
				\node[secondary] at (1, 0.75) {$R$};
				\draw[secondary, thick, densely dotted] (3, 0.5) -- (3.2, 1.2) -- (2.8, 1.1) -- cycle;
				\draw[secondary, thick, densely dotted] (3.2, -0.5) -- (3.3, -1.2);
				\draw[fill=white] (3.0, 0.5) circle (1pt);
				\draw[fill=white] (3.2, 1.2) circle (1pt);
				\draw[fill=white] (2.8, 1.1) circle (1pt);
				\draw[fill=white] (3.2, -0.5) circle (1pt);
				\draw[fill=white] (3.3, -1.2) circle (1pt);

				\coordinate (i) at (1.45, -0.2);
				\draw ($(i) + (0.2, 0.0)$) circle (1pt);
				\node at (i) {$i$};
				
				\coordinate (j) at (2.5, -0.5);
				\draw ($(j) + (-0.3, 0)$) circle (1pt);
				\node at (j) {$j$};
				
				\node at (0.25, 1.75) {$S$};		
			\end{tikzpicture}
		\end{center}
	\end{minipage}
	\caption{a) \Cref{lemma:general-subgraph-edge-join} compares the total penalty (blue, dotted tuples) of cutting $S_H$ from its complement $S \setminus S_H$to the total cost (red, dotted tuples) of edges and triples lying within $S_H$ and  cut by $R$. b) \Cref{proposition:subset-join-proposition} compares the total cost of separated edges and triples lying within $R$ (red, dotted) to the total cost of edges and triples cut by $R$ (blue, dotted), for any partition (black) of the node set separating any two nodes $i$ and $j$.}
	\label{fig:illustration-subset-join}
\end{figure}
\begin{proposition}
	\label{lemma:general-subgraph-edge-join}
	Let $S \neq \emptyset$ and $c\in \mathbb{R}^{\mathcal{I}_S}$. 
	Moreover, let $S_H \subseteq S$ and $ij \in \tbinom{S_H}{2}$. 
	We define $c'\in \mathbb{R}^{\mathcal{I}_{S_H}}$ as in \eqref{eq: defcprime1}, \eqref{eq: defcprime2}, \eqref{eq: defcprime3}.
	If $\max_{x\in \cp_{S_H}}\phi_{c'}(x) = 0$
	and for all $R \subseteq S_H$ with $i \in R$ and $j \in S_H\setminus R$ we have
	\begin{equation}\label{eq:assumption-subgraph-criterion-uv-cuts}
		\sum_{pq\in \delta(S_H)\cap P^-}c_{pq} + \sum_{pqr\in T_{\delta(S_H)}\cap T^-} c_{pqr} \geq \sum_{pq\in \delta(R, S_H\setminus R)}c_{pq} + \sum_{pqr\in T_{\delta(R, S_H \setminus R)}\cap \tbinom{S_H}{3}} c_{pqr} \enspace ,		
	\end{equation}
	then there exists an optimal solution $x^*$ to \csp{S}{c} such that $x^*_{ij} = 1$.
\end{proposition}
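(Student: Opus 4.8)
The plan is to build an improving map that, on every feasible vector separating $i$ and $j$, first isolates $S_H$ from its complement and then collapses $S_H$ into a single block, and to certify improvingness by splitting the objective change into an \emph{interior} part (tuples inside $S_H$) and a \emph{boundary} part (tuples across $\delta(S_H)$), controlled respectively by the two hypotheses. Concretely, I would set $\sigma(x)=x$ when $x_{ij}=1$ and $\sigma(x)=(\sigma_{S_H}\circ\sigma_{\delta(S_H)})(x)$ otherwise; since $i,j\in S_H$, the join map forces $\sigma(x)_{ij}=1$, so by \Cref{lemma:persistency-variable} it suffices to prove that $\sigma$ is improving. This cut-then-join composition matches the style of the maps used in \Cref{lemma:edge-join-persistency} and \Cref{proposition:triplet-join} and, crucially, avoids merging exterior clusters that merely touch $S_H$, whose cost the hypotheses would not bound.

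Fixing $x$ with $x_{ij}=0$ and writing $y=x\vert_{S_H}$, the composite leaves every pair and triple contained in $S\setminus S_H$ untouched, sends every pair and triple meeting both $S_H$ and its complement to the cut state, and fully joins $S_H$. Hence $\phi_c(\sigma(x))-\phi_c(x)$ equals the interior change $I(y):=\phi_{c\vert_{\mathcal{I}_{S_H}}}(\mathbbm{1})-\phi_{c\vert_{\mathcal{I}_{S_H}}}(y)$ of joining all of $S_H$, minus the cost $\sum_{pq\in\delta(S_H)}c_{pq}x_{pq}+\sum_{pqr\in T_{\delta(S_H)}}c_{pqr}x_{pq}x_{pr}x_{qr}$ of the boundary tuples that $x$ had joined. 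Bounding each boundary summand below by its negative part shows that this boundary cost is at least $\sum_{pq\in\delta(S_H)\cap P^-}c_{pq}+\sum_{pqr\in T_{\delta(S_H)}\cap T^-}c_{pqr}$, i.e.\ the left-hand side of \eqref{eq:assumption-subgraph-criterion-uv-cuts}. Thus the target inequality $\phi_c(\sigma(x))\le\phi_c(x)$ reduces to $I(y)\le\sum_{pq\in\delta(S_H)\cap P^-}c_{pq}+\sum_{pqr\in T_{\delta(S_H)}\cap T^-}c_{pqr}$.

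To finish, I would take $R$ to be the block of $y$ containing $i$; since $x_{ij}=0$ this gives $i\in R$ and $j\in S_H\setminus R$, so \eqref{eq:assumption-subgraph-criterion-uv-cuts} applies, and its right-hand side is exactly the interior change $I(y^R)$ of the two-block partition $y^R=(R,S_H\setminus R)$. It then remains to compare $I(y)$ with $I(y^R)$, and this is where the first hypothesis enters: from the definition of $c'$ in \eqref{eq: defcprime1}--\eqref{eq: defcprime3} one checks $\phi_{c'}(\mathbbm{1})=0$ and, more importantly, that $\phi_{c'}(z)=I(z)$ for every two-block partition $z$ (a bipartition separates no triple into three blocks), while for a general partition $\phi_{c'}$ and $I$ differ only by a correction supported on the fully separated triples; since $\mathbbm{1}$ maximizes $\phi_{c'}$ at value $0$, every bipartition change satisfies $I(y^R)\le 0$ and the residual between $I(y)$ and $I(y^R)$, namely the change of collapsing the non-$i$ blocks of $y$, is forced to be nonpositive. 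The main obstacle is precisely this last step: reconciling the single scalar, bipartition-flavoured condition $\max_{x\in\cp_{S_H}}\phi_{c'}(x)=0$ with the interior change of the possibly multi-way partition $y$, where the cubic terms make $\phi_{c'}\neq I$ and the fully separated triples must be tracked carefully and matched against the negative triple costs $T_{\delta(S_H)}\cap T^-$ on the boundary.
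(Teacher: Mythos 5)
Your construction of $\sigma$ and the reduction are exactly the paper's: the map that fixes $x$ when $x_{ij}=1$ and applies $\sigma_{S_H}\circ\sigma_{\delta(S_H)}$ otherwise, the split of $\phi_c(\sigma(x))-\phi_c(x)$ into an interior join cost $I(y)$ plus a boundary term, and the bound of the boundary term by the negative parts over $\delta(S_H)$ all appear verbatim in the paper's proof, so the problem correctly reduces to showing $I(y)\le\sum_{pq\in\delta(S_H)\cap P^-}c_{pq}+\sum_{pqr\in T_{\delta(S_H)}\cap T^-}c_{pqr}$. The gap is your final step, and it is not merely unfinished: the specific claim you sketch is false. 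You take $R$ to be the block of $y$ containing $i$ and assert that the residual $I(y)-I(y^R)$, the change from collapsing the non-$i$ blocks, is nonpositive. This does not follow from $\max_{x\in \cp_{S_H}}\phi_{c'}(x)=0$. Take $S=S_H=\{i,j,a\}$ with $c_{ij}=c_{ia}=-2$, $c_{ja}=1$, $c_{ija}=0$. Then $c'_{ij}=c'_{ia}=2$, $c'_{ja}=-1$, $c'_{ija}=0$, $c'_\emptyset=-3$, and the five partitions of $S_H$ give $\phi_{c'}$-values $0,-4,-1,-1,-3$, so $\max\phi_{c'}=0$; moreover \eqref{eq:assumption-subgraph-criterion-uv-cuts} holds because $\delta(S_H)=\emptyset$ and every separating cut has nonpositive value, so all hypotheses of \Cref{lemma:general-subgraph-edge-join} are met. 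Yet for $y$ the discrete partition, $I(y)=c_{ij}+c_{ia}+c_{ja}+c_{ija}=-3$, while your choice $R=\{i\}$ gives $I(y^R)=c_{ij}+c_{ia}+c_{ija}=-4<I(y)$. Only the \emph{other} bipartition $\{i,a\}\mid\{j\}$, with cut value $-1\ge I(y)$, saves the argument. What is true, and what the proposition needs, is the existence of \emph{some} bipartition separating $i$ from $j$ whose cut value dominates $I(y)$; no single a-priori choice of $R$ (in particular not the block of $i$) works.

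That existence statement is where the paper's actual work lies, and it is the ingredient your proposal lacks. The paper argues by contradiction: if every bipartition $\mathcal{R}'$ of the blocks of $y$ with $R_1\ni i$ on one side and $R_2\ni j$ on the other violated the inequality, then summing the violations over all $2^{k-2}$ such bipartitions, and counting how often each pair or triple of blocks lies on a common side ($2^{k-3}$ times for pairs, $\lfloor 2^{k-4}\rfloor$ for triples, and never for configurations involving both $R_1$ and $R_2$), identifies the total, via \Cref{lemma:subgraph-helping-lemma-1}, with $2^{k-3}\phi_{c'}\bigl(x^{\mathcal{R}''}\bigr)$, where $\mathcal{R}''$ is obtained from the blocks of $y$ by merging $R_1$ and $R_2$; positivity of this sum then contradicts $\max_{x\in\cp_{S_H}}\phi_{c'}(x)=0$. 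So the two missing ideas are the averaging argument over all separating bipartitions of the blocks and the helper lemma expressing $\phi_{c'}$ of an arbitrary partition in terms of cut pairs, $2$--$1$ split triples, and fully separated triples (the correction term you correctly identified but did not control). Without these, the "main obstacle" you name at the end cannot be closed by refining your canonical choice of $R$.
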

In order to prove this proposition, we first prove an auxiliary lemma.
\begin{lemma}
	\label{lemma:subgraph-helping-lemma-1}
	Let $S \neq \emptyset$ and $c \in \mathbb{R}^{\mathcal{I}_S}$. 
	We define $c'\in \mathbb{R}^{\mathcal{I}_S}$ as in \eqref{eq: defcprime1}, \eqref{eq: defcprime2}, \eqref{eq: defcprime3} for $S_H = S$.
	Then for any partition $\mathcal{R}$ of $S$:
	{
		\begin{align}
			\phi_{c'}(x^\mathcal{R}) 
			&= \;\frac 12 \sum_{pqr\in \tbinom S3} c_{pqr}\prod_{uv\in \tbinom{pqr}{2}}(1-x^\mathcal{R}_{uv}) + \sum_{pqr\in \tbinom S3}c_{pqr}\sum_{uv \in \tbinom{pqr}{2}}x^\mathcal{U}_{uv}\prod_{\substack{u'v'\in \tbinom{pqr}{2} \\ u'v' \neq uv}}(1- x^\mathcal{R}_{u'v'})	
			\\
			& \qquad + \sum_{pq\in \tbinom S2}c_{pq}(1-x^\mathcal{R}_{pq})	\label{eq:partition-subgraph-helper-1} 
			\\ 
			& = \;\frac 12 \sum_{RR'R''\in \tbinom{\mathcal{R}}{3}}\sum_{pqr\in T_{RR'R''}}c_{pqr} +\sum_{RR'\in \tbinom{\mathcal{R}}{2}}\sum_{pq\in \delta(R, R')}c_{pq} 
			\\
			& \qquad + \sum_{RR'\in \tbinom{\mathcal{R}}{2}}\left(\sum_{pqr\in T_{RRR'}}c_{pqr} +  \sum_{pqr\in T_{RR'R'}}c_{pqr}\right) 
			\enspace ,
			\label{eq:partition-subgraph-helper-2}
		\end{align}
	}%
	where $x^\mathcal{R}$ denotes the feasible vector corresponding to the partition $\mathcal{R}$ of $S$.
\end{lemma}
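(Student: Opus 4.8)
The plan is to substitute the definitions \eqref{eq: defcprime1}--\eqref{eq: defcprime3} of $c'$ directly into $\phi_{c'}(x^\mathcal{R}) = \sum_{pqr\in\binom S3} c'_{pqr} x^\mathcal{R}_{pq}x^\mathcal{R}_{pr}x^\mathcal{R}_{qr} + \sum_{pq\in\binom S2} c'_{pq} x^\mathcal{R}_{pq} + c'_\emptyset$ and then separate the resulting expression into an edge part and a triple part. Writing $y = x^\mathcal{R}$ for brevity, the contribution from the first summand $-c_{pq}$ of $c'_{pq}$ together with the edge part $\sum_{pq} c_{pq}$ of $c'_\emptyset$ collapses immediately into $\sum_{pq\in\binom S2} c_{pq}(1-y_{pq})$, which is the last line of \eqref{eq:partition-subgraph-helper-1} and the middle term of \eqref{eq:partition-subgraph-helper-2}. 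The only bookkeeping subtlety here is re-indexing the triple part of $c'_{pq}$: the double sum $\tfrac12\sum_{pq}y_{pq}\sum_{r\neq p,q} c_{pqr}$ must be rewritten as $\tfrac12\sum_{pqr\in\binom S3} c_{pqr}\sum_{uv\in\binom{pqr}{2}} y_{uv}$, i.e.\ each triple cost $c_{pqr}$ is distributed over its three constituent pairs, each pair arising exactly once.

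After this regrouping, every triple $pqr$ contributes $c_{pqr}$ times the scalar $\tfrac12 + \tfrac12(y_{pq}+y_{pr}+y_{qr}) - 2\,y_{pq}y_{pr}y_{qr}$, coming respectively from $c'_\emptyset$, the re-indexed $c'_{pq}$, and $c'_{pqr}$. The key step is then a purely local identity on a single triple. Because $y=x^\mathcal{R}$ is a partition indicator, transitivity forces the bit triple $(y_{pq},y_{pr},y_{qr})$ to be either $(1,1,1)$ (all three in one block), a permutation of $(1,0,0)$ (exactly two together), or $(0,0,0)$ (all separated); the configuration with exactly two ones is infeasible. On these feasible configurations I would verify that the scalar above equals $\tfrac12\prod_{uv}(1-y_{uv}) + \sum_{uv} y_{uv}\prod_{u'v'\neq uv}(1-y_{u'v'})$ --- two polynomials that disagree in general but coincide on all partition indicators --- which yields \eqref{eq:partition-subgraph-helper-1}, the indicator in the middle sum being that of $\mathcal{R}$ itself. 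The same casework shows the scalar equals $0$, $1$, $\tfrac12$ in the three feasible cases, so that a triple contributes $0$ when its elements share a block, $c_{pqr}$ when exactly two share a block, and $\tfrac12 c_{pqr}$ when all three lie in distinct blocks.

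Finally, I would reorganize the surviving edges and triples according to how their elements distribute among the blocks of $\mathcal{R}$: cut edges grouped by the unordered block pair $\{R,R'\}$ containing their endpoints give $\sum_{RR'\in\binom{\mathcal{R}}2}\sum_{pq\in\delta(R,R')} c_{pq}$; triples with exactly two elements in a common block are sorted by $\{R,R'\}$ into $T_{RRR'}$ and $T_{RR'R'}$; and fully cut triples are grouped by the unordered block triple into $T_{RR'R''}$, each counted exactly once, which accounts for the factor $\tfrac12$ in front of $\sum_{RR'R''\in\binom{\mathcal{R}}3}$. Verifying that this grouping is a genuine partition of the relevant index sets (each edge and each triple counted once, in the correct case) produces \eqref{eq:partition-subgraph-helper-2}. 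The main obstacle is not conceptual but combinatorial: carefully tracking the multiplicities in the pair re-indexing and in the block grouping, and being explicit that the polynomial identity of the second paragraph is invoked only on partition-induced vectors, where the two-together-one-apart configuration cannot arise.
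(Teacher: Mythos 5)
Your proposal is correct and follows essentially the same route as the paper's proof: substitute the definitions \eqref{eq: defcprime1}--\eqref{eq: defcprime3}, reduce the triple part to a per-triple scalar identity that holds because $x^\mathcal{R}$ is a partition indicator, and then regroup the surviving terms by blocks to obtain \eqref{eq:partition-subgraph-helper-2}. The only cosmetic difference is that the paper verifies the local identity algebraically, via $x^\mathcal{R}_{pq}x^\mathcal{R}_{pr} = x^\mathcal{R}_{pq}x^\mathcal{R}_{pr}x^\mathcal{R}_{qr}$ and polynomial expansion, whereas you check it by enumerating the three feasible bit patterns $(1,1,1)$, a permutation of $(1,0,0)$, and $(0,0,0)$ --- the same fact in pointwise form.
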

\begin{proof}[Proof of \Cref{lemma:subgraph-helping-lemma-1}]
We use the fact that for any partition $\mathcal{R}$ of $S$ and any $pqr\in \tbinom S3$ we have that
$x_{pq}^\mathcal{R}x_{qr}^\mathcal{R}=x_{pq}^\mathcal{R}x_{pr}^\mathcal{R} = x_{pr}^\mathcal{R}x_{qr}^\mathcal{R} = x_{pq}^\mathcal{R}x_{pr}^\mathcal{R}x_{qr}^\mathcal{R}$. 
Expanding the inner products and the inner sums leads to
\begin{align}
		\prod_{uv\in \tbinom{pqr}{3}}\left(1-x^\mathcal{R}_{uv}\right) & = 1 - x_{pq}^\mathcal{R} - x_{pr}^\mathcal{R}-x_{qr}^\mathcal{R} + 2x_{pq}^\mathcal{R}x_{pr}^\mathcal{R}x_{qr}^\mathcal{R} \enspace, \\
		\sum_{uv\in \tbinom{pqr}{2}}x_{uv}^\mathcal{R}\prod_{\substack{u'v'\in \tbinom{pqr}{2} \\ u'v' \notin \{uv\}}}(1-x_{u'v'}^\mathcal{R}) & = x^\mathcal{R}_{pq} + x^\mathcal{R}_{pr} + x^\mathcal{R}_{qr} -3 x^\mathcal{R}_{pq}x^\mathcal{R}_{qr}x^\mathcal{R}_{pr}\enspace.
\end{align}
By substituting and collecting terms, we conclude the proof for Equation~\eqref{eq:partition-subgraph-helper-1}. 
Equation~\eqref{eq:partition-subgraph-helper-2} follows instead from the following observations: 
\begin{align}
\prod_{ab\in \tbinom{pqr}{3}}\left(1-x^\mathcal{R}_{ab}\right) = 1 
& \quad \Leftrightarrow \quad \exists \;RR'R''\in \tbinom{\mathcal{R}}{3}\colon pqr\in T_{RR'R''}
\\
\sum_{ab\in \tbinom{pqr}{2}}x_{ab}^\mathcal{R}\prod_{a'b'\in \tbinom{pqr}{2}\setminus \{ab\}}(1-x_{a'b'}^\mathcal{R}) = 1 
& \quad \Leftrightarrow \quad \exists \;RR'\in \tbinom{\mathcal{R}}{2}\colon \left(pqr\in T_{RRR'} \lor pqr\in T_{RR'R'}\right)
\\
1 - x^\mathcal{R}_{pq} = 1 
& \quad \Leftrightarrow \quad \exists \; RR'\in \tbinom{\mathcal{R}}{2}\colon pq\in \delta(R, R') 
\enspace .
\end{align}
This concludes the proof.
\end{proof}

\begin{proof}[Proof of~\Cref{lemma:general-subgraph-edge-join}]
We define $\sigma \colon \cp_S \to \cp_S$ as 
\begin{equation}
\sigma(x) := \begin{cases}
x & \textnormal{if $x_{ij} = 1$}\\
(\sigma_{S_H} \circ \sigma_{\delta(S_H)})(x) & \textnormal{otherwise}
\end{cases}
\enspace .
\end{equation}
Let $x' = \sigma(x)$ for any $x \in \cp_S$. 
It is easy to see that $x'_{ij} = 1$ for all $x\in \cp_S$.
Similarly to before, 
we show that $\sigma$ is an improving map. 
For any $x \in \cp_S$ such that $x_{ij} = 1$ we have $\phi_c(x') - \phi_c(x) = 0$, by definition of $x'$. 
Now, we consider $x\in \cp_S$ such that $x_{ij} = 0$. 
Let $P_H = \tbinom{S_H}{2}$ and $T_H = \tbinom{S_H}{3}$. 
We let $x\vert_{P_H}$ denote the restriction of $x$ containing only components corresponding to elements in $P_H$. 
Let $\mathcal{R}$ be the partition of $S$ such that $x = x^\mathcal{R}$, and let $\mathcal{R}_H$ be the induced partition of $S_H$ such that $x\vert_{P_H} = x^{\mathcal{R}_H}$. 
Since $x_{ij} = 0$, there exist $R_1, R_2\in \mathcal{R}_H$ such that $i \in R_1$, $j \in R_2$. 
We have:
\begin{equation}
x'_{pq} = \begin{cases}
1 & \textnormal{if $pq \in P_H$}\\
0 & \textnormal{if $pq \in \delta(S_H)$} \\
x_{pq} & \textnormal{otherwise}
\end{cases}
\enspace .
\end{equation}
Therefore, it follows 
\begin{align}
& \phi_c(x') - \phi_c(x) \\
= & \sum_{pq\in P_H} c_{pq}(1- x_{pq}) + \sum_{pqr\in T_H}c_{pqr}(1-x_{pq}x_{pr}x_{qr}) - \sum_{pq\in \delta(S_H)}c_{pq}x_{pq} - \sum_{pqr\in T_{\delta(S_H)}}c_{pqr}x_{pq}x_{pr}x_{qr}
\enspace .
\label{eq:subgraph-plugin-map-1} 
\end{align}
In order to find an upper bound for the sums over $P_H$ and $T_H$, we show that there exists a subset $R \subset S_H$ with $i \in R$ and $j \in S_H \setminus R$ such that 
\begin{equation} \label{eq:subgraph-simplification-1}
\sum_{pq\in P_H}c_{pq}\left(1-x_{pq}\right) + \sum_{pqr\in T_H}c_{pqr}(1-x_{pq}x_{pr}x_{qr}) \leq \sum_{pq\in \delta(R, S_H \setminus R)}c_{pq} + \sum_{pqr\in T_{\delta(R, V_H \setminus R)}\cap T_H} c_{pqr}
\enspace .
\end{equation}
	
For the sake of contradiction, we assume that there is no such $R \subset S_H$.
For any $\mathcal{R}'\subset \mathcal{R}_H$, let $R_{\mathcal{R}'} = \bigcup_{P'\in \mathcal{R}'}P'$. 
Furthermore, we define $t \colon\tbinom{\mathcal{R}_H}{2}\cup \tbinom{\mathcal{R}_H}{3} \to \mathbb{R}$ and $p \colon \tbinom{\mathcal{R}_H}{2} \to \mathbb{R}$ such that
\begin{align}
t_{RR'R''} & = \sum_{pqr\in T_{RR'R''}}c_{pqr}
& \forall RR'R''\in \tbinom{\mathcal{R}_H}{3} 
\\ 
t_{RR'} & = \sum_{pqr\in T_{RRR'}\cup T_{RR'R'}}c_{pqr}
& \forall RR'\in \tbinom{\mathcal{R}_H}{2} 
\\
p_{RR'} & = \sum_{pq\in \delta(R, R')}c_{pq}
& \forall RR'\in \tbinom{\mathcal{R}_H}{2} 
\enspace .
\end{align}
Therefore, let $\mathcal{R}'\subset \mathcal{R}_H$ with $R_1\in \mathcal{R}'$ and $R_2\not\in \mathcal{R}'$. 
We observe that this implies that $i \in R_{\mathcal{R}'}$ and $j \notin R_{\mathcal{R}'}$, since $\mathcal{R}_H$ is a partition of $H$. 
We have
\begin{equation}\label{eq:subgraph-simplification-2}
\sum_{pq \in P_H}c_{pq}\left(1-x_{pq}\right) + \sum_{pqr\in T_H}c_{pqr}(1-x_{pq}x_{pr}x_{qr}) > \sum_{pq \in \delta(R_{\mathcal{R}'}, S_H\setminus R_{\mathcal{R}'})}c_{pq} + \sum_{pqr\in T_{\delta(R_{\mathcal{R}'}, S_H\setminus R_{\mathcal{R}'})}\cap T_H} c_{pqr}
\enspace .
\end{equation}
We evaluate the terms in \eqref{eq:subgraph-simplification-2} one-by-one, and express them as sums over elements in $\mathcal{R}'$ and $\mathcal{R}_H \setminus \mathcal{R}'$.
Firstly, we observe that for any $pq \in P_H$ we have $x_{pq} = 0$ if and only if there exist $RR'\in \tbinom{\mathcal{R}_H}{2}$ such that $pq\in \delta(R, R')$.
Therefore,
\begin{equation}
\sum_{pq \in P_H}c_{pq}\left(1-x_{pq}\right) = \sum_{RR'\in \tbinom{\mathcal{R}_H}{2}}p_{RR'}
\end{equation}
whereas
\begin{equation}
\sum_{pq \in \delta(R_{\mathcal{R}'}, S_H \setminus R_{\mathcal{R}'})}c_{pq} = \sum_{R \in \mathcal{R}'} \sum_{R'\in \mathcal{R}_H \setminus \mathcal{R}'} p_{RR'}
\enspace .
\end{equation}
For the first sum, we use the decomposition 
\begin{equation} \label{eq: decomposition1}
\tbinom{\mathcal{R}_H}{2} = \tbinom{\mathcal{R}'}{2}\cup \left\{RR' \mid R \in \mathcal{R}'\land R'\in \mathcal{R}_H\setminus \mathcal{R}'\right\} \cup \tbinom{\mathcal{R}_H\setminus \mathcal{R}'}{2}
\end{equation}
where the subsets are mutually disjoint.
Consequently:
\begin{equation}\label{eq:subgraph-edge-difference}
\sum_{pq\in P_H}c_{pq}\left(1-x_{pq}\right) - \sum_{pq\in \delta(W_{\mathcal{R}'}, V_H\setminus W_{\mathcal{R}'})}c_{pq} = \sum_{RR'\in \tbinom{\mathcal{R}'}{2}}p_{RR'} + \sum_{RR'\in \tbinom{\mathcal{R}_H\setminus \mathcal{R}'}{2}}p_{RR'}
\enspace .
\end{equation}
Secondly, for any $pqr\in T_H$, we have $x_{pq}x_{pr}x_{qr} = 0$ if and only if there exist $RR'\in \tbinom{\mathcal{R}_H}{2}$ such that $pqr\in T_{RRR'}\cup T_{RR'R'}$ or there exist $RR'R''\in \tbinom{\mathcal{R}_H}{3}$ such that $pqr\in T_{RR'R''}$.
Therefore,
\begin{equation}
\sum_{pqr\in T_H}c_{pqr} \left(1 - x_{pq}x_{pr}x_{qr}\right) = \sum_{RR'R''\in \tbinom{\mathcal{R}_H}{3}}t_{RR'R''} + \sum_{RR'\in \tbinom{\mathcal{R}_H}{2}}t_{RR'} 
\end{equation}
whereas
\begin{align}
& \sum_{pqr\in T_{\delta(R_{\mathcal{R}'}, S_H \setminus R_{\mathcal{R}'})}\cap T_H} c_{pqr} 
\\
= & \sum_{RR'\in \tbinom{\mathcal{R}'}{2}}\sum_{R''\in \mathcal{R}_H \setminus \mathcal{R}'}t_{RR'R''} + \sum_{R\in \mathcal{R}'}\sum_{R'R''\in \tbinom{\mathcal{R}_H \setminus \mathcal{R}'}{2}}t_{RR'R''} + \sum_{R\in \mathcal{R}'}\sum_{R'\in \mathcal{R}_H \setminus \mathcal{R}'}t_{RR'}
\enspace .
\end{align}
For the first sum, we use the decomposition
\begin{align} \label{eq: decomposition2}
\tbinom{\mathcal{R}_H}{3} = & \ \tbinom{\mathcal{R}'}{3}\cup \left\{RR'R''\mid RR'\in \tbinom{\mathcal{R'}}{2}\land R''\in \mathcal{R}_H \setminus \mathcal{R}'\right\} \\
& \cup \left\{RR'R''\mid R\in \mathcal{R}'\land R'R''\in \tbinom{\mathcal{R}_H\setminus \mathcal{R}'}{2}\right\} \cup \tbinom{\mathcal{R}_H \setminus \mathcal{R}'}{3}
\end{align}
where again the subsets are mutually disjoint.	
By \eqref{eq: decomposition1} and \eqref{eq: decomposition2}, 
it follows 
\begin{align}
& \sum_{pqr\in T_H}c_{pqr} \left(1 - x_{pq}x_{pr}x_{qr}\right) - \sum_{pqr\in T_{\delta(R_{\mathcal{R}'}, S_H \setminus R_{\mathcal{R}'})}\cap T_H}c_{pqr} 
\\ 
= & \sum_{RR'R''\in \tbinom{\mathcal{R'}}{3}}t_{RR'R''} + \sum_{RR'R''\in \tbinom{\mathcal{R}_H\setminus \mathcal{R'}}{3}}t_{RR'R''}
+ \sum_{RR'\in \tbinom{\mathcal{R'}}{2}}t_{RR'} + \sum_{RR'\in \tbinom{\mathcal{R}_H\setminus \mathcal{R'}}{2}}t_{RR'}
\enspace . 
\label{eq:subgraph-triplet-difference}
\end{align}
\noindent
Combining \eqref{eq:subgraph-simplification-2}, \eqref{eq:subgraph-edge-difference} and \eqref{eq:subgraph-triplet-difference} 
yields
\begin{align}
0 < 
& \sum_{pq\in P_H} c_{pq}(1-x_{pq}) 
+ \sum_{pqr\in T_H} c_{pqr}(1-x_{pq}x_{pr}x_{qr}) 
- \sum_{pq\in \delta(R_{\mathcal{R}'}, S_H \setminus R_{\mathcal{R}'})} \hspace{-3ex} c_{pq} 
- \sum_{pqr\in T_{\delta(R_{\mathcal{R}'}, S_H \setminus R_{\mathcal{R}'})}\cap T_H} \hspace{-4ex} c_{pqr}
\\
& = \sum_{RR'\in \tbinom{\mathcal{R'}}{2}}p_{RR'} + \sum_{RR'\in \tbinom{\mathcal{R}_H\setminus \mathcal{R'}}{2}}p_{RR'} + \sum_{RR'R''\in \tbinom{\mathcal{R'}}{3}}t_{RR'R''} + \sum_{RR'R''\in \tbinom{\mathcal{R}_H\setminus \mathcal{R'}}{3}}t_{RR'R''}
\\
& + \sum_{RR'\in \tbinom{\mathcal{R'}}{2}}t_{RR'} + \sum_{RR'\in \tbinom{\mathcal{R}_H\setminus \mathcal{R'}}{2}}t_{RR'} =: S_{\mathcal{R}'}
\enspace .
\end{align}
Let $k = \vert \mathcal{R}_H\vert$, and $S_{\mathcal{R}'}$ the right-hand side of the last inequality. 
Recall that $R_1, R_2 \in \mathcal{R}_H$, $R_1 \in \mathcal R'$, and $R_2 \notin \mathcal R'$. 
As $S_{\mathcal{R}'} > 0$, it follows that at least one of the sums in its definition must not be vacuous. 
Moreover, since its sums are indexed by pairs or triplets of subsets all belonging either to $\mathcal R'$ or to $\mathcal R_H \setminus \mathcal R'$, we observe that there must exist at least another subset 
of elements in $\mathcal R_H$ different from $R_1$ and $R_2$. 
Hence, $k \geq 3$. 
We calculate 
\begin{equation}
\mathcal S = \sum_{\substack{\mathcal{R'}\subseteq \mathcal{R}_H\colon \\R_1\in \mathcal{R}', R_2\not\in \mathcal{R}'}}S_{\mathcal{R}'}.
\end{equation}
We need this in order to contradict $\max_{x\in \cp_{S_H}}\phi_{c'}(x) = 0$. 
For any $RR'\in \tbinom{\mathcal{R}_H}{2}\setminus \{R_1R_2\}$, there are exactly $2^{k-3}$ 
subsets $\mathcal{R}'\subseteq \mathcal{R}_H$ such that $p_{RR'}$ or $t_{RR'}$ occurs in $S_{\mathcal{R}'}$ and $R_1\in \mathcal{R}', R_2 \not\in \mathcal{R}'$. 
There is no $\mathcal{R}'\subseteq \mathcal{R}_H$ such that $p_{R_1R_2}$ or $t_{R_1R_2}$ occurs in $S_{\mathcal{R}'}$ with $R_1\in \mathcal{R}', R_2\not\in \mathcal{R}'$. 
For any $RR'R''\in \tbinom{\mathcal{R}_H}{3}\setminus\left\{R_1R_2R\mid R\in \mathcal{R}_H\setminus \{R_1, R_2\}\right\}$, there are exactly $\lfloor 2^{k-4} \rfloor$ subsets $\mathcal{R}'\subseteq \mathcal{R}_H$ such that $t_{RR'R''}$ occurs in $S_{\mathcal{R}'}$ and $R_1\in \mathcal{R}', R_2\not\in \mathcal{R}'$. 
There is no $\mathcal{R}'\subseteq \mathcal{R}_H$ such that $t_{R_1R_2R}$ occurs in $S_{\mathcal{R}'}$ for any $R\in \mathcal{R}_H\setminus \{R_1, R_2\}$ for which $R_1\in \mathcal{R}', R_2\not\in \mathcal{R}'$.
Therefore, 
\begin{align}
0 < \mathcal S 
& = 2^{k-3}\sum_{RR'\in \tbinom{\mathcal{R}_H}{2}}p_{RR'} - 2^{k-3}p_{R_1R_2} + \lfloor 2^{k-4} \rfloor \sum_{RR'R''\in \tbinom{\mathcal{R}_H}{3}}t_{RR'R''} - \lfloor 2^{k-4} \rfloor \sum_{\substack{R\in \mathcal{R}_H \\ R \not\in \{R_1, R_2\}}}t_{R_1R_2R}
\\
& \qquad + 2^{k-3}\sum_{RR'\in \tbinom{\mathcal{R}_H}{2}}t_{RR'} - 2^{k-3}t_{R_1R_2} 
\\
& = 2^{k-3}\sum_{RR'\in \tbinom{\mathcal{R''}}{2}}p_{RR'}+ \lfloor 2^{k-4} \rfloor \sum_{RR'R''\in \tbinom{\mathcal{R''}}{3}}t_{RR'R''} + 2^{k-3}\sum_{RR'\in \tbinom{\mathcal{R''}}{2}}t_{RR'} = 2^{k-3}\phi_{c'}(x^{\mathcal{R''}})
\enspace ,
\end{align}
where $\mathcal{R}'' = \left(\mathcal{R}_H\setminus \{R_1, R_2\}\right)\cup \{R_1 \cup  R_2\}$ is the partition obtained by merging $R_1$ and $R_2$. 
The last equality follows from Lemma~\ref{lemma:subgraph-helping-lemma-1}. 
That contradicts $\max_{x\in \cp_{S_H}}\phi_{c'}(x) = 0$. 
Therefore, this implies that there exists a subset $R \subset S_H$ with $i \in R$ and $j \in S_H \setminus R$ such that inequality \eqref{eq:subgraph-simplification-1} is fulfilled.
	
Let $R \subset S_H$ be a subset such that \eqref{eq:subgraph-simplification-1} holds. Therefore, we have that
\begin{align}
\phi_c(x') - \phi_c(x) 
& \overset{\eqref{eq:subgraph-plugin-map-1}}{=} \sum_{pq\in P_H} c_{pq}(1- x_{pq}) + \sum_{pqr\in T_H}c_{pqr}(1-x_{pq}x_{pr}x_{qr}) - \sum_{pq\in \delta(S_H)}c_{pq}x_{pq} 
\\
& \qquad - \sum_{pqr\in T_{\delta(S_H)}}c_{pqr}x_{pq}x_{pr}x_{qr} 
\\
& \overset{\eqref{eq:subgraph-simplification-1}}{\leq} \sum_{pq\in \delta(R, S_H\setminus R)}c_{pq} + \sum_{pqr\in T_{\delta(R, S_H\setminus R)}\cap T_H} c_{pqr} - \sum_{pq\in \delta(S_H)\cap P^-}c_{pq}- \sum_{pqr\in T_{\delta(S_H)}\cap T^-}c_{pqr} 
\\
& \overset{\eqref{eq:assumption-subgraph-criterion-uv-cuts}}{\leq} \;0.
\end{align}
Consequently, the map $p$ is improving. 
By applying Corollary~\ref{lemma:persistency-variable}, we conclude the proof.
\end{proof}


We are unaware of an efficient method for finding subsets $S_H \subseteq S$ and $R\subseteq S$ for which \eqref{eq:assumption-subgraph-criterion-uv-cuts} are satisfied. For subsets $S_H$ with $\vert S_H \vert \in \{2,3\}$, two corollaries of  \Cref{lemma:general-subgraph-edge-join} provide efficiently-verifiable partial optimality conditions:

\begin{corollary}
	\label{corollary:edge-subgraph-edge-join}
	Let $S \neq \emptyset$, $c\in \mathbb{R}^{\mathcal{I}_S}$ and $ij \in \tbinom S2$. 
	If  
	\begin{equation}
		c_{ij} \leq \sum_{pq\in \delta(ij)\cap P^-}c_{pq} + \sum_{pqr\in T_{\delta(ij)}\cap T^-}c_{pqr}
	\end{equation}
	then there exists an optimal solution $x^*$ to \csp{S}{c} such that $x^*_{ij} = 1$.
\end{corollary}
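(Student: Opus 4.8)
The plan is to apply \Cref{lemma:general-subgraph-edge-join} with the minimal choice $S_H = \{i, j\}$, for which all quantities in that proposition collapse to elementary expressions. First I would record that, since $\tbinom{\{i,j\}}{3} = \emptyset$, the transformed costs defined in \eqref{eq: defcprime1}--\eqref{eq: defcprime3} reduce to $c'_\emptyset = c_{ij}$ and $c'_{ij} = -c_{ij}$, with no triple costs at all. The feasible set $\cp_{S_H}$ then contains only the two vectors determined by $x_{ij} \in \{0,1\}$, and a direct evaluation gives $\phi_{c'}(x) = c_{ij}(1 - x_{ij})$, whence $\max_{x\in \cp_{S_H}}\phi_{c'}(x) = c_{ij}^+$.

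Next I would observe that the right-hand side of the corollary's hypothesis is a sum of nonpositive terms (being restricted to $P^-$ and $T^-$), so the assumption $c_{ij} \leq \sum_{pq\in \delta(ij)\cap P^-}c_{pq} + \sum_{pqr\in T_{\delta(ij)}\cap T^-}c_{pqr}$ forces $c_{ij} \leq 0$. Consequently $c_{ij}^+ = 0$, which establishes the first precondition $\max_{x\in \cp_{S_H}}\phi_{c'}(x) = 0$ of \Cref{lemma:general-subgraph-edge-join}.

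It then remains to check the cut inequality \eqref{eq:assumption-subgraph-criterion-uv-cuts} for every $R \subseteq S_H$ with $i \in R$ and $j \in S_H \setminus R$. As $S_H = \{i, j\}$, the only such subset is $R = \{i\}$; then $\delta(R, S_H \setminus R) = \{ij\}$ and $T_{\delta(R, S_H\setminus R)} \cap \tbinom{S_H}{3} = \emptyset$, so the right-hand side of \eqref{eq:assumption-subgraph-criterion-uv-cuts} equals $c_{ij}$. Since $\delta(S_H) = \delta(ij)$, the left-hand side is precisely $\sum_{pq\in \delta(ij)\cap P^-}c_{pq} + \sum_{pqr\in T_{\delta(ij)}\cap T^-}c_{pqr}$, and the inequality becomes exactly the stated hypothesis. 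Applying \Cref{lemma:general-subgraph-edge-join} then delivers an optimal $x^*$ with $x^*_{ij} = 1$.

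I expect no genuine obstacle here: the entire content is the bookkeeping showing that the two hypotheses of \Cref{lemma:general-subgraph-edge-join} degenerate to the single stated inequality when $|S_H| = 2$. The one point warranting care is confirming that the stated condition already guarantees $c_{ij} \leq 0$, since otherwise the precondition $\max_{x\in \cp_{S_H}}\phi_{c'}(x) = 0$ could fail and the reduction would not go through.
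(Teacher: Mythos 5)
Your proposal is correct and is exactly the derivation the paper intends: the corollary is stated as the specialization of \Cref{lemma:general-subgraph-edge-join} to $\vert S_H\vert = 2$, and your instantiation $S_H = \{i,j\}$, with the computation $\phi_{c'}(x) = c_{ij}(1-x_{ij})$ and the observation that the hypothesis forces $c_{ij}\leq 0$ (hence $\max_{x\in \cp_{S_H}}\phi_{c'}(x) = c_{ij}^+ = 0$), together with the reduction of \eqref{eq:assumption-subgraph-criterion-uv-cuts} to the single case $R=\{i\}$, fills in precisely the bookkeeping the paper leaves implicit. No gaps.
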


\begin{corollary}
	\label{corollary:triplet-subgraph-edge-join}
	Let $S \neq \emptyset$, $c\in \mathbb{R}^{\mathcal{I}_S}$ and $ijk \in \tbinom S3$. 
	If 
	\begin{align}
		c_{ij} + c_{ik} & \leq 0 
		\\
		c_{ij} + c_{jk} & \leq 0 
		\\
		c_{ik} + c_{jk} & \leq 0
		\\
		c_{ij} + c_{ik} + c_{jk} & \leq 0
		\\
		c_{ij} + c_{ik} + c_{jk} + \frac 12 c_{ijk} & \leq 0 
		\\
		c_{ij} + c_{ik} + c_{ijk} 
		& \leq \sum_{pq \in \delta(ijk) \cap P^-} c_{pq}
		+ \sum_{pqr \in T_{\delta(ijk)}\cap T^-} c_{pqr}
		\\
		c_{jk} + c_{ik} + c_{ijk} 
		& \leq \sum_{pq \in \delta(ijk) \cap P^-} c_{pq}
		+ \sum_{pqr \in T_{\delta(ijk)} \cap T^-} c_{pqr}
	\end{align}
	then there exists an optimal solution $x^*$ to \csp{S}{c} such that $x^*_{ik} = 1$.
\end{corollary}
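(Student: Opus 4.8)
The plan is to derive \Cref{corollary:triplet-subgraph-edge-join} directly from \Cref{lemma:general-subgraph-edge-join} by applying it to the smallest nontrivial induced subproblem, $S_H = \{i,j,k\}$, with the pair to be joined taken to be $ik$ (so the proposition's ``$ij$'' is played by $ik$, and the separating subsets are those $R\subseteq S_H$ with $i\in R$ and $k\in S_H\setminus R$). With this choice the proposition yields an optimal $x^*$ with $x^*_{ik}=1$ as soon as its two hypotheses are met, so the entire task reduces to verifying (a) $\max_{x\in \cp_{S_H}}\phi_{c'}(x)=0$ and (b) the cut inequality \eqref{eq:assumption-subgraph-criterion-uv-cuts} for each separating $R$.

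For (a) I would first specialise $c'$ from \eqref{eq: defcprime1}--\eqref{eq: defcprime3}. Because $\tbinom{S_H}{3}=\{ijk\}$ and $\tbinom{S_H}{2}=\{ij,ik,jk\}$, these collapse to $c'_\emptyset=\tfrac12 c_{ijk}+c_{ij}+c_{ik}+c_{jk}$, to $c'_{pq}=-c_{pq}+\tfrac12 c_{ijk}$ for each of the three pairs, and to $c'_{ijk}=-2c_{ijk}$. The feasible set $\cp_{S_H}$ is in bijection with the five partitions of $\{i,j,k\}$, so \Cref{lemma:subgraph-helping-lemma-1}, applied with $S=S_H$, lets me evaluate $\phi_{c'}$ on each of them. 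The fully joined partition gives $\phi_{c'}=0$, so $\max_{x\in\cp_{S_H}}\phi_{c'}(x)\ge 0$ automatically, and forcing the maximum to equal $0$ is exactly the demand that $\phi_{c'}$ be non-positive on the four remaining partitions: the three partitions with a single block of size two, and the fully separated partition (the latter being the one that produces the $\tfrac12 c_{ijk}$ term). I expect these four non-positivity constraints to reproduce the join-type inequalities listed among the hypotheses of the corollary.

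For (b) I would enumerate the subsets $R\subseteq S_H$ separating $i$ from $k$; there are exactly two, namely $R=\{i\}$ and $R=\{i,j\}$. For both, the left-hand side of \eqref{eq:assumption-subgraph-criterion-uv-cuts} is the same fixed boundary penalty $\sum_{pq\in\delta(ijk)\cap P^-}c_{pq}+\sum_{pqr\in T_{\delta(ijk)}\cap T^-}c_{pqr}$, using $\delta(S_H)=\delta(ijk)$, matching the right-hand sides of the last two hypotheses. The right-hand side of \eqref{eq:assumption-subgraph-criterion-uv-cuts} is the cost of the corresponding inner cut; since the unique triple $ijk$ is cut in both cases, this evaluates to $c_{ij}+c_{ik}+c_{ijk}$ for $R=\{i\}$ and to $c_{ik}+c_{jk}+c_{ijk}$ for $R=\{i,j\}$, giving precisely the two cut hypotheses.

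The proof is then the conjunction of these checks followed by an invocation of \Cref{lemma:general-subgraph-edge-join}. I expect the main obstacle to lie entirely in part (a): one must evaluate $\phi_{c'}$ correctly on all five partitions, track the half-integer coefficients of $c_{ijk}$ introduced by \eqref{eq: defcprime2}, and confirm that the resulting non-positivity conditions line up with the stated join inequalities. Keeping the triple's coefficient straight in the two-block partitions is the step most prone to error, and is where I would spend most of the care.
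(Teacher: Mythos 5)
Your plan follows the route the paper intends (specializing \Cref{lemma:general-subgraph-edge-join} to $S_H=\{i,j,k\}$ with the proposition's pair played by $ik$), and your part (b) is executed correctly: the separating subsets are exactly $R=\{i\}$ and $R=\{i,j\}$, whose inner-cut costs $c_{ij}+c_{ik}+c_{ijk}$ and $c_{ik}+c_{jk}+c_{ijk}$ reproduce the last two hypotheses. The gap sits precisely in part (a), the step you postponed. Evaluating $\phi_{c'}$ on the five partitions of $\{i,j,k\}$ (via \Cref{lemma:subgraph-helping-lemma-1}) gives $0$ on the fully joined partition, $c_{ij}+c_{ik}+c_{jk}+\tfrac12 c_{ijk}$ on the fully separated one, and on the three two-block partitions the values $c_{ik}+c_{jk}+c_{ijk}$, $c_{ij}+c_{ik}+c_{ijk}$ and $c_{ij}+c_{jk}+c_{ijk}$: each carries the \emph{full} coefficient of $c_{ijk}$, because a triple split $2$--$1$ across two blocks enters \eqref{eq:partition-subgraph-helper-2} with weight $1$, not $\tfrac12$ and not $0$. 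Hence $\max_{x\in\cp_{S_H}}\phi_{c'}(x)=0$ demands, among other things, $c_{ij}+c_{jk}+c_{ijk}\le 0$ (the partition $\{\{i,k\},\{j\}\}$, i.e.\ the one that joins the target pair). This inequality is neither among the corollary's hypotheses nor implied by them: take $S=\{i,j,k\}$, $c_{ij}=c_{jk}=0$, $c_{ik}=-10$, $c_{ijk}=5$; then all seven hypotheses hold (the right-hand sides of the last two vanish since $\delta(ijk)=\emptyset$), yet $c_{ij}+c_{jk}+c_{ijk}=5>0$, so condition (a) of \Cref{lemma:general-subgraph-edge-join} fails. (The corollary's conclusion still holds in this instance --- the optimum is $\{\{i,k\},\{j\}\}$ --- so this refutes the derivation, not the statement.) Your expectation that the four non-positivity constraints ``reproduce the join-type inequalities listed among the hypotheses'' is therefore false, and the two-block partitions whose $c_{ijk}$-coefficient you flagged as error-prone are exactly where the plan breaks.

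The repair is to rerun the proposition's improving-map argument for this special case rather than invoke its statement. Define $\sigma(x)=x$ if $x_{ik}=1$ and $\sigma(x)=(\sigma_{ijk}\circ\sigma_{\delta(ijk)})(x)$ otherwise, and set $B:=\sum_{pq\in\delta(ijk)\cap P^-}c_{pq}+\sum_{pqr\in T_{\delta(ijk)}\cap T^-}c_{pqr}\le 0$. For $x$ with $x_{ik}=0$, the change of objective is at most the internal join cost of $\{i,j,k\}$ minus $B$, and that join cost equals $c_{ik}+c_{jk}+c_{ijk}$, $c_{ij}+c_{ik}+c_{ijk}$, or $c_{ij}+c_{ik}+c_{jk}+c_{ijk}$, according to whether the restriction of $x$ to $\{i,j,k\}$ joins $ij$, joins $jk$, or separates all three elements. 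The first two cases are $\le B$ by the last two hypotheses; in the third case, the hypothesis $c_{ij}+c_{jk}\le 0$ forces $c_{ij}\le 0$ or $c_{jk}\le 0$, so the join cost is bounded by one of the first two values and again $\le B$. Note that this argument uses only $c_{ij}+c_{jk}\le 0$ and the two boundary hypotheses --- the maximum-of-$\phi_{c'}$ condition, where your missing inequality would be needed, never enters, because the improving map is only ever applied to $x$ with $x_{ik}=0$.
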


We now present the last main partial optimality condition of this article.
It observes that separating a whole subset $R$ from the rest and then joining everything in it yields a better objective value if, for every pair $ij\in \tbinom R2$ and any partition of $S$ that separates $i$ from $j$, the sum of costs of separated pairs and triples within $R$ is at most the sum of costs of joined pairs and triples cut by $R$.
See also~\Cref{fig:illustration-subset-join}b.
\begin{proposition}
	\label{proposition:subset-join-proposition}
	Let $S \neq \emptyset$ and $c\in \mathbb{R}^{\mathcal{I}_S}$. 
	Moreover, let $R \subseteq S$. 
	If for every $ij \in \tbinom R2$ we have 
	{
		\begin{align}
			&\max_{\substack{x\in \cp_S \\ x_{ij} = 0}} \Bigl\{ \sum_{pqr \in \tbinom R3} c_{pqr}(1-x_{pq}x_{pr}x_{qr}) + \sum_{pq\in \tbinom R2}c_{pq}(1-x_{pq}) \Bigr\} 
			\\ 
			\label{eq:subset-join-inequality}
			\leq \quad &\min_{\substack{x\in \cp_S \\x_{ij} = 0}} \Bigl\{ \sum_{pqr\in T_{\delta(R)}}c_{pqr}x_{pq}x_{pr}x_{qr} + \sum_{pq\in \delta(R)}c_{pq}x_{pq} \Bigr\}
		\end{align}
	}%
	then there is an optimal solution $x^*$ to \csp{S}{c} such that $\forall ij \in \tbinom R2 \colon x^*_{ij} = 1$.
\end{proposition}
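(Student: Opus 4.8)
The plan is to follow the improving-map template used throughout this section and apply \Cref{lemma:persistency-predicate} with the target set $Q = \{x \in \cp_S \mid x_{pq} = 1 \ \forall pq \in \tbinom R2\}$. The natural map leaves $x$ untouched whenever $x \in Q$, and otherwise first cuts $R$ from its complement and then collapses $R$ into a single block; that is, I would set $\sigma(x) := (\sigma_R \circ \sigma_{\delta(R)})(x)$ for $x \notin Q$. A short check of the two elementary maps shows that the composite $x' = (\sigma_R \circ \sigma_{\delta(R)})(x)$ satisfies $x'_{pq} = 1$ for $pq \in \tbinom R2$, $x'_{pq} = 0$ for $pq \in \delta(R)$, and $x'_{pq} = x_{pq}$ otherwise, so indeed $\sigma(x) \in Q$ for every $x \in \cp_S$.

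The core is to verify that $\sigma$ is improving, for which it suffices to treat $x \notin Q$. Here I would use the partition of the triples of $S$ into $\tbinom R3$ (entirely inside $R$), $T_{\delta(R)}$ (straddling $R$), and $\tbinom{S\setminus R}3$ (entirely outside $R$), together with the analogous split of pairs into $\tbinom R2$, $\delta(R)$, and $\tbinom{S\setminus R}2$. Since only terms touching $R$ or $\delta(R)$ change, and since every straddling pair --- hence at least one pair of every triple in $T_{\delta(R)}$ --- is set to $0$ by the cut, the objective difference collapses to
\begin{align*}
\phi_c(x') - \phi_c(x) = A(x) - B(x),
\end{align*}
where
\begin{align*}
A(x) &:= \sum_{pqr \in \tbinom R3} c_{pqr}(1 - x_{pq}x_{pr}x_{qr}) + \sum_{pq \in \tbinom R2} c_{pq}(1 - x_{pq}), \\
B(x) &:= \sum_{pqr \in T_{\delta(R)}} c_{pqr} x_{pq}x_{pr}x_{qr} + \sum_{pq \in \delta(R)} c_{pq} x_{pq}.
\end{align*}

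It then remains to show $A(x) \leq B(x)$. Because $x \notin Q$, there is some $ij \in \tbinom R2$ with $x_{ij} = 0$, and this $x$ is admissible in both optimization problems in \eqref{eq:subset-join-inequality} for that particular $ij$. Consequently $A(x) \leq \max_{x'' : x''_{ij}=0} A(x'')$ and $B(x) \geq \min_{x'' : x''_{ij}=0} B(x'')$, and assumption \eqref{eq:subset-join-inequality} for $ij$ chains these into $A(x) \leq B(x)$, giving $\phi_c(x') - \phi_c(x) \leq 0$. Applying \Cref{lemma:persistency-predicate} with $Q$ finishes the argument. I expect the only delicate point to be the bookkeeping in the objective difference --- specifically confirming that every triple in $T_{\delta(R)}$ really has at least one of its three pairs cut, so that its product vanishes under $x'$, and that no term outside $R \cup \delta(R)$ contributes --- after which the reduction to $A \le B$ and the max/min sandwich is immediate.
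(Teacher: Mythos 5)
Your proposal is correct and coincides with the paper's own proof: the same map (identity on $Q$, otherwise $\sigma_R \circ \sigma_{\delta(R)}$), the same description of $x'$, the same decomposition of $\phi_c(x') - \phi_c(x)$ into the inner-$R$ terms minus the straddling terms, and the same max/min sandwich via assumption \eqref{eq:subset-join-inequality}, concluded by \Cref{lemma:persistency-predicate}. The bookkeeping point you flag (every triple in $T_{\delta(R)}$ has a pair in $\delta(R)$, hence its product vanishes under $x'$) is exactly what the paper relies on as well, so there is no gap.
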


\ifthenelse{\boolean{proofs}}{
\begin{proof}
We define $\sigma \colon \cp_S \to \cp_S$ such that 
\begin{equation}
\sigma(x) := \begin{cases}
x & \textnormal{if $x_{ij} = 1 \; \forall ij \in \tbinom R2$}\\
\left(\sigma_R \circ \sigma_{\delta(R)}\right)(x) & \textnormal{otherwise}
\end{cases}.
\end{equation}
Let $x' = \sigma(x)$ for every $x \in \cp_S$. 
Firstly, we have $x'_{ij} = 1$ for every $ij \in \tbinom R2$. 
Secondly, we show that $\sigma$ is an improving map. 
Let $x \in \cp_S$ such that $x_{ij} = 1$ for all $ij \in \tbinom R2$.
In this case, we have $\phi_c(x') = \phi_c(x)$ by definition of $x'$. 
Now, let us consider the complementary case, i.e.~let $x \in \cp_S$ such that there exists $ij \in \tbinom R2$ for which $x_{ij} = 0$. 
Then,
\begin{equation}
x'_{pq} = \begin{cases}
1 & \textnormal{if $pq \in \tbinom R2$}\\
0 & \textnormal{if $pq \in \delta(R)$}\\
x_{pq} & \textnormal{otherwise}
\end{cases}
\enspace .
\end{equation} 
Therefore, it follows that
\begin{align}
\phi_c(x') - \phi_c(x) =
& \sum_{pq\in \tbinom R2} c_{pq}(1-x_{pq}) 
- \sum_{pq\in \delta(R)} c_{pq}x_{pq} + \sum_{pqr \in \tbinom R3} c_{pqr}(1-x_{pq}x_{pr}x_{qr}) 
\\
& \qquad - \sum_{pqr\in T_{\delta(R)}} c_{pqr}x_{pq}x_{pr}x_{qr}   
\\
\leq & \max_{\substack{x\in \cp_S \\ x_{ij} = 0}} \Bigl\{ \sum_{pqr\in \tbinom R2} c_{pqr}(1-x_{pq}x_{pr}x_{qr}) + \sum_{pq\in \tbinom R2} c_{pq}(1-x_{pq}) \Bigr\} \\
& - \min_{\substack{x\in \cp_S \\ x_{ij} = 0}}\Bigl\{ \sum_{pqr\in T_{\delta(R)}} c_{pqr}x_{pq}x_{pr}x_{qr} + 
\sum_{pq\in \delta(R)}c_{pq}x_{pq} \Bigr\} 
\\
\overset{\eqref{eq:subset-join-inequality}}{\leq} & 0
\enspace .
\end{align}
This concludes the proof.
\end{proof}
}{
In the proof, we apply the identity to feasible vectors that satisfy our claim, and apply $\sigma = \sigma_{R} \circ \sigma_{\delta(R)}$, otherwise.
}

The above condition, together with its subsequent corollary, has been established independently of prior work on the linear correlation clustering problem. 
We are unaware of an efficient method for checking \eqref{eq:subset-join-inequality} for arbitrary subsets $R \subseteq S$ and costs $c \in \mathbb{R}^{\mathcal{I}_S}$. 
Yet, \Cref{corollary:subset-join-all-pairs-at-once} below describes one setting in which a suitable subset can be searched for heuristically, in polynomial time. 
Specifically, the objective function $c\in \mathbb{R}^{\mathcal{I}_S}$ needs to be such that $c_{pq}\leq 0$ for all $pq\in \tbinom R2$ and $c_{pqr} \leq 0$ for all $pqr\in \tbinom R3$.
An intuition for this corollary is as follows. 
For a moment let us consider a fixed subset of items $R$ and consider all the possible ways in which we could divide $R$ in two parts. 
Let us recall that the costs of all the triples and pairs inside of $R$ are non-positive. 
If the worst possible cost of joining these two parts of $R$ back together is still less than or equal to the reward obtained by joining $R$ with the rest, then we can safely start by putting all the objects of $R$ in the same set and decide independently whether or not to join $R$ with other sets.

\begin{corollary}
	\label{corollary:subset-join-all-pairs-at-once}
	Let $S \neq \emptyset$ and $c\in \mathbb{R}^{\mathcal{I}_S}$. 
	Moreover, let $R \subseteq S$. 
	If 
	\begin{align}
	c_{pq} & \leq 0 & \forall pq \in \tbinom R2 
	\label{eq:cond-submod-1}
	\\
	c_{pqr} & \leq 0 & \forall pqr\in \tbinom R3
	\label{eq:cond-submod-2}
	\end{align}
	and
	\begin{equation}
		\max_{\substack{R'\subset R \\ R' \neq \emptyset}} \Bigl\{ \sum_{pqr\in T_{\delta(R')}\cap \tbinom R3}c_{pqr} + \sum_{pq\in \delta(R', R\setminus R')}c_{pq} \Bigr\} 
		\leq \quad \sum_{pqr\in T_{\delta(R)}\cap T^-}c_{pqr} + \sum_{pq\in \delta(R)\cap P^-}c_{pq} \enspace ,
		\label{eq:subset-join-corollary-all-pairs-condition}
	\end{equation}
	then there is an optimal solution $x^*$ to \csp{S}{c} such that $x^*_{ij} = 1$, $\forall ij\in \tbinom R2$.
\end{corollary}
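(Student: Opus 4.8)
The plan is to deduce this corollary from \Cref{proposition:subset-join-proposition} by showing that the sign hypotheses \eqref{eq:cond-submod-1}, \eqref{eq:cond-submod-2} together with \eqref{eq:subset-join-corollary-all-pairs-condition} imply inequality \eqref{eq:subset-join-inequality} for every $ij \in \tbinom R2$. To this end I would fix an arbitrary $ij \in \tbinom R2$ and establish a four-term chain: the left-hand side of \eqref{eq:subset-join-inequality} is at most the left-hand side of \eqref{eq:subset-join-corollary-all-pairs-condition}, which is at most its own right-hand side (this middle step being exactly the hypothesis \eqref{eq:subset-join-corollary-all-pairs-condition}), which in turn is at most the right-hand side of \eqref{eq:subset-join-inequality}. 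Chaining these yields \eqref{eq:subset-join-inequality}, and \Cref{proposition:subset-join-proposition} then gives the claim.

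First I would treat the left-hand side of \eqref{eq:subset-join-inequality}. Since the maximand $\sum_{pq\in\tbinom R2}c_{pq}(1-x_{pq}) + \sum_{pqr\in\tbinom R3}c_{pqr}(1-x_{pq}x_{pr}x_{qr})$ involves only variables indexed by pairs inside $R$, it depends solely on the partition of $R$ induced by $x$; as every partition of $R$ is realizable by a feasible vector $x \in \cp_S$, and the constraint $x_{ij}=0$ selects exactly the partitions separating $i$ and $j$, the maximum equals the largest value of $\sum_{pq \text{ cut}} c_{pq} + \sum_{pqr \text{ cut}} c_{pqr}$ over such partitions. Because all these costs are non-positive by \eqref{eq:cond-submod-1} and \eqref{eq:cond-submod-2}, coarsening any separating partition into a bipartition $(R', R\setminus R')$ with $i\in R'$ and $j\in R\setminus R'$ only un-cuts pairs and triples and never creates new cuts, hence does not decrease the objective. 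So the maximum is attained at such a bipartition, and its value $\sum_{pqr\in T_{\delta(R', R\setminus R')}\cap\tbinom R3}c_{pqr} + \sum_{pq\in\delta(R', R\setminus R')}c_{pq}$ is at most the maximum over \emph{all} nonempty $R'\subset R$, i.e.\ the left-hand side of \eqref{eq:subset-join-corollary-all-pairs-condition}.

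Next I would treat the right-hand side of \eqref{eq:subset-join-inequality}. Here I claim the term-by-term bound $\sum_{pqr\in T_{\delta(R)}\cap T^-}c_{pqr} + \sum_{pq\in\delta(R)\cap P^-}c_{pq} \le \sum_{pqr\in T_{\delta(R)}}c_{pqr}x_{pq}x_{pr}x_{qr} + \sum_{pq\in\delta(R)}c_{pq}x_{pq}$ for \emph{every} $x \in \cp_S$: for a single pair or triple the relevant product lies in $\{0,1\}$, so a negative cost contributes at most its full value on the right (attained when the product equals $1$) while a non-negative cost contributes at least $0$, matching the left summand. Consequently the right-hand side of \eqref{eq:subset-join-corollary-all-pairs-condition} is a lower bound valid for every feasible vector, hence a lower bound on the constrained minimum in \eqref{eq:subset-join-inequality}, which closes the chain.

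The main obstacle I anticipate is the first step: rigorously justifying that the maximum over separating partitions is attained at a bipartition. This is precisely where \eqref{eq:cond-submod-1} and \eqref{eq:cond-submod-2} are indispensable, and care is needed to verify that merging all clusters other than the one containing $i$ into a single part keeps $i$ and $j$ separated and only removes cut pairs and triples of non-positive cost, so that the objective genuinely does not decrease. The remaining two inequalities are comparatively routine, as each merely compares an extremum over a set with the same extremum over a superset.
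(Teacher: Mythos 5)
Your proposal is correct and follows essentially the same route as the paper: both reduce the left-hand side of \eqref{eq:subset-join-inequality} to a maximum over bipartitions of $R$ separating $i$ from $j$ (using the non-positivity of costs inside $R$ to justify coarsening any separating partition into a bipartition), bound the right-hand side from below by the trivial term-by-term bound $\sum_{pqr\in T_{\delta(R)}\cap T^-}c_{pqr} + \sum_{pq\in \delta(R)\cap P^-}c_{pq}$, and chain these through hypothesis \eqref{eq:subset-join-corollary-all-pairs-condition} before invoking \Cref{proposition:subset-join-proposition}. The step you flag as the main obstacle is handled in the paper by exactly the refinement argument you sketch.
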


The previous corollary follows from the following two facts. 
Let \eqref{eq:cond-submod-1} and \eqref{eq:cond-submod-2} be satisfied. 
First of all, for any $ij \in \tbinom R2$, we have that the left-hand side of \eqref{eq:subset-join-inequality} is equal to 
\begin{align}
\max_{\substack{R'\subsetneq R \\ i \in R' \\ j \not\in R'}}\sum_{pqr \in T_{\delta(R')}\cap \tbinom R3}c_{pqr} + \sum_{pq \in \delta(R', R \setminus R')} c_{pq}
\enspace .
\end{align}
I.e., the maximizer is given by a feasible $x\in \cp_S$ whose restriction to $\tbinom R2$ corresponds to a partition $\mathcal{R}$ of 
$R$ into two subsets. 
To see this, note that for any $ij\in \tbinom R2$, instead of maximizing the left-hand side of \eqref{eq:subset-join-inequality} over $x\in X_S$ such that $x_{ij} = 0$ we can equivalently maximize over all $x'\in X_R$ such that $x'_{ij} = 0$. 
Now, let us assume that there exists $ij \in \tbinom R2$ such that the maximizer is given by a feasible $x' \in \cp_R$ corresponding to a partition $\mathcal{R}'$ of 
$R$ into more than two subsets.
Without loss of generality, let $R_1, R_2 \in \mathcal{R}'$ such that $i\in R_1$, $j\in R_2$. 
Then, the vector $x'' \in \cp_R$ corresponding to the partition $\mathcal{R}'' = \left\{ R_1, \bigcup_{R\in \mathcal{R}' \setminus \{R_1\}} R\right\}$ has objective value at least the objective value of $x'$.
This follows from the facts that all the costs are non-positive and that $\mathcal{R}'$ is a refinement of $\mathcal{R}''$.
Secondly, by using the trivial lower bound of the right-hand side of \eqref{eq:subset-join-inequality} we have that
it is at least 
\begin{align}
\sum_{pqr\in T_{\delta(R)}\cap T^-}c_{pqr} + \sum_{pqr\in \delta(R)\cap P^-}c_{pq}
\enspace .
\end{align}

\section{Efficient Testing of Partial Optimality}\label{sec: testing}

Next, we describe algorithms for testing all the partial optimality conditions introduced above.
This includes exact algorithms and heuristics, and we discuss their runtimes.
We start by examining \Cref{lemma:persistency-subset-separation}. 
\Cref{alg:region-growing-separation} terminates in $O(\vert S\vert^3)$ time and finds a subset $R \subseteq S$ that satisfies \eqref{eq:edge-cut-condition-1}--\eqref{eq:edge-cut-condition-2}. 
Note that \eqref{eq:edge-cut-condition-1}--\eqref{eq:edge-cut-condition-2} hold in particular for the trivial subset $R = S$.
We formalize the correctness of \Cref{alg:region-growing-separation} in \Cref{prop: algo works}.

\begin{proposition}\label{prop: algo works}
Let $S \neq \emptyset$, $c \in \mathbb{R}^{\mathcal{I}_S}$.
Then, \Cref{alg:region-growing-separation} outputs a partition that contains a subset satisfying \eqref{eq:edge-cut-condition-1}--\eqref{eq:edge-cut-condition-2}.
Moreover, if there exists a non-trivial $R \subseteq S$, namely $R \neq \emptyset, S$, such that \eqref{eq:edge-cut-condition-1}--\eqref{eq:edge-cut-condition-2} are satisfied by $R$, then \Cref{alg:region-growing-separation} returns a non-trivial partition.
\end{proposition}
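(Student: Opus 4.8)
The plan is to read off the returned partition as the one whose blocks are the connected components of the \emph{must-join} relation, and to verify both assertions against two structural properties of \Cref{alg:region-growing-separation}. Declare $p,q$ linked whenever $c_{pq} < 0$, and $p,q,r$ mutually linked whenever $c_{pqr} < 0$; let $\mathcal P$ be the partition of $S$ into the classes of the equivalence relation generated by these links. The two properties I would extract from the pseudocode, by a loop invariant over the merges performed, are: \emph{soundness}, that every merge the algorithm carries out is forced, i.e.\ triggered by a negative-cost pair or triple whose members it places in a common block; and \emph{completeness}, that at termination no negative pair and no negative triple crosses a block boundary (region growing stops exactly when no violating pair or triple remains on any boundary). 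Together these identify the output with $\mathcal P$; establishing them rigorously is the bulk of the work, and the only genuinely delicate point is that a negative \emph{triple} forces all three of its members into one block, not merely its pairs.

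For the first assertion, fix a block $R$ of the output and check \eqref{eq:edge-cut-condition-1}--\eqref{eq:edge-cut-condition-2} using completeness. If $pq \in \delta(R)$, then $p$ and $q$ lie in different blocks, so $c_{pq} \geq 0$, for otherwise the negative pair $pq$ would cross a block boundary. If $pqr \in T_{\delta(R)}$, then by definition of $T_{\delta(R)}$ the triple has a pair in $\delta(R)$, hence meets both $R$ and $S \setminus R$, so its three elements are not all in the same block; were $c_{pqr} < 0$, this negative triple would cross a block boundary, contradicting completeness, so $c_{pqr} \geq 0$. Thus every block of the returned partition satisfies the conditions, which is more than enough. (The edge case where the output is the trivial partition $\{S\}$ is consistent, since $R = S$ satisfies the conditions vacuously as $\delta(S) = \emptyset$.)

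For the second assertion, let $R^*$ with $\emptyset \neq R^* \neq S$ satisfy \eqref{eq:edge-cut-condition-1}--\eqref{eq:edge-cut-condition-2}, and argue that $\mathcal P$ refines $\{R^*, S \setminus R^*\}$ using soundness. No link crosses the boundary of $R^*$: a negative pair $pq$ cannot lie in $\delta(R^*)$ by \eqref{eq:edge-cut-condition-1}, so $p,q$ are on the same side; a negative triple $pqr$ cannot lie in $T_{\delta(R^*)}$ by \eqref{eq:edge-cut-condition-2}, so $pqr$ does not meet both sides and hence all three lie on one side. By soundness every merge is triggered by such a link, so by induction no block ever straddles the boundary of $R^*$; consequently each block is contained in $R^*$ or in $S \setminus R^*$. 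Since both sets are non-empty, the output has at least one block inside each, hence at least two blocks, and is therefore non-trivial.

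The main obstacle is thus the first paragraph: pinning down the algorithm's behavior as exactly the soundness-plus-completeness invariant, with correct treatment of triple merges. Once that characterization is in place, both parts reduce to unwinding the definitions of $\delta(R)$ and $T_{\delta(R)}$, together with the observation that a valid cut is respected by every forced merge.
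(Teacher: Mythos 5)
Your proposal is correct and follows essentially the same approach as the paper: your \emph{completeness} property is exactly the paper's observation that the inner loop of \Cref{alg:region-growing-separation} terminates only when no negative pair lies in $\delta(R)$ and no negative triple lies in $T_{\delta(R)}$, so each finalized block satisfies \eqref{eq:edge-cut-condition-1}--\eqref{eq:edge-cut-condition-2}, and your \emph{soundness}-based refinement argument (no merge can cross the boundary of a valid $R^*$, since by \eqref{eq:edge-cut-condition-1}--\eqref{eq:edge-cut-condition-2} no negative pair or triple does) is the same argument the paper phrases as a contradiction with the output being the trivial partition $\{S\}$. The must-join-components characterization you add is a harmless strengthening, not a different route.
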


\ifthenelse{\boolean{proofs}}{
\begin{proof}
We start by observing that \Cref{alg:region-growing-separation} always terminates.
If it returns a non-trivial partition $\mathcal{R}$, then $\mathcal{R}$ contains a subset $R$ that satisfies \eqref{eq:edge-cut-condition-1}--\eqref{eq:edge-cut-condition-2} by construction.
Therefore, let us assume that the output of \Cref{alg:region-growing-separation} is the trivial partition $\mathcal{R} = \{ S \}$.
If indeed there exists no non-trivial subset of $S$ for which \eqref{eq:edge-cut-condition-1}--\eqref{eq:edge-cut-condition-2} hold, then \Cref{alg:region-growing-separation} is returning the correct output.
Next, we consider the case in which there exists a non-trivial subset of $S$ that satisfies the assumptions of \Cref{lemma:persistency-subset-separation}, but \Cref{alg:region-growing-separation} still returns the trivial partition.
We prove that this cannot happen.
Let $R \subseteq S$ be a non-trivial subset of $S$ for which \eqref{eq:edge-cut-condition-1}--\eqref{eq:edge-cut-condition-2} are satisfied.
Note that such a set must exist by the assumptions of this case.
Moreover, we have that both $R$ and $S \setminus R$ are non-empty.
Two cases can arise at this point: \Cref{alg:region-growing-separation} starts either from an element of $R$ or from an item of $S \setminus R$.
Let \Cref{alg:region-growing-separation} start sampling from $R$.
The fact that $\mathcal{R} = \{ S \}$ implies that $\exists pq \in \delta(R)$ such that $c_{pq} < 0$ or $\exists pqr \in T_{\delta(R)}$ such that $c_{pqr} < 0$, by definition of \Cref{alg:region-growing-separation}.
However, this is in contradiction with the assumption that $R$ satisfies \eqref{eq:edge-cut-condition-1}--\eqref{eq:edge-cut-condition-2}.
Since the second scenario is symmetrical, we again reach a contradiction 
by applying an analogous reasoning.
Therefore, we have shown that if there exists a non-trivial subset of $S$ that fulfills \eqref{eq:edge-cut-condition-1}--\eqref{eq:edge-cut-condition-2}, then \Cref{alg:region-growing-separation} finds such a subset.
\end{proof}
}{}

\begin{breakablealgorithm}
	\caption{Region Growing}
	\label{alg:region-growing-separation}
 	\begin{algorithmic}[1]
	\State {\bfseries Input:} $S\neq \emptyset$, $c: \mathcal{I}_S \to \mathbb{R}$
	\State {\bfseries Initialize:} $\mathcal{R} = \{\}$, queue $Q = S$
	\Repeat
	\State $p := Q.pop$
	\State $R = \{p\}$
	\State Initialize $noChange = true$.
	\Repeat
	\State Set $noChange = true$
		\If{$\exists pq\in \delta(R): c_{pq} < 0$}
		\State $R := R \cup \{p, q\}$
		\State remove $p, q$ from $Q$
		\State set $noChange = false$
		\EndIf
		\If{$\exists pqr\in T_{\delta(R)}: c_{pqr} < 0$}
		\State $R := R \cup \{p, q, r\}$
		\State remove $p, q, r$ from $Q$
		\State set $noChange = false$
		\EndIf
	\Until{$noChange$ is $true$}
	\State Add $R$ to $\mathcal{R}$	
	\Until{$Q = \emptyset$}
	\end{algorithmic}
\end{breakablealgorithm}

Partial optimality according to Propositions~\ref{proposition:edge-cut-persistency}--\ref{lemma:persistency-triangle-edge-join} is conditional to the existence of a pair $ij \in \tbinom S2$ or triple $ijk \in \tbinom S3$, together with a subset $R \subseteq S$, and in case of \Cref{lemma:persistency-triangle-edge-join} a second subset $R' \subseteq S$ independent of $R$, such that specific inequalities are satisfied, namely \eqref{eq:assumption-edge-cut-inequality}--\eqref{eq:triangle-edge-join-3}.
For every triple, we test \eqref{eq:triangle-edge-join-3} explicitly, in quadratic time.
For every pair or triple, we reduce the search for subsets $R$ or $R'$ that satisfy \eqref{eq:assumption-edge-cut-inequality}--\eqref{eq:triangle-edge-join-2} \emph{with maximum margin} to the min $st$-cut problem (in Section~\ref{sec:tech-min-cut}).
In order to test for partial optimality efficiently, we solve the dual max $st$-flow problems by the implementation in the C++ library \citet{boost} of the push-relabel algorithm of \citet{goldberg-1988}.

As mentioned already in \Cref{section:partial-optimality-criteria-joins},
we are unaware of an efficient method for finding subsets that satisfy the conditions of \Cref{lemma:general-subgraph-edge-join} or \ref{proposition:subset-join-proposition}.
Regarding \Cref{lemma:general-subgraph-edge-join}, we resort to the special case of \Cref{corollary:edge-subgraph-edge-join} that we test for each pair in quadratic time, and to the special case of \Cref{corollary:triplet-subgraph-edge-join} that we test for each triple in quadratic time.
Regarding \Cref{proposition:subset-join-proposition}, we employ 
the special case of \Cref{corollary:subset-join-all-pairs-at-once} and search heuristically for a witness $R$ of \eqref{eq:subset-join-corollary-all-pairs-condition}, as follows.
In an outer loop, we iterate over all pairs $R = \{i, j\}$ with $c_{ij} \leq 0$.
For each of these initializations of $R$, we add elements to $R$ for which the costs of all pairs and triples inside $R$ is non-positive, greedily considering ones for which the costs of newly considered pairs and triples is minimal.
Upon termination of this inner loop, we take $R$ to be a candidate.
By construction of $R$, all coefficients on the left-hand side of \eqref{eq:subset-join-corollary-all-pairs-condition} are non-positive. 
By applying \Cref{prop: reduction costs} to the left-hand side of \eqref{eq:subset-join-corollary-all-pairs-condition}, this problem takes the form of a min cut problem with non-negative capacities that we solve exactly using the implementation in the C++ library \citet{boost} of the algorithm by \citet{stoer-1997}.

\subsection{Reductions to Minimum Cut Problems}\label{sec:tech-min-cut}
Here, we discuss how, for a given pair or triple, we reduce the search for subsets $R, R' \subseteq S$ that satisfy \eqref{eq:assumption-edge-cut-inequality}--\eqref{eq:triangle-edge-join-2} maximally to the min $st$-cut problem.
In any of these cases, we have $S \neq \emptyset$ and $c\in \mathbb{R}^{\mathcal{I}_S}$ such that $c_{pqr} \geq 0$ for all $pqr\in \tbinom S3$, and $c_{pq}\geq 0$ for all $pq\in \tbinom S2$. 
Moreover, we have $i\in S$, a pair or triple $\{i\} \cup S_0 \subseteq S$ and a problem of the form
\begin{equation}
	\min_{\substack{R \subseteq S \colon \\ i\in R,\\ j\not\in R, \forall j \in S_0}}\quad \sum_{pqr\in T_{\delta(R)}}c_{pqr} + \sum_{pq\in \delta(R)}c_{pq} 
	\enspace .
	\label{eq:verification-problem-unified}
\end{equation}
To begin with, we move costs of triples to costs of pairs:
\begin{proposition}\label{prop: reduction costs}
	\label{eq:cubic-st-cut-reduction-min-st-cut}
	Let $S \neq \emptyset$, $c\in \mathbb{R}^{\mathcal{I}_S}$ and $R\subseteq S$. Then
	\begin{align} 
	\sum_{pqr\in T_{\delta(R)}} c_{pqr} = \frac{1}{2}\sum_{pq\in \delta(R)}\sum_{r\in S\setminus \{p, q\}}c_{pqr}
	\enspace .
	\end{align}
\end{proposition}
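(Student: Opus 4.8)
The plan is to prove the identity by a double-counting argument on the right-hand side, exploiting the fact that $c_{pqr}$ depends only on the unordered triple $\{p,q,r\}$ and may therefore be regrouped by triple. First I would rewrite the quantity $\frac12\sum_{pq\in\delta(R)}\sum_{r\in S\setminus\{p,q\}}c_{pqr}$ as a single sum over triples $t\in\tbinom S3$, weighting each coefficient $c_t$ by the number of ways it is produced in the double sum, namely the number of pairs $pq\in\tbinom t2$ that lie in $\delta(R)$, with $r$ the remaining element of $t$. In other words, the coefficient of $c_t$ in \emph{twice} the right-hand side is exactly $\bigl|\,\delta(R)\cap\tbinom t2\,\bigr|$.

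The key step is then a short case analysis showing that this coefficient equals $2$ precisely when $t\in T_{\delta(R)}$ and $0$ otherwise. Writing $t=\{a,b,c\}$, I would consider how its three elements distribute between $R$ and $S\setminus R$. If all three lie on the same side, none of the three pairs in $\tbinom t2$ crosses the cut, so $\delta(R)\cap\tbinom t2=\emptyset$ and, by definition of $T_{P'}$, $t\notin T_{\delta(R)}$. If instead two elements lie in $R$ and one outside (or symmetrically, two outside and one inside), then exactly the two pairs joining the minority element to the other two cross the cut, so $\bigl|\delta(R)\cap\tbinom t2\bigr|=2$; and such a $t$ belongs to $T_{\delta(R)}$ since $\delta(R)\cap\tbinom t2\neq\emptyset$. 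These two cases are exhaustive.

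Combining them, twice the right-hand side equals $\sum_{t\in\tbinom S3}\bigl|\delta(R)\cap\tbinom t2\bigr|\,c_t=\sum_{t\in T_{\delta(R)}}2c_t$, and dividing by $2$ yields $\sum_{pqr\in T_{\delta(R)}}c_{pqr}$, which is the left-hand side. I do not expect a genuine obstacle here; the only point requiring care is bookkeeping the symmetry of the triple index, so that each triple is counted once per crossing pair rather than once per ordered representation. Once the regrouping of the double sum by unordered triple is made explicit, the case analysis on the distribution of $t$ between $R$ and $S\setminus R$ is immediate and closes the argument.
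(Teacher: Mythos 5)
Your proof is correct, and it is the mirror image of the paper's argument rather than a copy of it. The paper starts from the left-hand side: it splits $T_{\delta(R)}$ according to which side of the cut contains two elements of the triple, writes each unordered same-side pair as an ordered double sum (this is where the factor $\tfrac12$ enters), and then relabels indices so that both pieces are organized by the crossing pair $pq\in\delta(R)$, whose third element ranges over the disjoint union $(R\setminus\{p\})\cup\bigl(S\setminus(R\cup\{q\})\bigr)=S\setminus\{p,q\}$. You instead start from the right-hand side and regroup the double sum by unordered triple, observing that the multiplicity of $c_t$ is $\bigl|\delta(R)\cap\tbinom t2\bigr|$, which your case analysis shows is $2$ exactly when $t\in T_{\delta(R)}$ and $0$ otherwise. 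The underlying combinatorial fact is the same in both proofs: a triple meeting the cut has two elements on one side and one on the other, hence exactly two crossing pairs. Your version makes this fact explicit and dispenses with the paper's index relabelling, which makes the origin of the factor $2$ more transparent; the paper's version is a purely equational chain of sum manipulations, in keeping with the style of its other proofs, and never needs to mention multiplicities. Either argument would serve.
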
 

\ifthenelse{\boolean{proofs}}{
\begin{proof}
Let $R \subseteq S$.
Observe that
\begin{align} \label{eq:identity-sum-all-TdeltaR-triplets}
\sum_{pqr\in T_{\delta(R)}}c_{pqr} & = \sum_{pq\in \tbinom R2}\sum_{r\in S \setminus R}c_{pqr} + \sum_{pq\in \tbinom{S\setminus R}{2}}\sum_{r\in R} c_{pqr} = \frac 12 \sum_{p\in R}\sum_{q\in R \setminus \{p\}}\sum_{r\in S\setminus R}c_{pqr} \\
& + \frac 12 \sum_{p\in S \setminus R}\sum_{q\in S \setminus \left(R \cup \{p\}\right)}\sum_{r\in R}c_{pqr} \\
& = \frac 12 \sum_{p\in R}\sum_{q\in S \setminus R}\left(\sum_{r\in R \setminus \{p\}}c_{pqr} + \sum_{r\in S \setminus \left(R \cup \{q\}\right)}c_{pqr}\right) = \frac 12 \sum_{pq\in \delta(R)}\sum_{r\in S \setminus \{p, q\}}c_{pqr}.
\end{align}%
\end{proof}
}{}

\noindent
Consequently, \eqref{eq:verification-problem-unified} is equivalent to 
\begin{equation}
	\label{proposition:cubic-to-normal-st-cut-reduction}
	\min_{\substack{R \subseteq S \colon \\ i\in R,\\ j\not\in R, \forall j \in S_0}}\quad \sum_{pq\in \delta(R)}c'_{pq},
\end{equation}
where $c'_{pq} = c_{pq} + \frac{1}{2}\sum_{r\in S\setminus \{p, q\}}c_{pqr}$ for all $pq\in \tbinom S2$.
Next, we reduce \eqref{proposition:cubic-to-normal-st-cut-reduction} to a \emph{quadratic unconstrained binary optimization problem}, by applying the following proposition:
\begin{proposition}
	\label{lemma:qpbo-translation}
	Let $S \neq \emptyset$ and $c\in \mathbb{R}^{\tbinom S2}$. 
	Moreover, let $i\in S$ and $S_0 \subseteq S\setminus \{i\}$. 
	Furthermore, let $S' = S \setminus \left(S_0 \cup \{i\}\right)$ and $c' \colon \tbinom{S'}{2}\cup S'\cup \{\emptyset\} \to \mathbb{R}$ such that
	\begin{align}
		c'_p & = \sum_{q\in S \setminus \{p\}}c_{pq} - 2c_{pi} & \forall p \in S'
		\\
		c'_{pq} & = -2c_{pq} & \forall pq \in \tbinom{S'}{2}
		\\
		c'_\emptyset & = \sum_{q\in S \setminus \{i\}}c_{qi}
		\enspace .
		\end{align}
	Then:
	\begin{equation}
		\min_{\substack{R \subseteq S \colon \\ i\in R, \\ j \not\in R, \forall j \in S_0}} \sum_{pq\in \delta(R)} c_{pq} 
		= \min_{y\in \{0, 1\}^{S'}} 
		\sum_{pq\in \tbinom{S'}{2}} c'_{pq}y_p y_q + \sum_{p\in S'}c'_p y_p + c'_\emptyset
		\enspace .
		\label{eq:node-encoding-equality-2} 
	\end{equation}
\end{proposition}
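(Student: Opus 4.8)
The plan is to set up an explicit bijection between the feasible subsets on the left-hand side and the binary vectors $y \in \{0,1\}^{S'}$ on the right-hand side, and then to show that the two objective functions agree value-by-value under this correspondence, so that in particular their minima coincide. Since every feasible $R$ satisfies $i \in R$ and $R \cap S_0 = \emptyset$, it is completely determined by its trace on $S' = S \setminus (S_0 \cup \{i\})$. I would therefore associate to each such $R$ the vector $y \in \{0,1\}^{S'}$ defined by $y_p = \mathbbm{1}[p \in R]$; this is plainly a bijection onto $\{0,1\}^{S'}$, so it suffices to prove that the two objectives take equal values at corresponding points.

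To compare them, I would write $z_p := \mathbbm{1}[p \in R]$ for all $p \in S$, so that $z_i = 1$, $z_s = 0$ for $s \in S_0$, and $z_p = y_p$ for $p \in S'$, and use the elementary identity that a pair $pq$ lies in $\delta(R)$ precisely when exactly one endpoint is in $R$, namely $\mathbbm{1}[pq \in \delta(R)] = z_p + z_q - 2 z_p z_q$. Substituting this yields
\begin{equation}
	\sum_{pq\in \delta(R)} c_{pq} = \sum_{pq\in \tbinom S2} c_{pq}\,(z_p + z_q - 2 z_p z_q)
	\enspace .
\end{equation}

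Next I would split the right-hand side according to the partition $S = \{i\} \sqcup S_0 \sqcup S'$ and plug in the known values of $z$ on $\{i\}$ and $S_0$. Five kinds of pairs arise: pairs inside $S_0$ contribute nothing; the $\{i\}$–$S_0$ pairs contribute the constant $\sum_{s\in S_0} c_{is}$; a pair $pi$ with $p\in S'$ contributes $c_{pi}(1-y_p)$; a pair $ps$ with $p\in S'$, $s\in S_0$ contributes $c_{ps}\,y_p$; and a pair $pq \in \tbinom{S'}{2}$ contributes $c_{pq}(y_p + y_q - 2 y_p y_q)$. Collecting the quadratic monomials $y_p y_q$ immediately reproduces $c'_{pq} = -2c_{pq}$, and collecting the $y$-independent terms reproduces $c'_\emptyset = \sum_{q\in S\setminus\{i\}} c_{qi}$ after using $S\setminus\{i\} = S' \sqcup S_0$ and the symmetry of $c$.

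The one place that needs care — and the only real obstacle — is the linear coefficient of each $y_p$. Gathering its contributions gives $\sum_{q\in S'\setminus\{p\}} c_{pq} - c_{pi} + \sum_{s\in S_0} c_{ps}$, and I would then recognize this as $c'_p$. This follows by writing $\sum_{q\in S\setminus\{p\}} c_{pq} = \sum_{q\in S'\setminus\{p\}} c_{pq} + c_{pi} + \sum_{s\in S_0} c_{ps}$: the single edge to $i$ enters the full degree sum with coefficient $+1$ but the $c_{pi}(1-y_p)$ term contributes $-1$, so the net coefficient of $c_{pi}$ is exactly $-2$, matching the $-2c_{pi}$ in the definition of $c'_p$. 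Having matched the quadratic, linear and constant parts, the two objectives are identically equal at corresponding $(R,y)$, and taking minima over the two sets, which are in bijection, concludes the proof.
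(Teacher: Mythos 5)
Your proposal is correct and follows essentially the same route as the paper's proof: both identify $R$ with its indicator vector, use the identity $\mathbbm{1}[pq\in\delta(R)] = z_p + z_q - 2z_pz_q$, substitute $z_i=1$ and $z_s=0$ for $s\in S_0$, and match the quadratic, linear, and constant coefficients against the definition of $c'$. Your organization of the bookkeeping into five edge classes (and your explicit statement of the bijection, which the paper leaves implicit) is only a presentational difference from the paper's global sum manipulation.
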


\ifthenelse{\boolean{proofs}}{
\begin{proof}
Let $R \subseteq S$ such that $i\in R$ and $\forall j \in S_0 \colon j\not \in R$. 
We define $y\in \{0, 1\}^S$ such that $y = \mathbbm{1}_R$. 
Then, we have $y_i = 1$ and $\forall j \in S_0 \colon y_j = 0$. 
Moreover, it follows 
\begin{align}
\sum_{pq\in \delta(R)} c_{pq} & = \sum_{pq\in \tbinom S2}c_{pq}\left(y_p (1-y_q) + y_q (1-y_p)\right) = \sum_{pq\in \tbinom S2}c_{pq}\left(y_p + y_q - 2y_p y_q\right) \\
& = -2\sum_{pq\in \tbinom S2}c_{pq}y_p y_q + \sum_{\substack{p,q \in S \\ p \neq q}} c_{pq}y_p \\
& = -2 \sum_{pq\in \tbinom{S'}{2}}c_{pq}y_p y_q - 2\sum_{p\in S'}c_{pi} y_p + \sum_{p\in S'}\sum_{q\in S \setminus \{p\}}c_{pq}y_p + \sum_{q\in S \setminus \{i\}}c_{qi}\\
& = -2\sum_{pq\in \tbinom{S'}{2}}c_{pq}y_p y_q + \sum_{p\in S'}\left(-2c_{pi} + \sum_{q\in S \setminus \{p\}}c_{pq}\right)y_p + \sum_{q\in S \setminus \{i\}}c_{qi} \\
& =\sum_{pq\in \tbinom{S'}{2}}c'_{pq}y_p y_q + \sum_{p\in S'} c'_p y_p + c'_\emptyset
\enspace .
\end{align}
This concludes the proof. 
\end{proof}
}{}

For the instances of \eqref{eq:node-encoding-equality-2} that arise from testing  \eqref{eq:assumption-edge-cut-inequality}--\eqref{eq:triangle-edge-join-2}, we have $\forall pq\in \tbinom{S'}{2} \colon c'_{pq}\leq 0$.
Thus, the right-hand side of \eqref{eq:node-encoding-equality-2} is \emph{submodular} and can be minimized in strongly polynomial time \citep{boros-2008, kolmogorov-2004}.
For completeness, we describe the reduction to an instance of min $st$-cut in detail in \Cref{appendix:qpbo-to-min-st-cut-propositions}.

\section{Combining Partial Optimality Conditions}\label{sec: combining}

Next, we discuss how we apply partial optimality conditions iteratively and why this requires special attention. 
Let $S \neq \emptyset$ and $c\in \mathbb{R}^{\mathcal{I}_S}$. 
Furthermore, let $Q_1, Q_2 \subseteq \cp_{S}$. 
If there is an optimal solution $x^*_1\in \cp_S$ to \csp{S}{c} such that $x^*_1\in Q_1$, and an optimal solution $x^*_{2}\in X_S$ to \csp{S}{c} such that $x^*_2\in Q_2$, then there is not necessarily an optimal solution $x^*\in \cp_S$ to \csp{S}{c} such that $x^*\in Q_1 \cap Q_2$. 
For example, consider $S = \{1, 2, 3\}$ and $c \in \mathbb{R}^{\mathcal{I}_S}$ such that $c_{123} = 5$, $c_{12}=c_{13}=c_{23} = -2$ and $c_\emptyset = 0$.
Then, $\min_{x\in \cp_S}\phi_c(x) = -2$.  
Furthermore, let $Q_1 = \{x\in \cp_S \mid x_{12} = 1\}$ and $Q_2 = \{x\in \cp_S \mid x_{13} = 1\}$. 
It follows that $Q_1 \cap Q_2 = \{ x \in \cp_S \mid x_{12} = 1, x_{13} = 1 \} = \{ (1, 1, 1) \}$.
The set of optimal solutions is the set of all $x\in \cp_S$ for which there is exactly one $pq\in \tbinom S2$ with $x_{pq} = 1$. 
Thus, the feasible $x'\in \cp_S$ such that $x'_{12} = x'_{13} = x'_{23} = 1$ is not optimal.

\subsection{Cut Conditions}
\label{section:cut-conditions-parallel}

For any $T_0 \subseteq \tbinom S3$, $P_0 \subseteq \tbinom S2$, we define $\cp_S \vert_{T_0, P_0}\subseteq \cp_S$ such that $x \in \cp_S \vert_{T_0, P_0}$ if and only if $\forall pqr\in T_0 \colon x_{pq}x_{pr}x_{qr} = 0$ and $\forall pq\in P_0 \colon x_{pq} = 0$.
For any $R \subseteq S$, we have that $\sigma_{\delta(R)}$ restricted to $\cp_S \vert_{T_0, P_0}$ has image $\cp_S \vert_{T_0, P_0}$. 
All our cut results use either $\sigma_{\delta(R)}$ for some $R \subseteq S$ or the identity. 
Furthermore, the cut conditions do not change when applied to the restricted set $\cp_S \vert_{T_0, P_0}$. 
Therefore, we can apply all our cut conditions simultaneously.
On the contrary, $\sigma_R$ for some $R \subseteq S$ restricted to $\cp_S \vert_{T_0, P_0}$ does not necessarily have image $\cp_S \vert_{T_0, P_0}$, e.g. for the map $\sigma_S$ we have $\sigma_S(\cp_S \vert_{T_0, P_0}) = \{\mathbbm{1}_S\}$. 
Therefore, we cannot expect that partial optimality statements would hold when applying any join condition together with any another condition.

\subsection{Join Conditions}
Let us assume the existence of an optimal solution $x^*$ to \csp{S}{c} such that $x^*_{ij} = 1$ for some $ij\in \tbinom S2$. 
We define $\cp_S \vert_{x_{ij} = 1} = \{x\in \cp_S \mid x_{ij} = 1\}$.
Then, we have 
\begin{equation}
	\label{eq:one-persistency-minimization-problem}
	\min_{x\in \cp_S} \phi_c(x) = \min_{x \in \cp_S \vert_{x_{ij} = 1}}\phi_c(x)
	\enspace .
\end{equation}
Let $S' = S \setminus \{j\}$. 
Now, we relate feasible vectors of $\cp_S \vert _{x_{ij} = 1}$ to feasible vectors of $\cp_{S'}$. 
We observe that for any $x \in \cp_S \vert_{x_{ij} = 1}$ we have $\forall p\in S \setminus \{i, j\} \colon x_{pi} = x_{pj}$. 
We define $\varphi_{ij}\colon \cp_S \vert_{x_{ij} = 1} \to \cp_{S'}$ as 
\begin{align}
	\varphi_{ij}(x)_{pi} & = x_{pi} & \forall x \in \cp_S\vert_{x_{ij} = 1} \quad \forall p\in S'\setminus \{i\} 
	\phantom{\enspace ..}
	\\
	\varphi_{ij}(x)_{pq} & = x_{pq} & \forall x\in \cp_S\vert_{x_{ij} = 1} \quad \forall pq \in \tbinom{S'\setminus \{i\}}{2} \enspace .
\end{align}
It is easy to see that $\varphi_{ij}$ is bijective.
Proposition~\ref{lemma:contraction-cost-adjustments} below shows that solving the right-hand side of \eqref{eq:one-persistency-minimization-problem} is equivalent to solving a smaller instance of the original problem. 
\begin{proposition}
	\label{lemma:contraction-cost-adjustments}
	Let $S \neq \emptyset$ and $c\in \mathbb{R}^{\mathcal{I}_S}$. 
	Moreover, let $ij \in \tbinom S2$ and $S' = S \setminus \{j\}$, and let $c'\in \mathbb{R}^{\mathcal{I}_{S'}}$ such that
	\begin{align}
	c'_{pqr} &= c_{pqr} & \forall pqr \in \tbinom{S'\setminus \{i\}}{3} \\
	c'_{pqi} &= c_{pqi} + c_{pqj} & \forall pq \in \tbinom{S'\setminus \{i\}}{2} \\
	c'_{pq} &= c_{pq} & \forall pq \in \tbinom{S'\setminus \{i\}}{2} \\
	c'_{pi} &= c_{pi} + c_{pj} + c_{pij} & \forall p \in S'\setminus \{i\} \\
	c'_\emptyset & = c_0 + c_{ij}
	\enspace .
	\end{align}
	Furthermore, let $\varphi_{ij}\colon \cp_S\vert_{x_{ij} = 1}\to \cp_{S'}$ be the map that relates feasible vectors of $\cp_S\vert_{x_{ij} = 1}$ to feasible vectors of $\cp_{S'}$.
	Then, we have
	\begin{align}
	\min_{x\in \cp_S\vert_{x_{ij} = 1}} \phi_c(x) = \min_{x\in \cp_{S'}}\phi_{c'}(x)
	\enspace .
	\end{align}
	Moreover, if $x^* \in \argmin_{x\in \cp_S\vert_{x_{ij} = 1}}\phi_c(x)$, then $\varphi_{ij}(x^*)\in \argmin_{x\in \cp_{S'}}\phi_{c'}(x)$.
\end{proposition}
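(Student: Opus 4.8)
The plan is to exploit the bijectivity of $\varphi_{ij}$ that the text has just asserted: since $\varphi_{ij}\colon \cp_S\vert_{x_{ij}=1}\to\cp_{S'}$ is a bijection, it suffices to prove that the objective is preserved \emph{pointwise}, i.e.\ that $\phi_c(x)=\phi_{c'}(\varphi_{ij}(x))$ for every $x\in\cp_S\vert_{x_{ij}=1}$. Both claimed statements---the equality of minima and the fact that $x^*\in\argmin$ implies $\varphi_{ij}(x^*)\in\argmin$---then follow immediately, because a bijection that preserves objective values maps minimizers to minimizers and equalizes the two optimal values.

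First I would record the feasibility identity already observed in the text: for any $x\in\cp_S\vert_{x_{ij}=1}$ and any $p\in S\setminus\{i,j\}$ one has $x_{pi}=x_{pj}$. This follows from the constraints \eqref{eq:def-ccp} applied to the ordered triples $(p,i,j)$ and $(p,j,i)$, which upon substituting $x_{ij}=1$ give $x_{pi}\le x_{pj}$ and $x_{pj}\le x_{pi}$, respectively. Writing $y=\varphi_{ij}(x)$, I recall that $y_{pi}=x_{pi}$ for $p\in S'\setminus\{i\}$ and $y_{pq}=x_{pq}$ for $pq\in\tbinom{S'\setminus\{i\}}{2}$, so that the image coordinates are just the surviving $x$-coordinates.

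The core of the argument is then a term-by-term comparison of $\phi_c(x)$ and $\phi_{c'}(y)$, organized by partitioning $\tbinom S3$ and $\tbinom S2$ according to their intersection with $\{i,j\}$. Triples and pairs disjoint from $\{i,j\}$ are copied verbatim, matching $c'_{pqr}=c_{pqr}$ and $c'_{pq}=c_{pq}$. Among the remaining triples, the term $pqj$ contributes $c_{pqj}x_{pq}x_{pj}x_{qj}=c_{pqj}x_{pq}x_{pi}x_{qi}$ by the identity, and adding the $pqi$ term yields $(c_{pqi}+c_{pqj})\,y_{pq}y_{pi}y_{qi}=c'_{pqi}\,y_{pq}y_{pi}y_{qi}$, exactly the $c'_{pqi}$ contribution. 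The triple $pij$ contributes $c_{pij}x_{pi}x_{pj}x_{ij}=c_{pij}x_{pi}$, using $x_{ij}=1$, $x_{pj}=x_{pi}$ and $x_{pi}^2=x_{pi}$; this cubic monomial \emph{degenerates} into a linear term. Finally the pairs $pi$ and $pj$ contribute $(c_{pi}+c_{pj})x_{pi}$, which combines with the degenerate triple term into $(c_{pi}+c_{pj}+c_{pij})\,y_{pi}=c'_{pi}\,y_{pi}$, while the single pair $ij$ contributes $c_{ij}$, absorbed with $c_\emptyset$ into $c'_\emptyset=c_\emptyset+c_{ij}$. Summing all blocks gives $\phi_c(x)=\phi_{c'}(y)$.

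I do not expect a genuine obstacle here; the one subtlety to handle with care is the collapse of the cubic monomial $x_{pi}x_{pj}x_{ij}$ into the linear term $x_{pi}$, which is precisely what forces the triple cost $c_{pij}$ to be folded into the node cost $c'_{pi}$ rather than surviving as a triple cost on $S'$. The rest is bookkeeping: verifying that the partition of $\tbinom S3\cup\tbinom S2$ into blocks disjoint from $\{i,j\}$, meeting exactly one of $\{i,j\}$, and meeting both, matches the defining equations for $c'$ block by block.
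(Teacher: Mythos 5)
Your proposal is correct and follows essentially the same route as the paper's proof: establish the pointwise identity $\phi_c(x)=\phi_{c'}(\varphi_{ij}(x))$ using $x_{ij}=1$ and $x_{pi}=x_{pj}$, with the cubic monomial $x_{pi}x_{pj}x_{ij}$ collapsing into the linear term absorbed by $c'_{pi}$, and then conclude via the bijectivity of $\varphi_{ij}$. The only inessential differences are the direction of the expansion (you go from $\phi_c$ to $\phi_{c'}$, the paper the other way) and your explicit derivation of $x_{pi}=x_{pj}$ from the triangle inequalities, which the paper merely asserts.
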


\ifthenelse{\boolean{proofs}}{
\begin{proof}
Let $x\in \cp_S \vert_{x_{ij} = 1}$. 
We show that $\phi_c(x) = \phi_{c'}(\varphi_{ij}(x))$. 
Let $x' = \varphi_{ij}(x)$. 
We use the fact that $x_{pi} = x_{pj}$ for all $p\in S \setminus \{i, j\}$, and $x_{ij} = 1$. 
It follows 
\begin{align}
\phi_{c'}(x') & = \sum_{pqr\in \tbinom{S'}{3}}c'_{pqr}x'_{pq}x'_{pr}x'_{qr} + \sum_{pq\in \tbinom S2} c'_{pq}x'_{pq} + c'_{\emptyset}\\
& = \sum_{pq\in \tbinom{S \setminus\{i, j\}}{2}}c'_{pqi}x'_{pi}x'_{qi}x'_{pq} + \sum_{pqr\in \tbinom{S \setminus\{i, j\}}{3}}c'_{pqr}x'_{pq}x'_{pr}x'_{qr} + \sum_{p\in S \setminus\{i, j\}}c'_{pi}x'_{pi} \\
& \qquad + \sum_{pq\in \tbinom{V\setminus\{i, j\}}{2}}c'_{pq}x'_{pq} + c'_{\emptyset}\\
& = \sum_{pq\in \tbinom{S \setminus\{i, j\}}{2}}(c_{pqi} + c_{pqj})x_{pi}x_{qi}x_{pq} + \sum_{pqr\in \tbinom{S \setminus\{i, j\}}{3}}c_{pqr}x_{pq}x_{pr}x_{qr} + \sum_{pq\in \tbinom{S \setminus\{i, j\}}{2}}c_{pq}x_{pq} \\
& \qquad + \sum_{p\in S \setminus\{i, j\}}(c_{pi} + c_{pj} + c_{pij})x_{pi}+ c'_{\emptyset}\\
&= \sum_{pq\in \tbinom{S \setminus\{i, j\}}{2}}c_{pqi}x_{pi}x_{qi}x_{pq} + \sum_{pq\in \tbinom{S \setminus\{i, j\}}{2}}c_{pqj}x_{pj}x_{qj}x_{pq} + \sum_{pqr\in \tbinom{S \setminus\{i, j\}}{3}}c_{pqr}x_{pq}x_{pr}x_{qr} \\
& \qquad + \sum_{p\in V}c_{pij}x_{ij}x_{pi}x_{pj} + \sum_{pq\in \tbinom{S \setminus\{i, j\}}{2}}c_{pq}x_{pq} + \sum_{p\in V\setminus\{i, j\}}c_{pi}x_{pi} + \sum_{p\in S \setminus\{i, j\}}c_{pj}x_{pj} + c_{ij}x_{ij} + c_{\emptyset} \\ 
&= \sum_{pqr\in \tbinom S3}c_{pqr}x_{pq}x_{pr}x_{qr} + \sum_{pq\in S}c_{pq}x_{pq} + c_{\emptyset} = \phi_c(x).
\end{align}
Therefore, we have 
\begin{equation}
\min_{x\in \cp_S \vert_{x_{ij} = 1}} \phi_c(x) = \min_{x\in \cp_S \vert_{x_{ij} = 1}} \phi_{c'}(\varphi_{ij}(x)) = \min_{x\in \cp_{S'}}\phi_{c'}(x)
\end{equation}
This concludes the proof.
\end{proof}
}{}

\subsection{Mixing Cut and Join Conditions}
\label{section:mixing-conditions}
Here, we describe how we apply the partial optimality properties recursively. As soon as a condition leads to a smaller instance, we start the procedure again on the smaller set (in case of a join) or sets (in case of a cut). 
Firstly, we apply Proposition~\ref{lemma:persistency-subset-separation}, which leaves us with independent sub-problems. 
Secondly, we apply our join conditions until we find a pair $ij\in \tbinom S2$ or triplet $ijk \in \tbinom S3$ to join, starting from \Cref{corollary:subset-join-all-pairs-at-once} and then moving on to \Cref{lemma:edge-join-persistency,lemma:persistency-triangle-edge-join}, \Cref{corollary:edge-subgraph-edge-join,corollary:triplet-subgraph-edge-join} and \Cref{proposition:triplet-join}, in this order. 
Thirdly, we apply the remaining cut conditions, which can be applied jointly, as we have seen in \Cref{section:cut-conditions-parallel}.
We remark, that the order in which we apply our join conditions is arbitrary, and we do not claim it to be optimal.

\section{Numerical Experiments}\label{sec: experiments}

\begin{figure}[t]
	\centering
	\begin{minipage}{0.33\linewidth}
		a)
		\hspace{-2ex}
		\vspace{-2ex}
		\linebreak
		\begin{tikzpicture}\small
			\pgfplotsset{
				width=\linewidth,
				height=\linewidth	
			}
			\begin{axis}[
				xlabel=$\alpha$,
				xmin=0.25,
				xmax=0.7,
				ymin=0.0,
				ymax=100,
				ylabel=Variables \%,
				legend pos=north east,
				legend style={
					nodes={
						scale=0.55
					}
				}
				]
				\addplot+[
				color=primary,
				only marks,
				mark=*,
				mark options={scale=0.5,fill}
				]
				table[
				x=alpha,
				y=median,
				y error plus expr=\thisrow{q75} - \thisrow{median},
				y error minus expr=\thisrow{median} - \thisrow{q25},
				col sep=comma
				]
				{./data-partition-experiment_vars_beta=0.0.csv};
				\addlegendentry{$\beta = 0.0$};
				\addplot+[
				color=tertiary,
				only marks,
				mark=triangle*,
				mark options={scale=0.9,fill}
				]
				table[
				x=alpha,
				y=median,
				y error plus expr=\thisrow{q75} - \thisrow{median},
				y error minus expr=\thisrow{median} - \thisrow{q25},
				col sep=comma
				]
				{./data-partition-experiment_vars_beta=0.5.csv};
				\addlegendentry{$\beta = 0.5$};
				\addplot+[
				color=secondary,
				only marks,
				mark=square*,
				mark options={scale=0.5,fill}
				]
				table[
				x=alpha,
				y=median,
				y error plus expr=\thisrow{q75} - \thisrow{median},
				y error minus expr=\thisrow{median} - \thisrow{q25},
				col sep=comma
				]
				{./data-partition-experiment_vars_beta=1.0.csv};
				\addlegendentry{$\beta = 1.0$};
				\addplot+[
				color=deep-red,
				only marks,
				mark=diamond*,
				mark options={scale=0.7,fill}
				]
				table[
				x=alpha,
				y=median,
				y error plus expr=\thisrow{q75} - \thisrow{median},
				y error minus expr=\thisrow{median} - \thisrow{q25},
				col sep=comma
				]
				{./data-partition-experiment_vars_beta=0.01.csv};
				\addlegendentry{$\beta = 0.01$};
			\end{axis}
		\end{tikzpicture}%
	\end{minipage}%
	\begin{minipage}{0.33\linewidth}
		b)
		\hspace{-2ex}
		\vspace{-2ex}
		\linebreak	
		\begin{tikzpicture}\small
			\pgfplotsset{
				width=\linewidth,
				height=\linewidth	
			}
			\begin{axis}[
				xlabel=$\alpha$,
				xmin=0.25,
				xmax=0.7,
				ymin=0.0,
				ylabel=Runtime / s,
				legend pos=north west,
				legend style={
					nodes={
						scale=0.55
					}
				}
				]
				\addplot+[
				color=primary,
				only marks,
				mark=*,
				mark options={scale=0.5,fill}
				]
				table[
				x=alpha,
				y expr=\thisrow{median} / 1000,
				y error plus expr=(\thisrow{q75} - \thisrow{median}) / 1000,
				y error minus expr=(\thisrow{median} - \thisrow{q25}) / 1000,
				col sep=comma
				]
				{./data-partition-experiment_runtimes_beta=0.0.csv};
				\addlegendentry{$\beta = 0.0$};
				\addplot+[
				color=tertiary,
				only marks,
				mark=triangle*,
				mark options={scale=0.9,fill}
				]
				table[
				x=alpha,
				y expr=\thisrow{median} / 1000,
				y error plus expr=(\thisrow{q75} - \thisrow{median}) / 1000,
				y error minus expr=(\thisrow{median} - \thisrow{q25}) / 1000,
				col sep=comma
				]
				{./data-partition-experiment_runtimes_beta=0.5.csv};
				\addlegendentry{$\beta = 0.5$};
				\addplot+[
				color=secondary,
				only marks,
				mark=square*,
				mark options={scale=0.5,fill}
				]
				table[
				x=alpha,
				y expr=\thisrow{median} / 1000,
				y error plus expr=(\thisrow{q75} - \thisrow{median}) / 1000,
				y error minus expr=(\thisrow{median} - \thisrow{q25}) / 1000,
				col sep=comma
				]
				{./data-partition-experiment_runtimes_beta=1.0.csv};
				\addlegendentry{$\beta = 1.0$};
				\addplot+[
				color=deep-red,
				only marks,
				mark=diamond*,
				mark options={scale=0.7,fill}
				]
				table[
				x=alpha,
				y expr=\thisrow{median} / 1000,
				y error plus expr=(\thisrow{q75} - \thisrow{median}) / 1000,
				y error minus expr=(\thisrow{median} - \thisrow{q25}) / 1000,
				col sep=comma
				]
				{./data-partition-experiment_runtimes_beta=0.01.csv};
				\addlegendentry{$\beta = 0.01$};			
			\end{axis}
		\end{tikzpicture}%
	\end{minipage}\linebreak\hfill
	\begin{minipage}{0.33\linewidth}
		c)
		\hspace{-2ex}
		\vspace{-2ex}
		\linebreak	
		\begin{tikzpicture}\small
			\pgfplotsset{
				width=\linewidth,
				height=\linewidth	
			}
			\begin{axis}[
				xlabel=$\sigma$,
				xmin=0.0,
				xmax=0.1,
				xtick distance=0.05,
				x tick label style={
					/pgf/number format/.cd,
					fixed
				},
				ymin=0.0,
				ymax=100,
				ylabel=Variables \%,
				legend pos=north east
				]
				\addplot+[
				only marks,
				mark=*,
				mark options={scale=0.5,fill},
				color=primary
				]
				table[
				x=sigma,
				y=median,
				y error plus expr=\thisrow{q75} - \thisrow{median},
				y error minus expr=\thisrow{median} - \thisrow{q25},
				col sep=comma
				]
				{./data-triangles-experiment_equilateral_triangles_all_vars.csv};
			\end{axis}
		\end{tikzpicture}%
	\end{minipage}%
	\begin{minipage}{0.33\linewidth}
		d)
		\hspace{-2ex}
		\vspace{-2ex}
		\linebreak	
		\begin{tikzpicture}\small
			\pgfplotsset{
				width=\linewidth,
				height=\linewidth	
			}
			\begin{axis}[
				xlabel=$\sigma$,
				xmin=0.0,
				xmax=0.1,
				xtick distance=0.05,
				x tick label style={
					/pgf/number format/.cd,
					fixed
				},
				ymin=0.0,
				ylabel=Runtime / s,
				legend pos=north east
				]
				\addplot+[
				only marks,
				mark=*,
				mark options={scale=0.5,fill},
				color=primary
				]
				table[
				x=sigma,
				y expr=\thisrow{median} / 1000,
				y error plus expr=(\thisrow{q75} - \thisrow{median}) / 1000,
				y error minus expr=(\thisrow{median} - \thisrow{q25}) / 1000,
				col sep=comma
				]
				{./data-triangles-experiment_runtimes.csv};
			\end{axis}
		\end{tikzpicture}%
	\end{minipage}%
	\caption{We report above the percentage of fixed variables and triples after applying all conditions jointly, as described in \ref{section:mixing-conditions}, as well as the corresponding runtimes. 
	(a) and (b) show these for the partition dataset with respect to the parameters $\alpha$ and $\beta$ and for 48 elements. 
	(c) and (d) show these for the geometric dataset with respect to the parameter $\sigma$ and for 45 points.}
	\label{fig:partition-all-criteria-results}
\end{figure}
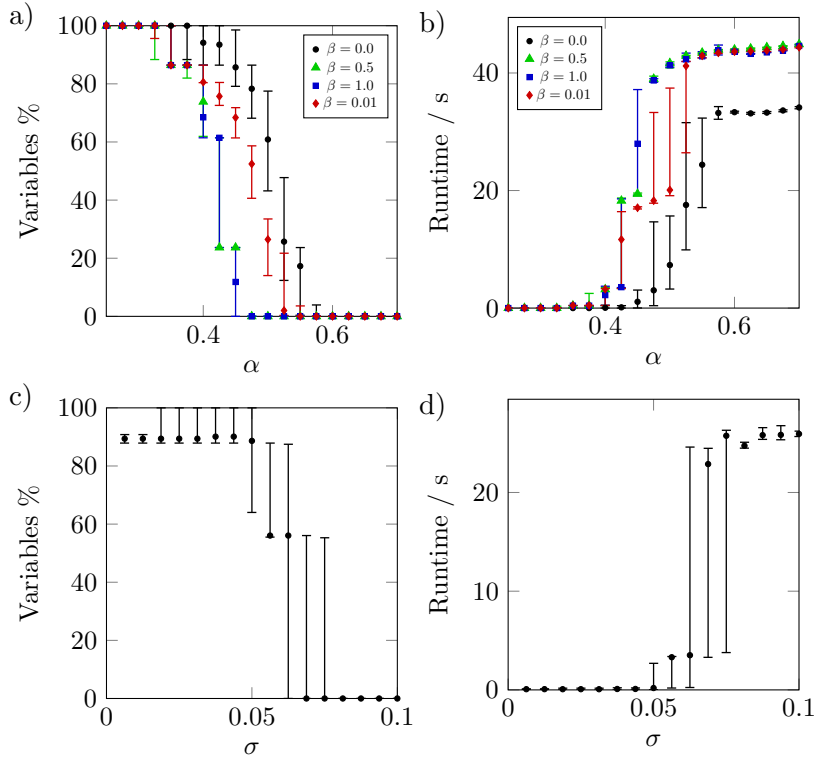

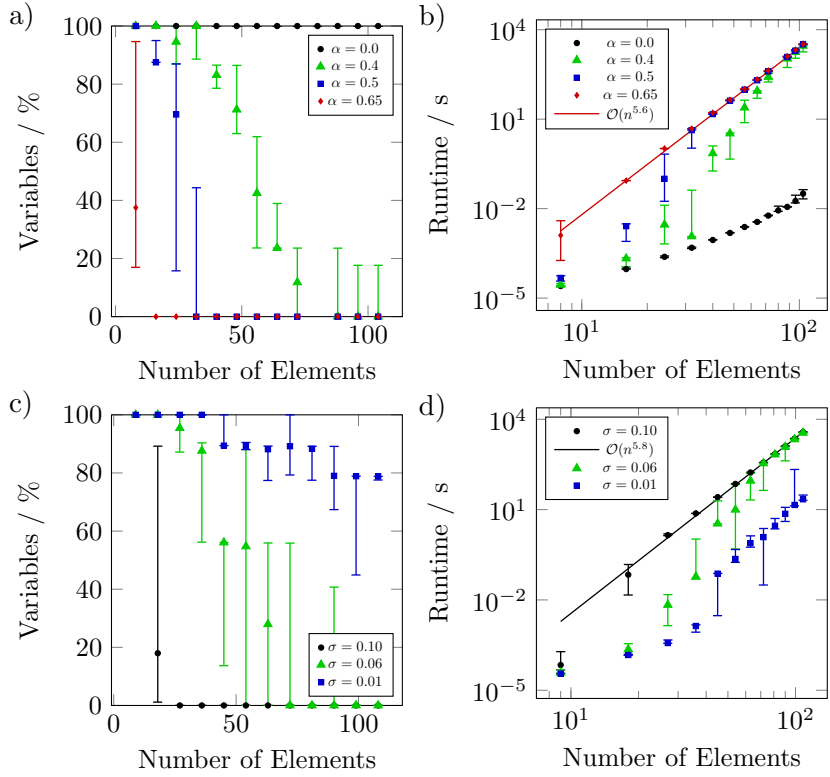
\begin{figure}[t]
	\centering
	\begin{minipage}{0.33\linewidth}
		a)
		\hspace{-2ex}
		\vspace{-2ex}
		\linebreak
		\begin{tikzpicture}\small
			\pgfplotsset{
				width=\linewidth,
				height=\linewidth	
			}
			\begin{axis}[
				xlabel=Number of Elements,
				ylabel=Variables / \%,
				legend pos=north east,
				legend style={
					nodes={
						scale=0.55
					}
				},
				ymin=0,
				ymax=100
				]
				\addplot+[
				only marks, 
				color=primary, 
				mark=*,
				mark options={scale=0.5,fill}
				]
				table[
				x=n, 
				y expr=\thisrow{median},
				y error plus expr = (\thisrow{q75} - \thisrow{median}),
				y error minus expr = (\thisrow{median} - \thisrow{q25}),
				col sep=comma
				]{data-partition-variables_alpha_0.0_beta_0.5.csv};
				\addlegendentry{$\alpha = 0.0$}
				\addplot+[
				only marks, 
				color=tertiary, 
				mark=triangle*, 
				mark options={scale=0.9,fill}
				]
				table[
				x=n, 
				y expr=\thisrow{median},
				y error plus expr = (\thisrow{q75} - \thisrow{median}) ,
				y error minus expr = (\thisrow{median} - \thisrow{q25}),
				col sep=comma
				]{data-partition-variables_alpha_0.4_beta_0.5.csv};
				\addlegendentry{$\alpha=0.4$}
				\addplot+[
				only marks, 
				mark=square*, 
				color=secondary, 
				mark options={scale=0.5,fill}
				]
				table[
				x=n, 
				y expr=\thisrow{median},
				y error plus expr = (\thisrow{q75} - \thisrow{median}),
				y error minus expr = (\thisrow{median} - \thisrow{q25}),
				col sep=comma
				]{data-partition-variables_alpha_0.5_beta_0.5.csv};
				\addlegendentry{$\alpha=0.5$}
				\addplot+[
				only marks, 
				mark=diamond*, 
				color=deep-red, 
				mark options={scale=0.5,fill}
				]
				table[
				x=n, 
				y expr=\thisrow{median},
				y error plus expr = (\thisrow{q75} - \thisrow{median}),
				y error minus expr = (\thisrow{median} - \thisrow{q25}),
				col sep=comma
				]{data-partition-variables_alpha_0.65_beta_0.5.csv};
				\addlegendentry{$\alpha=0.65$}
			\end{axis}
		\end{tikzpicture}%
	\end{minipage}%
	\begin{minipage}{0.33\linewidth}
		b)
		\hspace{-2ex}
		\vspace{-2ex}
		\linebreak	
		\begin{tikzpicture}\small
			\pgfplotsset{
				width=\linewidth,
				height=\linewidth	
			}
			\begin{loglogaxis}[
				xlabel=Number of Elements,
				ylabel=Runtime / s,
				legend pos=north west,
				legend style={
					nodes={
						scale=0.55
					}
				}
				]
				\addplot+[
				only marks, 
				color=primary, 
				mark=*,
				mark options={scale=0.5,fill}
				]
				table[
				x=n, 
				y expr=\thisrow{median} / 1000,
				y error plus expr = (\thisrow{q75} - \thisrow{median}) / 1000,
				y error minus expr = (\thisrow{median} - \thisrow{q25}) / 1000,
				col sep=comma
				]{data-partition-runtimes_alpha_0.0_beta_0.5.csv};
				\addlegendentry{$\alpha=0.0$}
				\addplot+[
				only marks, 
				color=tertiary, 
				mark=triangle*, 
				mark options={scale=0.9,fill}
				]
				table[
				x=n, 
				y expr=\thisrow{median} / 1000,
				y error plus expr = (\thisrow{q75} - \thisrow{median}) / 1000,
				y error minus expr = (\thisrow{median} - \thisrow{q25}) / 1000,
				col sep=comma
				]{data-partition-runtimes_alpha_0.4_beta_0.5.csv};
				\addlegendentry{$\alpha=0.4$}
				\addplot+[
				only marks, 
				mark=square*, 
				color=secondary, 
				mark options={scale=0.5,fill}
				]
				table[
				x=n, 
				y expr=\thisrow{median} / 1000,
				y error plus expr = (\thisrow{q75} - \thisrow{median}) / 1000,
				y error minus expr = (\thisrow{median} - \thisrow{q25}) / 1000,
				col sep=comma
				]{data-partition-runtimes_alpha_0.5_beta_0.5.csv};
				\addlegendentry{$\alpha=0.5$}
				\addplot+[
				only marks, 
				mark=diamond*, 
				color=deep-red, 
				mark options={scale=0.5,fill}
				]
				table[
				x=n, 
				y expr=\thisrow{median} / 1000,
				y error plus expr = (\thisrow{q75} - \thisrow{median}) / 1000,
				y error minus expr = (\thisrow{median} - \thisrow{q25}) / 1000,
				col sep=comma
				]{data-partition-runtimes_alpha_0.65_beta_0.5.csv};
				\addlegendentry{$\alpha=0.65$};
				\addplot+[
					color=deep-red,
					no marks,
					line width=0.5pt,
				]
				table[
					x=n, 
					y expr=1.57*10^(-8)*(\thisrow{n})^(5.6),
					col sep=comma
				]{data-partition-runtimes_alpha_0.65_beta_0.5.csv};
				\addlegendentry{$\mathcal{O}(n^{5.6})$}
			\end{loglogaxis}	
		\end{tikzpicture}%
	\end{minipage}\linebreak\hfill
	\begin{minipage}{0.33\linewidth}
		c)
		\hspace{-2ex}
		\vspace{-2ex}
		\linebreak	
		\begin{tikzpicture}\small
			\pgfplotsset{
				width=\linewidth,
				height=\linewidth	
			}
			\begin{axis}[
				xlabel=Number of Elements,
				ylabel=Variables / \%,
				legend pos=south east,
				legend style={
					nodes={
						scale=0.55
					}
				},
				ymin=0,
				ymax=100
				]
				\addplot+[
				only marks, 
				color=primary, 
				mark=*,
				mark options={scale=0.5,fill}
				]
				table[
				x=n, 
				y expr=\thisrow{median},
				y error plus expr = (\thisrow{q75} - \thisrow{median}),
				y error minus expr = (\thisrow{median} - \thisrow{q25}),
				col sep=comma
				]{data-triangles-variables_sigma_0.1.csv};
				\addlegendentry{$\sigma=0.10$}
				\addplot+[
				only marks, 
				color=tertiary, 
				mark=triangle*, 
				mark options={scale=0.9,fill}
				]
				table[
				x=n, 
				y expr=\thisrow{median},
				y error plus expr = (\thisrow{q75} - \thisrow{median}) ,
				y error minus expr = (\thisrow{median} - \thisrow{q25}),
				col sep=comma
				]{data-triangles-variables_sigma_0.06.csv};
				\addlegendentry{$\sigma=0.06$}
				\addplot+[
				only marks, 
				mark=square*, 
				color=secondary, 
				mark options={scale=0.5,fill}
				]
				table[
				x=n, 
				y expr=\thisrow{median},
				y error plus expr = (\thisrow{q75} - \thisrow{median}),
				y error minus expr = (\thisrow{median} - \thisrow{q25}),
				col sep=comma
				]{data-triangles-variables_sigma_0.01.csv};
				\addlegendentry{$\sigma=0.01$}
			\end{axis}
		\end{tikzpicture}%
	\end{minipage}%
	\begin{minipage}{0.33\linewidth}
		d)
		\hspace{-2ex}
		\vspace{-2ex}
		\linebreak	
		\begin{tikzpicture}\small
			\pgfplotsset{
				width=\linewidth,
				height=\linewidth	
			}
		\begin{loglogaxis}[
			xlabel=Number of Elements,
			ylabel=Runtime / s,
			legend pos=north west,
			legend style={
				nodes={
					scale=0.55
				}
			}
			]
			\addplot+[
			only marks, 
			color=primary, 
			mark=*,
			mark options={scale=0.5,fill}
			]
			table[
			x=n, 
			y expr=\thisrow{median} / 1000,
			y error plus expr = (\thisrow{q75} - \thisrow{median}) / 1000,
			y error minus expr = (\thisrow{median} - \thisrow{q25}) / 1000,
			col sep=comma
			]{data-triangles-runtimes_sigma_0.1.csv};
			\addlegendentry{$\sigma=0.10$}
			\addplot+[
			color=primary,
			no marks,
			line width=0.5pt,
			]
			table[
			x=n, 
			y expr=5.297*10^(-9)*(\thisrow{n})^(5.8279),
			col sep=comma
			]{data-triangles-variables_sigma_0.1.csv};
			\addlegendentry{$\mathcal{O}(n^{5.8})$}
			\addplot+[
			only marks, 
			color=tertiary, 
			mark=triangle*, 
			mark options={scale=0.9,fill}
			]
			table[
			x=n, 
			y expr=\thisrow{median} / 1000,
			y error plus expr = (\thisrow{q75} - \thisrow{median}) / 1000,
			y error minus expr = (\thisrow{median} - \thisrow{q25}) / 1000,
			col sep=comma
			]{data-triangles-runtimes_sigma_0.06.csv};
			\addlegendentry{$\sigma=0.06$}
			\addplot+[
			only marks, 
			mark=square*, 
			color=secondary, 
			mark options={scale=0.5,fill}
			]
			table[
			x=n, 
			y expr=\thisrow{median} / 1000,
			y error plus expr = (\thisrow{q75} - \thisrow{median}) / 1000,
			y error minus expr = (\thisrow{median} - \thisrow{q25}) / 1000,
			col sep=comma
			]{data-triangles-runtimes_sigma_0.01.csv};
			\addlegendentry{$\sigma=0.01$}
		\end{loglogaxis}
		\end{tikzpicture}%
	\end{minipage}%
	\caption{We report above the percentage of fixed variables and triples after applying all conditions jointly, as described in \Cref{section:mixing-conditions}, as well as the corresponding runtimes. 
		(a) and (b) show these for the partition dataset with respect to the number of elements and $\beta = 0.5$. 
		(c) and (d) show these for the geometric dataset with respect to the number of elements.}
	\label{fig:persistency-runtimes-and-variables-wrt-instance-size}
\end{figure}

We examine the effect of the algorithms empirically on two datasets.
For both, we report the percentage of fixed variables and triples, as well as the runtime. 
More specifically, we report the median as well as lower and upper quartile over 30 instances.
We apply all partial optimality conditions jointly, as described in \Cref{section:mixing-conditions},
and we also evaluate the effect of each condition separately. 
All algorithms are implemented in C++ and run on one core of an Intel Core i5-6600 equipped with 16 GB of RAM.

\subsection{Partition Dataset}

\begin{figure}[t]
	\centering
	\begin{minipage}{0.33\linewidth}
		a)
		\hspace{-2ex}
		\vspace{-2ex}
		\linebreak	
		\begin{tikzpicture}\small
			\pgfplotsset{%
				width=1.0\linewidth,
				height=1.0\linewidth
			}
			\begin{axis}[
				xlabel=$\alpha$,
				xmin=0.25,
				xmax=0.7,
				ymin=0.0,
				ymax=100,
				ylabel=Variables \%,
				legend pos=north east,
				legend style={
					nodes={
						scale=0.55
					}
				},
				]
				\addplot+[
				color=primary,
				only marks,
				mark=*,
				mark options={scale=0.5,fill}
				]
				table[
				x=alpha,
				y=median,
				y error plus expr=\thisrow{q75} - \thisrow{median},
				y error minus expr=\thisrow{median} - \thisrow{q25},
				col sep=comma
				]
				{./data-partition-individual-experiment_full_n=6_findIndependentSubsets_vars_beta=0.0.csv};
				\addlegendentry{$\beta = 0.0$};
				\addplot+[
				color=tertiary,
				only marks,
				mark=triangle*,
				mark options={scale=0.9,fill}
				]
				table[
				x=alpha,
				y=median,
				y error plus expr=\thisrow{q75} - \thisrow{median},
				y error minus expr=\thisrow{median} - \thisrow{q25},
				col sep=comma
				]
				{./data-partition-individual-experiment_full_n=6_findIndependentSubsets_vars_beta=0.5.csv};
				\addlegendentry{$\beta = 0.5$};
				\addplot+[
				color=secondary,
				only marks,
				mark=square*,
				mark options={scale=0.5,fill}
				]
				table[
				x=alpha,
				y=median,
				y error plus expr=\thisrow{q75} - \thisrow{median},
				y error minus expr=\thisrow{median} - \thisrow{q25},
				col sep=comma
				]
				{./data-partition-individual-experiment_full_n=6_findIndependentSubsets_vars_beta=1.0.csv};
				\addlegendentry{$\beta = 1.0$};
				\addplot+[
				color=deep-red,
				only marks,
				mark=diamond*,
				mark options={scale=0.7,fill}
				]
				table[
				x=alpha,
				y=median,
				y error plus expr=\thisrow{q75} - \thisrow{median},
				y error minus expr=\thisrow{median} - \thisrow{q25},
				col sep=comma
				]
				{./data-partition-individual-experiment_full_n=6_findIndependentSubsets_vars_beta=0.01.csv};
				\addlegendentry{$\beta = 0.01$};
			\end{axis}
		\end{tikzpicture}%
	\end{minipage}%
	\begin{minipage}{0.33\linewidth}
		b)
		\hspace{-2ex}
		\vspace{-2ex}
		\linebreak	
		\begin{tikzpicture}\small
			\pgfplotsset{%
				width=1.0\linewidth,
				height=1.0\linewidth
			}
			\begin{axis}[
				xlabel=$\alpha$,
				xmin=0.25,
				xmax=0.7,
				ymin=0.0,
				ymax=100,
				ylabel=Variables \%,
				legend pos=north east,
				legend style={
					nodes={
						scale=0.55
					}
				},
				]
				\addplot+[
				color=primary,
				only marks,
				mark=*,
				mark options={scale=0.5,fill}
				]
				table[
				x=alpha,
				y=median,
				y error plus expr=\thisrow{q75} - \thisrow{median},
				y error minus expr=\thisrow{median} - \thisrow{q25},
				col sep=comma
				]
				{./data-partition-individual-experiment_full_n=6_edgeCut_vars_beta=0.0.csv};
				\addlegendentry{$\beta = 0.0$};
				\addplot+[
				color=tertiary,
				only marks,
				mark=triangle*,
				mark options={scale=0.9,fill}
				]
				table[
				x=alpha,
				y=median,
				y error plus expr=\thisrow{q75} - \thisrow{median},
				y error minus expr=\thisrow{median} - \thisrow{q25},
				col sep=comma
				]
				{./data-partition-individual-experiment_full_n=6_edgeCut_vars_beta=0.5.csv};
				\addlegendentry{$\beta = 0.5$};
				\addplot+[
				color=secondary,
				only marks,
				mark=square*,
				mark options={scale=0.5,fill}
				]
				table[
				x=alpha,
				y=median,
				y error plus expr=\thisrow{q75} - \thisrow{median},
				y error minus expr=\thisrow{median} - \thisrow{q25},
				col sep=comma
				]
				{./data-partition-individual-experiment_full_n=6_edgeCut_vars_beta=1.0.csv};
				\addlegendentry{$\beta = 1.0$};
				\addplot+[
				color=deep-red,
				only marks,
				mark=diamond*,
				mark options={scale=0.7,fill}
				]
				table[
				x=alpha,
				y=median,
				y error plus expr=\thisrow{q75} - \thisrow{median},
				y error minus expr=\thisrow{median} - \thisrow{q25},
				col sep=comma
				]
				{./data-partition-individual-experiment_full_n=6_edgeCut_vars_beta=0.01.csv};
				\addlegendentry{$\beta = 0.01$};
			\end{axis}
		\end{tikzpicture}%
	\end{minipage}
	\begin{minipage}{0.33\linewidth}
		c)
		\hspace{-2ex}
		\vspace{-2ex}
		\linebreak	
		\begin{tikzpicture}\small
			\pgfplotsset{%
				width=1.0\linewidth,
				height=1.0\linewidth
			}
			\begin{axis}[
				xlabel=$\alpha$,
				xmin=0.25,
				xmax=0.7,
				ymin=0.0,
				ymax=100,
				ylabel=Triples \%,
				legend pos=north east,
				legend style={
					nodes={
						scale=0.55
					}
				},
				]
				\addplot+[
				color=primary,
				only marks,
				mark=*,
				mark options={scale=0.5,fill}
				]
				table[
				x=alpha,
				y=median,
				y error plus expr=\thisrow{q75} - \thisrow{median},
				y error minus expr=\thisrow{median} - \thisrow{q25},
				col sep=comma
				]
				{./data-partition-individual-experiment_full_n=6_tripletCut_triples_beta=0.0.csv};
				\addlegendentry{$\beta = 0.0$};
				\addplot+[
				color=tertiary,
				only marks,
				mark=triangle*,
				mark options={scale=0.9,fill}
				]
				table[
				x=alpha,
				y=median,
				y error plus expr=\thisrow{q75} - \thisrow{median},
				y error minus expr=\thisrow{median} - \thisrow{q25},
				col sep=comma
				]
				{./data-partition-individual-experiment_full_n=6_tripletCut_triples_beta=0.5.csv};
				\addlegendentry{$\beta = 0.5$};
				\addplot+[
				color=secondary,
				only marks,
				mark=square*,
				mark options={scale=0.5,fill}
				]
				table[
				x=alpha,
				y=median,
				y error plus expr=\thisrow{q75} - \thisrow{median},
				y error minus expr=\thisrow{median} - \thisrow{q25},
				col sep=comma
				]
				{./data-partition-individual-experiment_full_n=6_tripletCut_triples_beta=1.0.csv};
				\addlegendentry{$\beta = 1.0$};
				\addplot+[
				color=deep-red,
				only marks,
				mark=diamond*,
				mark options={scale=0.7,fill}
				]
				table[
				x=alpha,
				y=median,
				y error plus expr=\thisrow{q75} - \thisrow{median},
				y error minus expr=\thisrow{median} - \thisrow{q25},
				col sep=comma
				]
				{./data-partition-individual-experiment_full_n=6_tripletCut_triples_beta=0.01.csv};
				\addlegendentry{$\beta = 0.01$};
			\end{axis}
		\end{tikzpicture}%
	\end{minipage}%
	\begin{minipage}{0.33\linewidth}
		d)
		\hspace{-2ex}
		\vspace{-2ex}
		\linebreak	
		\begin{tikzpicture}\small
			\pgfplotsset{%
				width=1.0\linewidth,
				height=1.0\linewidth
			}
			\begin{axis}[
				xlabel=$\alpha$,
				xmin=0.25,
				xmax=0.7,
				ymin=0.0,
				ymax=100,
				ylabel=Variables \%,
				legend pos=north east,
				legend style={
					nodes={
						scale=0.55
					}
				},
				]
				\addplot+[
				color=primary,
				only marks,
				mark=*,
				mark options={scale=0.5,fill}
				]
				table[
				x=alpha,
				y=median,
				y error plus expr=\thisrow{q75} - \thisrow{median},
				y error minus expr=\thisrow{median} - \thisrow{q25},
				col sep=comma
				]
				{./data-partition-individual-experiment_full_n=6_subsetJoin_vars_beta=0.0.csv};
				\addlegendentry{$\beta = 0.0$};
				\addplot+[
				color=tertiary,
				only marks,
				mark=triangle*,
				mark options={scale=0.9,fill}
				]
				table[
				x=alpha,
				y=median,
				y error plus expr=\thisrow{q75} - \thisrow{median},
				y error minus expr=\thisrow{median} - \thisrow{q25},
				col sep=comma
				]
				{./data-partition-individual-experiment_full_n=6_subsetJoin_vars_beta=0.5.csv};
				\addlegendentry{$\beta = 0.5$};
				\addplot+[
				color=secondary,
				only marks,
				mark=square*,
				mark options={scale=0.5,fill}
				]
				table[
				x=alpha,
				y=median,
				y error plus expr=\thisrow{q75} - \thisrow{median},
				y error minus expr=\thisrow{median} - \thisrow{q25},
				col sep=comma
				]
				{./data-partition-individual-experiment_full_n=6_subsetJoin_vars_beta=1.0.csv};
				\addlegendentry{$\beta = 1.0$};
				\addplot+[
				color=deep-red,
				only marks,
				mark=diamond*,
				mark options={scale=0.7,fill}
				]
				table[
				x=alpha,
				y=median,
				y error plus expr=\thisrow{q75} - \thisrow{median},
				y error minus expr=\thisrow{median} - \thisrow{q25},
				col sep=comma
				]
				{./data-partition-individual-experiment_full_n=6_subsetJoin_vars_beta=0.01.csv};
				\addlegendentry{$\beta = 0.01$};
			\end{axis}
		\end{tikzpicture}%
	\end{minipage}
	\caption{For the partition dataset with 48 elements, we report above the percentage of fixed pairs and triples with respect to the parameters $\alpha$ and $\beta$ when applying \Cref{lemma:persistency-subset-separation,proposition:edge-cut-persistency,proposition:edge-cut-persistency,lemma:persistency-triplet-cut}, (a)--(c), and \Cref{corollary:subset-join-all-pairs-at-once}, (d), separateley.}
	\label{fig:partition-individual-criteria}
\end{figure}
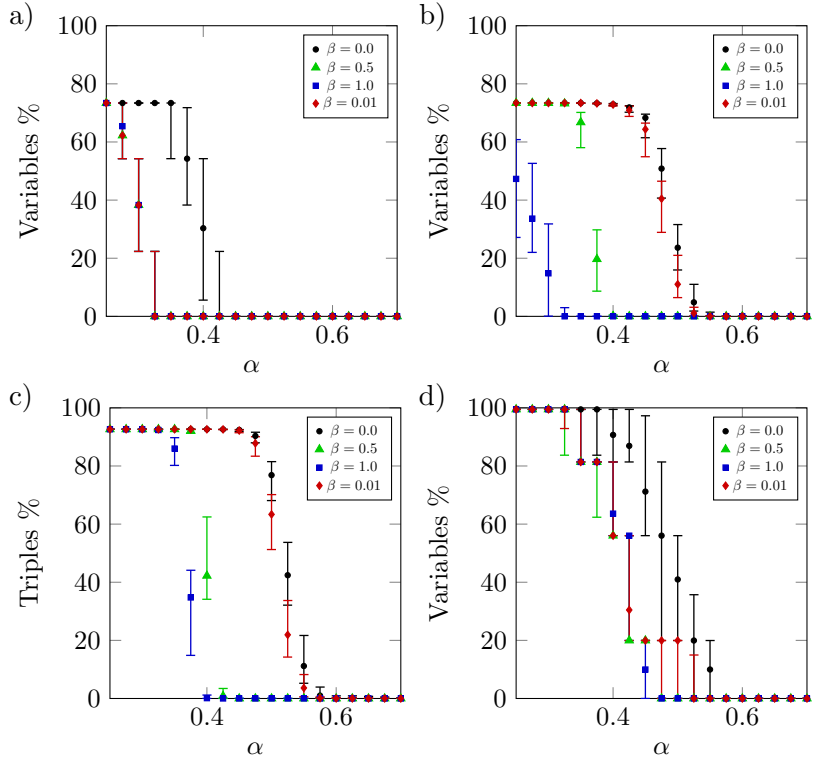%
We define the partition dataset with respect to a partition $\mathcal{R} = \left\{R_1, R_2, R_3, R_4\right\}$ of $|S| = 8n$ elements with $\vert R_1\vert = n$, $\vert R_2\vert = \vert R_3\vert = 2n$ and $\vert R_4 \vert = 3n$ elements, where $n\in \mathbb{N}$ is between 1 and 13. See also~\Cref{figure:experiments}a.
With respect to a design parameters $\alpha\in \left[0, 1\right]$, 
the costs of pairs and triplets are drawn from two Gaussian distributions with means $-1 + \alpha$ and $1-\alpha$, depending on whether their elements belong to the same set or distinct sets in the partition $\mathcal{R}$, and standard deviation $\sigma = \sigma_0 + \alpha(\sigma_1 - \sigma_0)$ with $\sigma_0 = 0.1$ and $\sigma_1 = 0.4$.
With respect to a design parameter $\beta\in \left[0, 1\right]$, the costs of pairs are multiplied by $1 - \beta$, and the costs of triples by $\beta$. 
The higher $\alpha$ is, the harder the problem becomes.
The higher $\beta$ is, the more important the costs of triples become.

The percentage of pairs and triples fixed by applying all conditions jointly, as described in \Cref{section:mixing-conditions}, is shown in \Cref{fig:partition-all-criteria-results}a. 
It can be seen from this figure that the percentage of fixed variables decreases with increasing $\alpha$. 
As $\alpha$ rises, the runtime increases but remains below one minute for all the instances; see \Cref{fig:partition-all-criteria-results}b. 
Varying $\beta$ does not affect the overall trend. 
However, the percentage of fixed variables decreases as soon as triple costs are introduced.
The percentage of pairs and triples fixed by applying \Cref{lemma:persistency-subset-separation,proposition:edge-cut-persistency,lemma:persistency-triplet-cut,corollary:subset-join-all-pairs-at-once} separately is shown in \Cref{fig:partition-individual-criteria}. 
The other partial optimality conditions do not fix any variables of these instances.
While all cut conditions settle the value of some variables, this is not the case for the join statements.
In fact, only one join condition provides partial optimality in this case: \Cref{corollary:subset-join-all-pairs-at-once}.
Interestingly, this is the one statement that fixes the most variables for almost all instances of this dataset.
For $\beta = 0.5$ and with respect the instance size, the runtime and percentage of variables fixed by applying all conditions jointly are shown in \Cref{fig:persistency-runtimes-and-variables-wrt-instance-size}a) and b). It can be seen that as the instance size increases, the number of fixed variables declines while the runtime increases. 
The runtime for $\alpha\in \{0.4, 0.5, 0.65\}$ roughly converges to $\mathcal{O}(n^{5.6})$.

\subsection{Geometric Dataset}
\label{sec:experiments-geometric}

\begin{figure}[t]
	\centering
	\begin{minipage}{0.33\linewidth}
		a)
		\hspace{-2ex}
		\vspace{-2ex}
		\linebreak
		\begin{tikzpicture}\small
			\pgfplotsset{%
				width=1.0\linewidth,
				height=1.0\linewidth
			}
			\begin{axis}[
				xlabel=$\sigma$,
				xmin=0,
				xmax=0.1,
				xtick distance=0.05,
				x tick label style={
					/pgf/number format/.cd,
					fixed
				},
				ymin=0.0,
				ymax=100,
				ylabel=Variables \%,
				legend pos=north east,
				legend style={
					nodes={
						scale=0.55
					}
				},
				]
				\addplot+[
				color=primary,
				only marks,
				mark=*,
				mark options={scale=0.5,fill}
				]
				table[
				x=sigma,
				y=median,
				y error plus expr=\thisrow{q75} - \thisrow{median},
				y error minus expr=\thisrow{median} - \thisrow{q25},
				col sep=comma
				]
				{./data-triangles-individual-experiment_findIndependentSubsets_vars.csv};
			\end{axis}
		\end{tikzpicture}
	\end{minipage}%
	\begin{minipage}{0.33\linewidth}
		b)
		\hspace{-2ex}
		\vspace{-2ex}
		\linebreak
		\begin{tikzpicture}\small
			\pgfplotsset{%
				width=1.0\linewidth,
				height=1.0\linewidth
			}
			\begin{axis}[
				xlabel=$\sigma$,
				xmin=0,
				xmax=0.1,
				xtick distance=0.05,
				x tick label style={
					/pgf/number format/.cd,
					fixed
				},
				ymin=0.0,
				ymax=100,
				ylabel=Variables \%,
				legend pos=north east,
				legend style={
					nodes={
						scale=0.55
					}
				},
				]
				\addplot+[
				color=primary,
				only marks,
				mark=*,
				mark options={scale=0.5,fill}
				]
				table[
				x=sigma,
				y=median,
				y error plus expr=\thisrow{q75} - \thisrow{median},
				y error minus expr=\thisrow{median} - \thisrow{q25},
				col sep=comma
				]
				{./data-triangles-individual-experiment_edgeCut_vars.csv};
			\end{axis}
		\end{tikzpicture}%
	\end{minipage}
	\begin{minipage}{0.33\linewidth}
		c)
		\hspace{-2ex}
		\vspace{-2ex}
		\linebreak
		\begin{tikzpicture}\small
			\pgfplotsset{%
				width=1.0\linewidth,
				height=1.0\linewidth
			}
			\begin{axis}[
				xlabel=$\sigma$,
				xmin=0,
				xmax=0.1,
				xtick distance=0.05,
				x tick label style={
					/pgf/number format/.cd,
					fixed
				},
				ymin=0.0,
				ymax=100,
				ylabel=Triples \%,
				legend pos=north east,
				legend style={
					nodes={
						scale=0.55
					}
				},
				]
				\addplot+[
				color=primary,
				only marks,
				mark=*,
				mark options={scale=0.5,fill}
				]
				table[
				x=sigma,
				y=median,
				y error plus expr=\thisrow{q75} - \thisrow{median},
				y error minus expr=\thisrow{median} - \thisrow{q25},
				col sep=comma
				]
				{./data-triangles-individual-experiment_tripletCut_triplets.csv};
			\end{axis}
		\end{tikzpicture}%
	\end{minipage}%
	\begin{minipage}{0.33\linewidth}
		d)
		\hspace{-2ex}
		\vspace{-2ex}
		\linebreak
		\begin{tikzpicture}\small
			\pgfplotsset{%
				width=1.0\linewidth,
				height=1.0\linewidth
			}
			\begin{axis}[
				xlabel=$\sigma$,
				xmin=0,
				xmax=0.1,
				xtick distance=0.05,
				x tick label style={
					/pgf/number format/.cd,
					fixed
				},
				ymin=0.0,
				ymax=100,
				ylabel=Variables \%,
				legend pos=north east,
				legend style={
					nodes={
						scale=0.55
					}
				},
				]
				\addplot+[
				color=primary,
				only marks,
				mark=*,
				mark options={scale=0.5,fill}
				]
				table[
				x=sigma,
				y=median,
				y error plus expr=\thisrow{q75} - \thisrow{median},
				y error minus expr=\thisrow{median} - \thisrow{q25},
				col sep=comma
				]
				{./data-triangles-individual-experiment_subsetJoin_vars.csv};
			\end{axis}
		\end{tikzpicture}%
	\end{minipage}
	\caption{For the geometric dataset with 45 points, we report above the percentage of fixed variables and triples with respect to the parameter $\sigma$ when employing \Cref{lemma:persistency-subset-separation,proposition:edge-cut-persistency,lemma:persistency-triplet-cut}, (a)--(c), and \Cref{corollary:subset-join-all-pairs-at-once}, (d), individually.}
	\label{fig:equilateral-individual-criteria}
\end{figure}
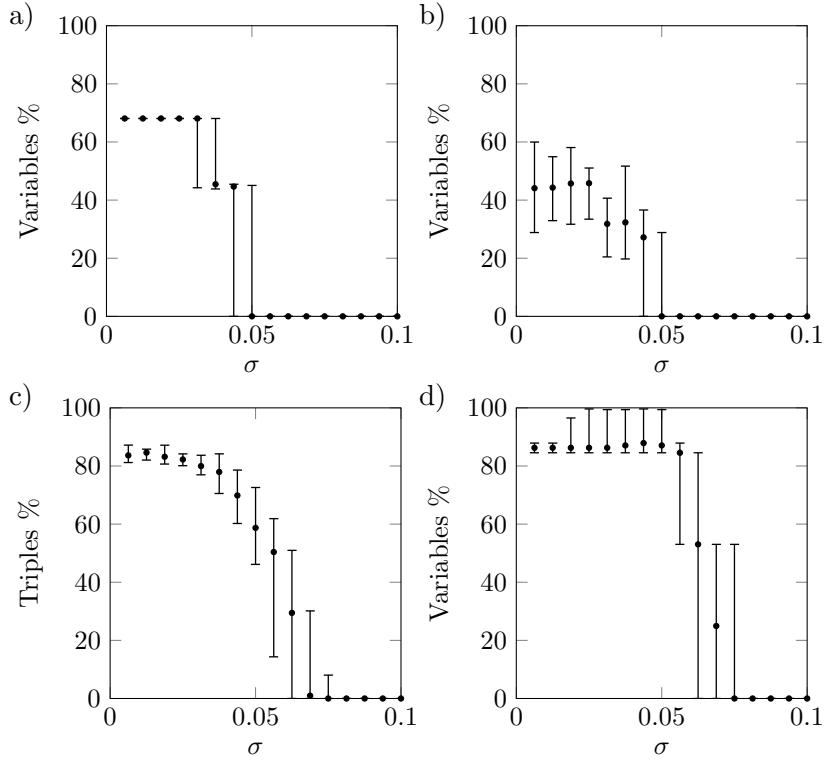

Next, we consider a dataset of instances that arise from the geometric problem of finding equilateral triangles in a noisy point cloud; see \Cref{figure:experiments}b.
For this, we fix three equilateral triangles in the plane. 
For each vertex $\vec{a}$ of a triangle, we draw a number of points from a Gaussian distribution with mean $\vec{a}$ and covariance matrix $\sigma^2\mathbbm{1}$. 
For any three points $\vec{a}_p, \vec{a}_q, \vec{a}_r$, 
let $\varphi_p$, $\varphi_q, \varphi_r$ be the interior angles of the triangle spanned by these points, and let $d^{max}_{pqr}$ and $d^{min}_{pqr}$ be the maximum and minimum length of edges in this triangle.
If the three points are mutually close, $d^{max}_{pqr} \leq 4\sigma$, we reward solutions in which these belong to the same set by letting $c_{pqr} = -1 + \frac{d^{max}_{pqr}}{4\sigma}$. 
If only two points are close, $d^{max}_{pqr} > 4\sigma$ and $d^{min}_{pqr} \leq 4\sigma$, we let $c_{pqr} = 0$. 
If the three points are mutually far apart, $d^{min}_{pqr} > 4\sigma$, we calculate the sum of the deviations of the inner angles from $\frac \pi 3, $ $\delta_{pqr} = \sum_i\vert \varphi_i - \frac{\pi}{3}\vert$. 
If this quantity is below $\frac \pi 6$, we let $c_{pqr} = -1 + \frac{6\delta_{pqr}}{\pi}$. Otherwise, $c_{pqr} = \frac{6}{7}\frac{\delta_{pqr} - \frac{\pi}{6}}{\pi}$. 

The percentage of pairs and triples fixed by applying all conditions jointly, as described in \Cref{section:mixing-conditions}, is reported in \Cref{fig:partition-all-criteria-results}c. 
Here, the hardness of the instances is embodied by $\sigma$. 
The number of points is 45.

As $\sigma$ increases, the percentage of fixed variables decreases. 
The runtime increases, as can be seen from \Cref{fig:partition-all-criteria-results}d, and stays below one minute for all these instances.
The percentage of pairs and triples fixed by applying \Cref{lemma:persistency-subset-separation,proposition:edge-cut-persistency,lemma:persistency-triplet-cut,corollary:subset-join-all-pairs-at-once} separately is shown in \Cref{fig:equilateral-individual-criteria}. 
Also here, all the cut conditions are effective whereas the only useful join condition is \Cref{corollary:subset-join-all-pairs-at-once}.
Moreover, \Cref{corollary:subset-join-all-pairs-at-once} is overall the most effective.
The runtime and percentage of variables fixed by applying all conditions jointly and with respect the instance size are shown in \Cref{fig:persistency-runtimes-and-variables-wrt-instance-size}c) and d). Similar to the partition dataset, we see that the number of fixed variables decreases as the instance size increases, while the runtime gets worse. 
The runtime for $\sigma\in \{0.06, 0.1\}$ roughly converges to $\mathcal{O}(n^{5.8})$. 

\section{Conclusion}
We establish partial optimality conditions for the cubic set partition problem, which can be seen as the special case of cubic correlation clustering for complete graphs.
In particular, we generalize all such conditions known for correlation clustering with a linear objective function to arbitrary cubic objective functions.
In addition, we establish new partial optimality conditions.
Furthermore, we define and implement exact algorithms and heuristics for testing all established conditions efficiently.
Lastly, we quantify the effect of these algorithms on two datasets.
Regarding these numerical experiments, we note that all cut conditions are effective on the tested datasets, whereas join conditions pose a bigger challenge. 
In fact, \Cref{proposition:subset-join-proposition}, in its simplified form of \Cref{corollary:subset-join-all-pairs-at-once}, is the only join property that is beneficial in our numerical experiments. 
Yet, in almost all cases, it is the one statement that fixes the most variables (\Cref{fig:partition-individual-criteria,fig:equilateral-individual-criteria}). 
We remark that \Cref{corollary:subset-join-all-pairs-at-once} is one of the newly proposed conditions. 
Perspectives for future work include the exploitation of sparsity of non-zero cost coefficients, as well as applications to subspace clustering and object recognition.

\section{Acknowledgement}
Bjoern Andres and David Stein acknowledge funding by the Federal Ministry of Education and Research of Germany, from grant 01LC2006A.

\bibliographystyle{plainnat}
\bibliography{manuscript}

\appendix
\section{Appendices}
\ifthenelse{\boolean{proofs}}{}{\subsection{Proofs}}

\ifthenelse{\boolean{proofs}}{}{
\begin{delayedproof}{def: pb}
For each partition $\Pi$ of the set $S$ and every distinct $p, q \in S$, let $x_{pq} = 1$ if and only if $p$ and $q$ are in the same set of $\Pi$.
This establishes a one-to-one relation between the set $P_S$ of all partitions of $S$ and the set $X_S$ \citep{goetschel-1989}.
Under this bijection, the objective functions of \Cref{def: first def} and \Cref{def: pb}
are equivalent.
\end{delayedproof}
}

\ifthenelse{\boolean{proofs}}{}{
\begin{delayedproof}{lemma:persistency-predicate}
	Let $x^*$ be an optimal solution to $\min_{x\in X}\phi(x)$ such that $x^*\not\in Q$. Then $\sigma(x^*)$ is also an optimal solution to $\min_{x\in X}\phi(x)$ and $\sigma(x^*)\in Q$.
\end{delayedproof}
}

\ifthenelse{\boolean{proofs}}{}{
\begin{delayedproof}{lemma:persistency-subset-separation}
We define $\sigma \colon \cp_S \to \cp_S$ such that for all $x \in \cp_S$ we have that
\begin{equation}
\sigma(x) := \begin{cases}
x & \text{ if } x_{ij} = 0 \:\:	 \forall ij \in \delta(R) \\
\sigma_{\delta(R)}(x) & \textnormal{otherwise}
\end{cases}.
\end{equation}
For any $x\in \cp_S$, let $x' = \sigma(x)$. 
First, the map $\sigma$ is such that $x'_{ij} = 0$ 
for all $ij \in \delta(R)$. 
Second, for any $x\in \cp_S$ such that there exists $ij\in \delta(R)$ such that $x_{ij} = 1$, we have that
\begin{equation}
\phi_c(x') - \phi_c(x) = - \sum_{pqr \in T_{\delta(R)}} c_{pqr}x_{pq}x_{pr}x_{qr}- \sum_{pq\in \delta(R)} c_{pq}x_{pq} \leq - \sum_{pqr\in T_{\delta(R)}\cap T^-} c_{pqr} - \sum_{pq\in \delta(R)\cap P^-}c_{pq}
= \;0.
\end{equation}
The last equality is due to the fact that those sums vanish by assumptions \eqref{eq:edge-cut-condition-1} and \eqref{eq:edge-cut-condition-2}.
Applying Corollary~\ref{lemma:persistency-variable} concludes the proof.
\end{delayedproof}
}

\ifthenelse{\boolean{proofs}}{}{
\begin{delayedproof}{proposition:edge-cut-persistency}
Let 
$\sigma \colon \cp_S \to \cp_S$ be constructed as 
\begin{equation}
\sigma(x) := \begin{cases}
x & \textnormal{if $x_{ij} = 0$} \\
\sigma_{\delta(R)}(x) & \textnormal{otherwise}
\end{cases}.
\end{equation}
For any $x\in \cp_S$, let $x' = \sigma(x)$. 
First of all, the map $\sigma$ is such that $x'_{ij} = 0$ for all $x\in \cp_S$. 
Next, for any $x\in \cp_S$ such that $x_{ij} = 1$ we have that 
\begin{align}
\phi_c(x') - \phi_c(x) & = -c_{ij} -\sum_{pqr\in T_{\delta(R)}} c_{pqr}x_{pq} x_{pr}x_{qr} - \sum_{\substack{pq\in \delta(R) \\ pq \neq ij}}c_{pq}x_{pq} \\
& \leq -c_{ij} + \sum_{pqr\in T_{\delta(R)}}c_{pqr}^- + \sum_{\substack{pq \in \delta(R)\\ pq \neq ij}}c_{pq}^- = -c_{ij}^+ + \sum_{pqr\in T_{\delta(R)}}c_{pqr}^- + \sum_{pq \in \delta(R)}c_{pq}^- \leq \;0.
\end{align}
The last inequality follows from assumption \eqref{eq:assumption-edge-cut-inequality}.
We conclude the proof by applying Corollary~\ref{lemma:persistency-variable}. 
\end{delayedproof}
}

\ifthenelse{\boolean{proofs}}{}{
\begin{delayedproof}{lemma:persistency-triplet-cut}
We define $\sigma \colon \cp_S\to \cp_S$ as 
\begin{equation}
\sigma(x) := \begin{cases}
x & \textnormal{if $x_{ij}x_{ik}x_{jk} = 0$}\\
\sigma_{\delta(R)}(x) & \textnormal{otherwise}
\end{cases}.
\end{equation}
For any $x\in \cp_S$, we denote $\sigma(x)$ by $x'$. 
We observe that   
$x'_{ij}x'_{ik}x'_{jk} = 0$ for all $x\in \cp_S$. 
Second, for any $x\in \cp_S$ such that $x_{ij}x_{ik}x_{jk} = 1$ it holds that
\small{
\begin{align}
\phi_c(x') - \phi_c(x) & = - \sum_{pqr \in T_{\delta(R)}} c_{pqr}x_{pq}x_{pr}x_{qr} - \sum_{pq\in \delta(R)} c_{pq}x_{pq} \leq -c_{ijk} - c_{ij} - c_{ik} + \sum_{\substack{pqr\in T_{\delta(R)}\\ pqr \neq ijk}} c_{pqr}^- + \sum_{\substack{pq\in \delta(R)\\ pq\not\in \{ij, ik\}}} c_{pq}^-\\
& = -c_{ijk}^+ - c_{ij}^+ - c_{ik}^+ + \sum_{pqr\in T_{\delta(R)}} c_{pqr}^- + \sum_{pq\in \delta(R)} c_{pq}^- \leq \;0.
\end{align}
}
\normalsize
The last inequality holds because of 
assumption \eqref{eq:triplet-cut-condition}.
Applying Proposition~\ref{lemma:persistency-predicate} with $Q = \{x\in \cp_S \mid x_{ij}x_{ik}x_{jk} = 0\}$ concludes the proof.
\end{delayedproof}
}

\ifthenelse{\boolean{proofs}}{}{
\begin{delayedproof}{lemma:edge-join-persistency}
Let $\sigma \colon \cp_S \to \cp_S$ be such that for all $x \in \cp_S$ it holds 
that 
\begin{equation}
\sigma(x) := \begin{cases}
x & \textnormal{if $x_{ij} = 1$} \\
\left(\sigma_{ij}\circ \sigma_{\delta(R)}\right)(x) & \textnormal{otherwise}
\end{cases}.
\end{equation}
For any $x\in \cp_S$, let $x' = \sigma(x)$. 
Firstly, the map $\sigma$ is such that $x'_{ij} = 1$ for all $x\in \cp_S$. 
We now show that $\sigma$ is improving. 
In particular, let $x \in \cp_S$ such that $x_{ij} = 0$.  
We observe that $x'_{pq} = x_{pq}$ for all $pq \not\in \delta(R)$. 
Therefore,
\begin{align}
\phi_c(x') - \phi(x) & = \sum_{pqr\in T_{\delta(R)}}c_{pqr}\left(x'_{pq}x'_{pr}x'_{qr} - x_{pq}x_{pr}x_{qr}\right) + \sum_{pq\in \delta(R)}c_{pq}\left(x'_{pq} - x_{pq}\right) \\
& = \sum_{pqr\in T_{\delta(R)}\setminus T_{\{ij\}}}c_{pqr}\left(x'_{pq}x'_{pr}x'_{qr} - x_{pq}x_{pr}x_{qr}\right) +\sum_{pqr\in T_{\{ij\}}}c_{pqr}x'_{pq}x'_{pr}x'_{qr}  + c_{ij} \\
& + \sum_{\substack{pq\in \delta(R)\\ pq \neq ij}}c_{pq}\left(x'_{pq} - x_{pq}\right)\\
			& \leq \; \sum_{pqr\in T_{\delta(R)}\setminus T_{\{ij\}}}\vert c_{pqr}\vert + \sum_{pqr\in T_{\{ij\}}}c_{pqr}^+ + c_{ij} + \sum_{\substack{pq\in \delta(R)\\ pq \neq ij}}\vert c_{pq}\vert \\
			& = \sum_{pqr\in T_{\delta(R)}}\vert c_{pqr}\vert - \sum_{pqr\in T_{\{ij\}}}c_{pqr}^- - 2c_{ij}^-  + \sum_{pq\in \delta(R)}\vert c_{pq}\vert \leq  0 .
		\end{align}
We remark that the last inequality is due to assumption \eqref{eq:edge-join-inequality}.
\end{delayedproof}
}

\ifthenelse{\boolean{proofs}}{}{
\begin{delayedproof}{proposition:triplet-join}
We construct 
$\sigma \colon \cp_S \to \cp_S$ as follows 
\begin{equation}
\sigma(x) := \begin{cases}
x & \textnormal{if $x_{ij}x_{ik}x_{jk} = 1$} \\
\left(\sigma_{ijk}\circ \sigma_{\delta(R)}\right)(x) & \textnormal{otherwise}
\end{cases}.
\end{equation}
For any $x \in \cp_S$, let $x' = \sigma(x)$. 
First, the map $\sigma$ is such that $x'_{ij}x'_{ik}x'_{jk} = 1$ for all $x\in \cp_S$. 
Note that, for any $x\in \cp_S$ such that $x_{ij}x_{ik}x_{jk} = 0$, we have that 
$x'_{pq} \geq x_{pq}$ for all $pq \not\in \delta(R)$. 
It follows that 
\begin{align}
\phi_c(x') - \phi(x) & = \sum_{\substack{pqr\in T_{\delta(R)} \\ pqr \neq ijk}}c_{pqr}(x'_{pq}x'_{pr}x'_{qr} - x_{pq}x_{pr}x_{qr}) + c_{ijk} + \sum_{pqr\not\in T_{\delta(R)}}c_{pqr}(x'_{pq}x'_{pr}x'_{qr} - x_{pq}x_{pr}x_{qr}) \\					
& + \sum_{pq\in \{ij, ik, jk\}}c_{pq}(1-x_{pq}) + \sum_{\substack{pq\in \delta(R)\\ pq \not\in \{ij, ik\}}}c_{pq}(x'_{pq} - x_{pq}) + \sum_{pq\not\in \delta(R)\cup \{jk\}}c_{pq}(x'_{pq} - x_{pq})\\	
& \leq \; c_{ijk} +  \max_{\substack{x\in \cp{ijk} \\ x_{ij}x_{ik}x_{jk} = 0}}\sum_{pq\in \tbinom{ijk}{2}}c_{pq} (1 - x_{pq}) + \sum_{\substack{pqr\in T_{\delta(R)} \\ pqr \neq ijk}}\vert c_{pqr}\vert  + \sum_{\substack{pqr\in T^+ \\ pqr \notin T_{\delta(R)}}} c_{pqr} \\
& + \sum_{\substack{pq\in \delta(R) \\ pq \notin \{ij, ik\}}} \vert c_{pq} \vert + \sum_{\substack{pq \in P^+ \\ pq \notin \left(\delta(R)\cup \{jk\}\right)}}c_{pq} \\	
& = \; c_{ijk} +  \max_{\substack{x\in \cp_{ijk} \\ x_{ij}x_{ik}x_{jk} = 0}}\sum_{pq\in \tbinom{ijk}{2}}c_{pq} (1 - x_{pq}) + \sum_{\substack{pqr\in T^+ \cup T_{\delta(R)} \\ pqr \neq ijk}}\vert c_{pqr}\vert + \sum_{\substack{pq\in P^+ \cup \delta(R)\\ pq \not\in \{ij, ik, jk\}}}\vert c_{pq} \vert \\
& = \; -2c_{ijk}^- -  2c_{ij}^- - 2c_{ik}^- - c_{jk}^- - \min_{\substack{x\in \cp_{ijk} \\ x_{ij}x_{ik}x_{jk} = 0}}\sum_{pq\in \tbinom{ijk}{2}}c_{pq} x_{pq}\\
& + \sum_{\substack{pqr\in T_{\delta(R)}}}c_{pqr}^- +\sum_{pqr\in \tbinom S3}c_{pqr}^+ + \sum_{pq\in \tbinom S2}c_{pq}^+ + \sum_{pq\in \delta(R)}c_{pq}^- \leq  0.
\end{align}
Assumption \eqref{eq:assumption-triplet-join} provides the last inequality.
We arrive at the thesis by applying Proposition~\ref{lemma:persistency-predicate} with $Q = \{x\in \cp_S \mid x_{ij}x_{ik}x_{jk} = 1\}$. 
\end{delayedproof}
}

\ifthenelse{\boolean{proofs}}{}{
\begin{delayedproof}{lemma:persistency-triangle-edge-join}
Let $\sigma \colon \cp_S \to \cp_S$ be defined as 
\begin{equation}
\sigma(x) := \begin{cases}
x & \textnormal{if $x_{ik} = 1$}\\
(\sigma_{ik} \circ \sigma_{\delta(R)})(x) & \textnormal{if $x_{ik} = x_{ij} = 0 ,x_{jk} = 1$} \\
(\sigma_{ik}\circ \sigma_{\delta(R')})(x) & \textnormal{if $x_{ik} = x_{jk} = 0 , x_{ij} = 1$} \\
(\sigma_{ijk}\circ \sigma_{\delta(ijk)})(x) & \textnormal{if $x_{ik} = x_{ij} = x_{jk} = 0$}
\end{cases}.
\end{equation}
We use the notation $x' = \sigma(x)$ for all $x \in \cp$.
We note that the map $\sigma$ is such that $x'_{ik} = 1$ for all $x\in \cp_S$. 
Second, for any $x\in \cp_S$ such that $x_{ik} = x_{ij} = 0$ and $x_{jk} = 1$, the map $\sigma_{ik}\circ \sigma_{\delta(R)}$ is such that $x'_{ij}=x'_{ik} = x'_{jk} = 1$ and $x'_{pq} = x_{pq}$ for any $pq \not\in \delta(R)$. 
It holds that
\begin{align}
\phi_c(x') - \phi_c(x) & = c_{ijk} + c_{ij} + c_{ik} + \sum_{\substack{pqr \in T_{\{ij, ik\}}\\ pqr\neq ijk}} c_{pqr}x'_{pq}x'_{pr}x'_{qr} \\
& + \sum_{\substack{pqr\in T_{\delta(R)} \\ pqr \notin T_{\{ij, ik\}}}} c_{pqr}\left(x'_{pq}x'_{pr}x'_{qr} - x_{pq}x_{pr}x_{qr}\right) + \sum_{\substack{pq\in \delta(R) \\ pq \not\in \{ij, ik\}}} c_{pq}(x'_{pq} - x_{pq}) \\
& \leq c_{ijk} + c_{ij} + c_{ik} +\sum_{\substack{pqr\in T_{\{ij, ik\}}\cap T^+ \\ pqr\neq ijk}}c_{pqr} + \sum_{\substack{pqr\in T_{\delta(R)} \\ pqr \notin T_{\{ij, ik\}}}}\vert c_{pqr}\vert + \sum_{\substack{pq\in \delta(R) \\ pq \not\in \{ij, ik\}}}  \vert c_{pq}\vert \\
& = -c_{ijk}^- - 2c_{ij}^- - 2c_{ik}^- +\sum_{\substack{pqr\in T_{\{ij, ik\}}}}c_{pqr}^+ + \sum_{pqr\in T_{\delta(R)}}\vert c_{pqr}\vert - \sum_{pqr\in T_{\{ij, ik\}}}\vert c_{pqr}\vert+ \sum_{pq\in \delta(R)}  \vert c_{pq}\vert \\
& = -c_{ijk}^- - 2c_{ij}^- - 2c_{ik}^- -\sum_{\substack{pqr\in T_{\{ij, ik\}}}}c_{pqr}^- + \sum_{pqr\in T_{\delta(R)}}\vert c_{pqr}\vert + \sum_{pq\in \delta(R)}  \vert c_{pq}\vert \leq 0.
\end{align}	
The last inequality follows from 
assumption \eqref{eq:triangle-edge-join-1}.
Third, for any $x\in \cp_S$ such that $x_{ik} = x_{jk} = 0$ and $x_{ij} = 1$, the map $\sigma_{ik}\circ \sigma_{\delta(R')}$ is improving by analogous arguments and assumption \eqref{eq:triangle-edge-join-2}.
Finally, for any $x\in \cp_S$ such that $x_{ik} = x_{jk} = x_{ij} = 0$, the map $\sigma_{ijk}\circ \sigma_{\delta(ijk)}$ is such that 
\begin{equation}
(\sigma_{ijk}\circ \sigma_{\delta(ijk)})_{pq} = \begin{cases}
0 & \textnormal{if $pq \in \delta(ijk)$} \\
1 & \textnormal{if $pq \in \{ij, ik, jk\}$} \\
x_{pq} & \textnormal{otherwise}
\end{cases}.
\end{equation}
Therefore, 
\begin{align}
\phi_c(x') - \phi_c(x) & = c_{ijk} + c_{ij} + c_{ik} + c_{jk} - \sum_{pqr\in T_{\delta(ijk)}\setminus T_{\{ij, ik, jk\}}} c_{pqr}x_{pq}x_{pr}x_{qr} - \sum_{pq\in \delta(ijk)} c_{pq} x_{pq} \\
& \leq c_{ijk} + c_{ij} + c_{ik} + c_{jk}  +\sum_{\substack{pqr\in T_{\delta(ijk)} \cap T^- \\ pqr\not\in T_{\{ij, ik, jk\}}}} \vert c_{pqr} \vert + \sum_{pq\in \delta(ijk)\cap P^-} \vert c_{pq}\vert \leq 0.
\end{align}
The last inequality is true thanks to to assumption \eqref{eq:triangle-edge-join-3}.
Applying Corollary~\ref{lemma:persistency-variable} concludes the proof.
\end{delayedproof}
}

\ifthenelse{\boolean{proofs}}{}{
\begin{delayedproof}{lemma:general-subgraph-edge-join}
First, we prove a lemma that establishes a relation which will be needed at the end of this proof. 
\begin{lemma}
	\label{lemma:subgraph-helping-lemma-1}
	Let $S \neq \emptyset$ and $c \in \mathbb{R}^{\mathcal{I}_S}$. 
	We define $c'\in \mathbb{R}^{\mathcal{I}_S}$ as in \eqref{eq: defcprime1}, \eqref{eq: defcprime2}, \eqref{eq: defcprime3} for $S_H = S$.
	Then for any partition $\mathcal{R}$ of $S$ it holds that 
	{
		\small
		\begin{align}
			\phi_{c'}(x^\mathcal{R}) & = \;\frac 12 \sum_{pqr\in \tbinom S3} c_{pqr}\prod_{uv\in \tbinom{pqr}{2}}(1-x^\mathcal{R}_{uv}) + \sum_{pqr\in \tbinom S3}c_{pqr}\sum_{uv \in \tbinom{pqr}{2}}x^\mathcal{U}_{uv}\prod_{\substack{u'v'\in \tbinom{pqr}{2} \\ u'v' \neq uv}}(1- x^\mathcal{R}_{u'v'}) \\
			& + \sum_{pq\in \tbinom S2}c_{pq}(1-x^\mathcal{R}_{pq})	\label{eq:partition-subgraph-helper-1} \\ 
			& = \;\frac 12 \sum_{RR'R''\in \tbinom{\mathcal{R}}{3}}\sum_{pqr\in T_{RR'R''}}c_{pqr} +\sum_{RR'\in \tbinom{\mathcal{R}}{2}}\sum_{pq\in \delta(R, R')}c_{pq} + \sum_{RR'\in \tbinom{\mathcal{R}}{2}}\left(\sum_{pqr\in T_{RRR'}}c_{pqr} +  \sum_{pqr\in T_{RR'R'}}c_{pqr}\right) ,
			\label{eq:partition-subgraph-helper-2}
		\end{align}
	}%
	where $x^\mathcal{R}$ denotes the feasible vector corresponding to the partition $\mathcal{R}$ of $S$.
\end{lemma}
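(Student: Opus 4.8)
The plan is to prove the two displayed equalities of \Cref{lemma:subgraph-helping-lemma-1} separately, both resting on the single structural fact that, for a vector $x^{\mathcal{R}}$ arising from a partition, ``lying in the same set'' is transitive. Concretely, for every $pqr \in \tbinom S3$ one has
\begin{equation}
x^{\mathcal{R}}_{pq}x^{\mathcal{R}}_{qr} = x^{\mathcal{R}}_{pq}x^{\mathcal{R}}_{pr} = x^{\mathcal{R}}_{pr}x^{\mathcal{R}}_{qr} = x^{\mathcal{R}}_{pq}x^{\mathcal{R}}_{pr}x^{\mathcal{R}}_{qr} \enspace,
\end{equation}
since any two of the three indicators being $1$ forces, by transitivity of the equivalence relation induced by $\mathcal{R}$, the third to be $1$ as well. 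I would record this identity first, as it collapses every pairwise product inside a triple to the common triple product.

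For the first equality \eqref{eq:partition-subgraph-helper-1}, I would expand the two inner expressions over $\tbinom{pqr}{2}$ using the identity above, obtaining the closed forms
\begin{align}
\prod_{uv\in\tbinom{pqr}{2}}(1-x^{\mathcal{R}}_{uv}) &= 1 - x^{\mathcal{R}}_{pq} - x^{\mathcal{R}}_{pr} - x^{\mathcal{R}}_{qr} + 2x^{\mathcal{R}}_{pq}x^{\mathcal{R}}_{pr}x^{\mathcal{R}}_{qr}, \\
\sum_{uv\in\tbinom{pqr}{2}}x^{\mathcal{R}}_{uv}\prod_{\substack{u'v'\in\tbinom{pqr}{2}\\ u'v'\neq uv}}(1-x^{\mathcal{R}}_{u'v'}) &= x^{\mathcal{R}}_{pq} + x^{\mathcal{R}}_{pr} + x^{\mathcal{R}}_{qr} - 3x^{\mathcal{R}}_{pq}x^{\mathcal{R}}_{pr}x^{\mathcal{R}}_{qr}.
\end{align}
Substituting these into the right-hand side of \eqref{eq:partition-subgraph-helper-1} and plugging in the definitions \eqref{eq: defcprime1}--\eqref{eq: defcprime3} of $c'$, I would collect the resulting expression by the degree of the monomials in $x^{\mathcal{R}}$. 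The coefficient of each triple product $x^{\mathcal{R}}_{pq}x^{\mathcal{R}}_{pr}x^{\mathcal{R}}_{qr}$ reduces to $-2c_{pqr} = c'_{pqr}$; after using the symmetry $\sum_{pqr}c_{pqr}(x^{\mathcal{R}}_{pq}+x^{\mathcal{R}}_{pr}+x^{\mathcal{R}}_{qr}) = \sum_{pq}x^{\mathcal{R}}_{pq}\sum_{r\in S\setminus\{p,q\}}c_{pqr}$, the coefficient of each $x^{\mathcal{R}}_{pq}$ reduces to $-c_{pq}+\tfrac12\sum_{r\in S\setminus\{p,q\}}c_{pqr} = c'_{pq}$; and the remaining constant collects to $\tfrac12\sum_{pqr}c_{pqr}+\sum_{pq}c_{pq} = c'_\emptyset$. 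This exhibits the right-hand side as $\phi_{c'}(x^{\mathcal{R}})$ termwise.

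For the second equality \eqref{eq:partition-subgraph-helper-2}, I would observe that, evaluated at $x^{\mathcal{R}}$, each of the three summands of \eqref{eq:partition-subgraph-helper-1} is governed by a $0/1$ indicator with a direct combinatorial meaning: $\prod_{uv}(1-x^{\mathcal{R}}_{uv}) = 1$ exactly when $p,q,r$ lie in three distinct sets of $\mathcal{R}$; the middle expression equals $1$ exactly when precisely two of them share a set; and $1-x^{\mathcal{R}}_{pq}=1$ exactly when $p$ and $q$ are separated. I would then partition the index sets $\tbinom S3$ and $\tbinom S2$ according to the membership pattern of their elements among the sets of $\mathcal{R}$ and rewrite each sum as one indexed by cluster triples and pairs: triples in three distinct sets group into $\tbinom{\mathcal{R}}{3}$ via $T_{RR'R''}$, triples with exactly two elements together group into $\tbinom{\mathcal{R}}{2}$ via $T_{RRR'}\cup T_{RR'R'}$, and separated pairs group into $\tbinom{\mathcal{R}}{2}$ via $\delta(R,R')$, yielding exactly the three sums of \eqref{eq:partition-subgraph-helper-2} with the prefactor $\tfrac12$ carried over unchanged.

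The routine algebra of the first part is unproblematic; the main obstacle is the bookkeeping in the second part. I must verify that the translation of each summand into a sum over cluster pairs or triples is simultaneously \emph{exhaustive} and \emph{non-overcounting}, i.e.~that every $3$-subset and every pair is assigned to exactly one membership-pattern class and that the definitions of $T_{RR'R''}$, $T_{RRR'}$, $T_{RR'R'}$ and $\delta(R,R')$ partition the contributing index sets without overlap. In particular, care is needed to confirm that the ``exactly two together'' case splits cleanly into $T_{RRR'}$ and $T_{RR'R'}$ over an unordered cluster pair $RR'$, and that no triple with all three elements in a single set of $\mathcal{R}$ contributes to any of the three sums.
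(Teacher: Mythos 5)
Your proposal is correct and follows essentially the same route as the paper's own proof: the same transitivity identity $x^{\mathcal{R}}_{pq}x^{\mathcal{R}}_{qr} = x^{\mathcal{R}}_{pq}x^{\mathcal{R}}_{pr}x^{\mathcal{R}}_{qr}$, the same two closed-form expansions of the inner product and inner sum, the same substitute-and-collect verification of the first equality against the definitions of $c'$, and the same indicator characterizations (three distinct sets, exactly two together, separated pair) with regrouping over cluster pairs and triples for the second equality. The exhaustiveness and disjointness bookkeeping you flag as the main obstacle is precisely what the paper leaves implicit in its three equivalences, so your added care is harmless but not a departure.
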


\begin{cpfc}[\Cref{lemma:subgraph-helping-lemma-1}]
We use the fact that for any partition $\mathcal{R}$ of $S$ and any $pqr\in \tbinom S3$ we have that 
$x_{pq}^\mathcal{R}x_{qr}^\mathcal{R}=x_{pq}^\mathcal{R}x_{pr}^\mathcal{R} = x_{pr}^\mathcal{R}x_{qr}^\mathcal{R} = x_{pq}^\mathcal{R}x_{pr}^\mathcal{R}x_{qr}^\mathcal{R}$. 
Expanding the inner products and the inner sums leads to
\begin{align}
		\prod_{uv\in \tbinom{pqr}{3}}\left(1-x^\mathcal{R}_{uv}\right) & = 1 - x_{pq}^\mathcal{R} - x_{pr}^\mathcal{R}-x_{qr}^\mathcal{R} + 2x_{pq}^\mathcal{R}x_{pr}^\mathcal{R}x_{qr}^\mathcal{R}, \\
		\sum_{uv\in \tbinom{pqr}{2}}x_{uv}^\mathcal{R}\prod_{\substack{u'v'\in \tbinom{pqr}{2} \\ u'v' \notin \{uv\}}}(1-x_{u'v'}^\mathcal{R}) & = x^\mathcal{R}_{pq} + x^\mathcal{R}_{pr} + x^\mathcal{R}_{qr} -3 x^\mathcal{R}_{pq}x^\mathcal{R}_{qr}x^\mathcal{R}_{pr}.
\end{align}
By plugging in and collecting terms we conclude the proof for equality \eqref{eq:partition-subgraph-helper-1}. 
Equality~\eqref{eq:partition-subgraph-helper-2} follows instead from the following observations: 
\begin{align}
\prod_{ab\in \tbinom{pqr}{3}}\left(1-x^\mathcal{R}_{ab}\right) = 1 & \Leftrightarrow \; \exists \;RR'R''\in \tbinom{\mathcal{R}}{3}\colon pqr\in T_{RR'R''}, \\
\sum_{ab\in \tbinom{pqr}{2}}x_{ab}^\mathcal{R}\prod_{a'b'\in \tbinom{pqr}{2}\setminus \{ab\}}(1-x_{a'b'}^\mathcal{R}) = 1 & \Leftrightarrow \; \exists \;RR'\in \tbinom{\mathcal{R}}{2}\colon \left(pqr\in T_{RRR'} \lor pqr\in T_{RR'R'}\right),\\
1 - x^\mathcal{R}_{pq} = 1 & \Leftrightarrow \exists \; RR'\in \tbinom{\mathcal{R}}{2}\colon pq\in \delta(R, R') .
\end{align}
This concludes the proof.
\end{cpfc}

\noindent
We define $\sigma \colon \cp_S \to \cp_S$ as 
\begin{equation}
\sigma(x) := \begin{cases}
x & \textnormal{if $x_{ij} = 1$}\\
(\sigma_{S_H} \circ \sigma_{\delta(S_H)})(x) & \textnormal{otherwise}
\end{cases}.
\end{equation}
Let $x' = \sigma(x)$ for any $x \in \cp_S$. 
It is easy to see that $x'_{ij} = 1$ for all $x\in \cp_S$. 
Similarly to before, 
we show that $\sigma$ is an improving map. 
For any $x \in \cp_S$ such that $x_{ij} = 1$ we have that $\phi_c(x') - \phi_c(x) = 0$ by definition of $x'$. 
Now, we consider $x\in \cp_S$ such that $x_{ij} = 0$. 
Let $P_H = \tbinom{S_H}{2}$ and $T_H = \tbinom{S_H}{3}$. 
We denote the restriction of $x$ containing only components corresponding to elements in 
$P_H$ by $x\vert_{P_H}$. 
Let $\mathcal{R}$ be the partition of $S$ such that $x = x^\mathcal{R}$, and let $\mathcal{R}_H$ be the induced partition of $S_H$ such that $x\vert_{P_H} = x^{\mathcal{R}_H}$. 
Since $x_{ij} = 0$, there exist $R_1, R_2\in \mathcal{R}_H$ such that $i \in R_1$, $j \in R_2$. 
We have that
\begin{equation}
x'_{pq} = \begin{cases}
1 & \textnormal{if $pq \in P_H$}\\
0 & \textnormal{if $pq \in \delta(S_H)$} \\
x_{pq} & \textnormal{otherwise}
\end{cases}.
\end{equation}
Therefore, it follows that
\small{
\begin{equation}\label{eq:subgraph-plugin-map-1} 
\phi_c(x') - \phi_c(x) = \sum_{pq\in P_H} c_{pq}(1- x_{pq}) + \sum_{pqr\in T_H}c_{pqr}(1-x_{pq}x_{pr}x_{qr}) - \sum_{pq\in \delta(S_H)}c_{pq}x_{pq} - \sum_{pqr\in T_{\delta(S_H)}}c_{pqr}x_{pq}x_{pr}x_{qr} .
\end{equation}
}
\normalsize
In order to find an upper bound for the sums over $P_H$ and $T_H$, we show that there exists a subset $R \subset S_H$ with $i \in R$ and $j \in S_H \setminus R$ such that 
\begin{equation} \label{eq:subgraph-simplification-1}
\sum_{pq\in P_H}c_{pq}\left(1-x_{pq}\right) + \sum_{pqr\in T_H}c_{pqr}(1-x_{pq}x_{pr}x_{qr}) \leq \sum_{pq\in \delta(R, S_H \setminus R)}c_{pq} + \sum_{pqr\in T_{\delta(R, V_H \setminus R)}\cap T_H} c_{pqr}.
\end{equation}

\noindent	
For the sake of contradiction, we assume that there is no such $R \subset S_H$.
For any $\mathcal{R}'\subset \mathcal{R}_H$ let $R_{\mathcal{R}'} = \bigcup_{P'\in \mathcal{R}'}P'$. 
Furthermore we define $t \colon\tbinom{\mathcal{R}_H}{2}\cup \tbinom{\mathcal{R}_H}{3} \to \mathbb{R}$ and $p \colon \tbinom{\mathcal{R}_H}{2} \to \mathbb{R}$ as 
\begin{align}
t_{RR'R''} &= \sum_{pqr\in T_{RR'R''}}c_{pqr}, \qquad\quad\;\;\, \forall RR'R''\in \tbinom{\mathcal{R}_H}{3}\\ 
t_{RR'} &= \sum_{pqr\in T_{RRR'}\cup T_{RR'R'}}c_{pqr}, \quad \forall RR'\in \tbinom{\mathcal{R}_H}{2} 
\\
p_{RR'} &= \sum_{pq\in \delta(R, R')}c_{pq}, \qquad\qquad\quad\, \forall RR'\in \tbinom{\mathcal{R}_H}{2}.
\end{align}
Therefore, let $\mathcal{R}'\subset \mathcal{R}_H$ with $R_1\in \mathcal{R}'$ and $R_2\not\in \mathcal{R}'$. 
We observe that this implies that $i \in R_{\mathcal{R}'}$ and $j \notin R_{\mathcal{R}'}$, since $\mathcal{R}_H$ is a partition of $H$. 
It holds that 
\begin{equation}\label{eq:subgraph-simplification-2}
\sum_{pq \in P_H}c_{pq}\left(1-x_{pq}\right) + \sum_{pqr\in T_H}c_{pqr}(1-x_{pq}x_{pr}x_{qr}) > \sum_{pq \in \delta(R_{\mathcal{R}'}, S_H\setminus R_{\mathcal{R}'})}c_{pq} + \sum_{pqr\in T_{\delta(R_{\mathcal{R}'}, S_H\setminus R_{\mathcal{R}'})}\cap T_H} c_{pqr}.
\end{equation}
We evaluate the terms in \eqref{eq:subgraph-simplification-2} one-by-one, and express them as sums over elements in $\mathcal{R}'$ and $\mathcal{R}_H \setminus \mathcal{R}'$.
First, we observe that for any $pq \in P_H$ we have that $x_{pq} = 0$ if and only if there exist $RR'\in \tbinom{\mathcal{R}_H}{2}$ such that $pq\in \delta(R, R')$.
Therefore,
\begin{equation}
\sum_{pq \in P_H}c_{pq}\left(1-x_{pq}\right) = \sum_{RR'\in \tbinom{\mathcal{R}_H}{2}}p_{RR'},
\end{equation}
whereas
\small{
\begin{equation}
\sum_{pq \in \delta(R_{\mathcal{R}'}, S_H \setminus R_{\mathcal{R}'})}c_{pq} = \sum_{R \in \mathcal{R}'} \sum_{R'\in \mathcal{R}_H \setminus \mathcal{R}'} p_{RR'}.
\end{equation}
}
\normalsize
For the first sum we use the decomposition 
\begin{equation} \label{eq: decomposition1}
\tbinom{\mathcal{R}_H}{2} = \tbinom{\mathcal{R}'}{2}\cup \left\{RR' \mid R \in \mathcal{R}'\land R'\in \mathcal{R}_H\setminus \mathcal{R}'\right\} \cup \tbinom{\mathcal{R}_H\setminus \mathcal{R}'}{2}, 
\end{equation}
where the subsets are mutually disjoint.
Consequently, it holds that
\begin{equation}\label{eq:subgraph-edge-difference}
\sum_{pq\in P_H}c_{pq}\left(1-x_{pq}\right) - \sum_{pq\in \delta(W_{\mathcal{R}'}, V_H\setminus W_{\mathcal{R}'})}c_{pq} = \sum_{RR'\in \tbinom{\mathcal{R}'}{2}}p_{RR'} + \sum_{RR'\in \tbinom{\mathcal{R}_H\setminus \mathcal{R}'}{2}}p_{RR'}.
\end{equation}
Second, for any $pqr\in T_H$ we have that $x_{pq}x_{pr}x_{qr} = 0$ if and only if there exist $RR'\in \tbinom{\mathcal{R}_H}{2}$ such that $pqr\in T_{RRR'}\cup T_{RR'R'}$ or there exist $RR'R''\in \tbinom{\mathcal{R}_H}{3}$ such that $pqr\in T_{RR'R''}$.
Therefore,
\begin{equation}
\sum_{pqr\in T_H}c_{pqr} \left(1 - x_{pq}x_{pr}x_{qr}\right) = \sum_{RR'R''\in \tbinom{\mathcal{R}_H}{3}}t_{RR'R''} + \sum_{RR'\in \tbinom{\mathcal{R}_H}{2}}t_{RR'} ,
\end{equation}
whereas
\begin{align}
\sum_{pqr\in T_{\delta(R_{\mathcal{R}'}, S_H \setminus R_{\mathcal{R}'})}\cap T_H} c_{pqr} & = \sum_{RR'\in \tbinom{\mathcal{R}'}{2}}\sum_{R''\in \mathcal{R}_H \setminus \mathcal{R}'}t_{RR'R''} + \sum_{R\in \mathcal{R}'}\sum_{R'R''\in \tbinom{\mathcal{R}_H \setminus \mathcal{R}'}{2}}t_{RR'R''} \\
& + \sum_{R\in \mathcal{R}'}\sum_{R'\in \mathcal{R}_H \setminus \mathcal{R}'}t_{RR'}.
\end{align}
For the first sum we use the decomposition
\small{
\begin{equation} \label{eq: decomposition2}
\tbinom{\mathcal{R}_H}{3} =\tbinom{\mathcal{R}'}{3}\cup \left\{RR'R''\mid RR'\in \tbinom{\mathcal{R'}}{2}\land R''\in \mathcal{R}_H \setminus \mathcal{R}'\right\} 
\cup \left\{RR'R''\mid R\in \mathcal{R}'\land R'R''\in \tbinom{\mathcal{R}_H\setminus \mathcal{R}'}{2}\right\} \cup \tbinom{\mathcal{R}_H \setminus \mathcal{R}'}{3},
\end{equation}
}
\normalsize
where again the subsets are mutually disjoint.	
By \eqref{eq: decomposition1} and \eqref{eq: decomposition2}
, it follows that
\begin{align}
\sum_{pqr\in T_H}c_{pqr} \left(1 - x_{pq}x_{pr}x_{qr}\right) - \sum_{pqr\in T_{\delta(R_{\mathcal{R}'}, S_H \setminus R_{\mathcal{R}'})}\cap T_H}c_{pqr} & = \sum_{RR'R''\in \tbinom{\mathcal{R'}}{3}}t_{RR'R''} + \sum_{RR'R''\in \tbinom{\mathcal{R}_H\setminus \mathcal{R'}}{3}}t_{RR'R''}
\\
&  + \sum_{RR'\in \tbinom{\mathcal{R'}}{2}}t_{RR'} + \sum_{RR'\in \tbinom{\mathcal{R}_H\setminus \mathcal{R'}}{2}}t_{RR'}. \label{eq:subgraph-triplet-difference}
\end{align}

\noindent
Combining \eqref{eq:subgraph-simplification-2}, \eqref{eq:subgraph-edge-difference} and \eqref{eq:subgraph-triplet-difference} 
yields
\begin{align}
0 < & \sum_{pq\in P_H} c_{pq}(1-x_{pq}) + \sum_{pqr\in T_H} c_{pqr}(1-x_{pq}x_{pr}x_{qr}) - \sum_{pq\in \delta(R_{\mathcal{R}'}, S_H \setminus R_{\mathcal{R}'})}c_{pq} -\sum_{pqr\in T_{\delta(R_{\mathcal{R}'}, S_H \setminus R_{\mathcal{R}'})}\cap T_H}c_{pqr}\\
& = \sum_{RR'\in \tbinom{\mathcal{R'}}{2}}p_{RR'} + \sum_{RR'\in \tbinom{\mathcal{R}_H\setminus \mathcal{R'}}{2}}p_{RR'} + \sum_{RR'R''\in \tbinom{\mathcal{R'}}{3}}t_{RR'R''} + \sum_{RR'R''\in \tbinom{\mathcal{R}_H\setminus \mathcal{R'}}{3}}t_{RR'R''}\\
& + \sum_{RR'\in \tbinom{\mathcal{R'}}{2}}t_{RR'} + \sum_{RR'\in \tbinom{\mathcal{R}_H\setminus \mathcal{R'}}{2}}t_{RR'} =: S_{\mathcal{R}'}.
\end{align}
Let $k = \vert \mathcal{R}_H\vert$, and $S_{\mathcal{R}'}$ the right-hand side of the last inequality. 
Recall that $R_1, R_2 \in \mathcal{R}_H$, $R_1 \in \mathcal R'$, and $R_2 \notin \mathcal R'$. 
As $S_{\mathcal{R}'} > 0$, it follows that at least one of the sums in its definition must not be vacuous. 
Moreover, since its sums are indexed by pairs or triplets of subsets all belonging either to $\mathcal R'$ or to $\mathcal R_H \setminus \mathcal R'$, we observe that there must exist at least another subset
of elements in $\mathcal R_H$ different from $R_1$ and $R_2$. 
Hence, $k \geq 3$. 
We calculate 
\begin{equation}
\mathcal S = \sum_{\substack{\mathcal{R'}\subseteq \mathcal{R}_H\colon \\R_1\in \mathcal{R}', R_2\not\in \mathcal{R}'}}S_{\mathcal{R}'}.
\end{equation}
We need this in order to contradict $\max_{x\in \cp_{S_H}}\phi_{c'}(x) = 0$. 
For any $RR'\in \tbinom{\mathcal{R}_H}{2}\setminus \{R_1R_2\}$, there are exactly $2^{k-3}$ 
subsets $\mathcal{R}'\subseteq \mathcal{R}_H$ such that $p_{RR'}$ or $t_{RR'}$ occurs in $S_{\mathcal{R}'}$ and $R_1\in \mathcal{R}', R_2 \not\in \mathcal{R}'$. 
There is no $\mathcal{R}'\subseteq \mathcal{R}_H$ such that $p_{R_1R_2}$ or $t_{R_1R_2}$ occurs in $S_{\mathcal{R}'}$ with $R_1\in \mathcal{R}', R_2\not\in \mathcal{R}'$. 
For any $RR'R''\in \tbinom{\mathcal{R}_H}{3}\setminus\left\{R_1R_2R\mid R\in \mathcal{R}_H\setminus \{R_1, R_2\}\right\}$, there are exactly $\lfloor 2^{k-4} \rfloor$ 
subsets $\mathcal{R}'\subseteq \mathcal{R}_H$ such that $t_{RR'R''}$ occurs in $S_{\mathcal{R}'}$ and $R_1\in \mathcal{R}', R_2\not\in \mathcal{R}'$. 
There is no $\mathcal{R}'\subseteq \mathcal{R}_H$ such that $t_{R_1R_2R}$ occurs in $S_{\mathcal{R}'}$ for any $R\in \mathcal{R}_H\setminus \{R_1, R_2\}$ for which $R_1\in \mathcal{R}', R_2\not\in \mathcal{R}'$.
Therefore, 
\begin{align}
0 < \mathcal S & = 2^{k-3}\sum_{RR'\in \tbinom{\mathcal{R}_H}{2}}p_{RR'} - 2^{k-3}p_{R_1R_2} + \lfloor 2^{k-4} \rfloor \sum_{RR'R''\in \tbinom{\mathcal{R}_H}{3}}t_{RR'R''} - \lfloor 2^{k-4} \rfloor \sum_{\substack{R\in \mathcal{R}_H \\ R \not\in \{R_1, R_2\}}}t_{R_1R_2R}\\
& + 2^{k-3}\sum_{RR'\in \tbinom{\mathcal{R}_H}{2}}t_{RR'} - 2^{k-3}t_{R_1R_2} \\
& = 2^{k-3}\sum_{RR'\in \tbinom{\mathcal{R''}}{2}}p_{RR'}+ \lfloor 2^{k-4} \rfloor \sum_{RR'R''\in \tbinom{\mathcal{R''}}{3}}t_{RR'R''} + 2^{k-3}\sum_{RR'\in \tbinom{\mathcal{R''}}{2}}t_{RR'} = 2^{k-3}\phi_{c'}(x^{\mathcal{R''}}),
\end{align}
where $\mathcal{R}'' = \left(\mathcal{R}_H\setminus \{R_1, R_2\}\right)\cup \{R_1 \cup  R_2\}$ is the partition obtained by merging $R_1$ and $R_2$. 
The last equality follows from Lemma~\ref{lemma:subgraph-helping-lemma-1}. 
That contradicts $\max_{x\in \cp_{S_H}}\phi_{c'}(x) = 0$. 
Therefore, this implies that there exists a subset $R \subset S_H$ with $i \in R$ and $j \in S_H \setminus R$ such that inequality \eqref{eq:subgraph-simplification-1} is fulfilled.
	
Let $R \subset S_H$ be a subset such that \eqref{eq:subgraph-simplification-1} holds. Therefore, we have that
\begin{align}
\phi_c(x') - \phi_c(x) & \overset{\eqref{eq:subgraph-plugin-map-1}}{=} \sum_{pq\in P_H} c_{pq}(1- x_{pq}) + \sum_{pqr\in T_H}c_{pqr}(1-x_{pq}x_{pr}x_{qr}) - \sum_{pq\in \delta(S_H)}c_{pq}x_{pq} \\
&  - \sum_{pqr\in T_{\delta(S_H)}}c_{pqr}x_{pq}x_{pr}x_{qr} \\
& \overset{\eqref{eq:subgraph-simplification-1}}{\leq} \sum_{pq\in \delta(R, S_H\setminus R)}c_{pq} + \sum_{pqr\in T_{\delta(R, S_H\setminus R)}\cap T_H} c_{pqr} - \sum_{pq\in \delta(S_H)\cap P^-}c_{pq} \\
& - \sum_{pqr\in T_{\delta(S_H)}\cap T^-}c_{pqr} \overset{\eqref{eq:assumption-subgraph-criterion-uv-cuts}}{\leq} \;0.
\end{align}
Consequently, the map $p$ is improving. 
By applying Corollary~\ref{lemma:persistency-variable} we conclude the proof.
\end{delayedproof}
}

\ifthenelse{\boolean{proofs}}{}{
\begin{delayedproof}{proposition:subset-join-proposition}
We define $\sigma \colon \cp_S \to \cp_S$ such that 
\begin{equation}
\sigma(x) := \begin{cases}
x & \textnormal{if $x_{ij} = 1$, $\forall ij \in \tbinom R2$}\\
\left(\sigma_R \circ \sigma_{\delta(R)}\right)(x) & \textnormal{otherwise}
\end{cases}.
\end{equation}
Let $x' = \sigma(x)$, for every $x \in \cp_S$. 
First, 
it holds that $x'_{ij} = 1$, for every $ij \in \tbinom R2$. 
Second, we show that $\sigma$ is an improving map. 
Let $x \in \cp_S$ such that $x_{ij} = 1$, for all $ij \in \tbinom R2$.
In this case, we have that $\phi_c(x') = \phi_c(x)$ by definition of $x'$. 
Now, let us consider the complementary case, i.e. let $x \in \cp_S$ such that there exists $ij \in \tbinom R2$ for which $x_{ij} = 0$. 
Then,
\begin{equation}
x'_{pq} = \begin{cases}
1 & \textnormal{if $pq \in \tbinom R2$}\\
0 & \textnormal{if $pq \in \delta(R)$}\\
x_{pq} & \textnormal{otherwise}
\end{cases}.
\end{equation} 
Therefore, it follows that
\begin{align}
\phi_c(x') - \phi_c(x) & = \sum_{pq\in \tbinom R2} c_{pq}(1-x_{pq}) - \sum_{pq\in \delta(R)} c_{pq}x_{pq} + \sum_{pqr \in \tbinom R3} c_{pqr}(1-x_{pq}x_{pr}x_{qr})  \\
& - \sum_{pqr\in T_{\delta(R)}} c_{pqr}x_{pq}x_{pr}x_{qr}   \\
& \leq \max_{\substack{x\in \cp_S \\ x_{ij} = 0}} \Bigl\{ \sum_{pqr\in \tbinom R2} c_{pqr}(1-x_{pq}x_{pr}x_{qr}) + \sum_{pq\in \tbinom R2} c_{pq}(1-x_{pq}) \Bigr\} \\
& - \min_{\substack{x\in \cp_S \\ x_{ij} = 0}}\Bigl\{ \sum_{pqr\in T_{\delta(R)}} c_{pqr}x_{pq}x_{pr}x_{qr} 
 + 
\sum_{pq\in \delta(R)}c_{pq}x_{pq} \Bigr\} \overset{\eqref{eq:subset-join-inequality}}{\leq} \; 0.
\end{align}
This concludes the proof.
\end{delayedproof}
}

\ifthenelse{\boolean{proofs}}{}{
\begin{delayedproof}{prop: algo works}
We start by observing that \Cref{alg:region-growing-separation} always terminates.
If it returns a nontrivial partition $\mathcal{R}$, then $\mathcal{R}$ contains a subset $R$ that satisfies \eqref{eq:edge-cut-condition-1}--\eqref{eq:edge-cut-condition-2} by construction.
Therefore, let us assume that the output of \Cref{alg:region-growing-separation} is the trivial partition $\mathcal{R} = \{ S \}$.
If indeed there exists no nontrivial subset of $S$ for which \eqref{eq:edge-cut-condition-1}--\eqref{eq:edge-cut-condition-2} hold, then \Cref{alg:region-growing-separation} is returning the correct output.
Next, we consider the case in which there exists a nontrivial subset of $S$ that satisfies the assumptions of \Cref{lemma:persistency-subset-separation}, but \Cref{alg:region-growing-separation} still returns the trivial partition.
We prove that this cannot happen.
Let $R \subseteq S$ be a nontrivial subset of $S$ for which \eqref{eq:edge-cut-condition-1}--\eqref{eq:edge-cut-condition-2} are satisfied.
Note that such a set must exist by the assumptions of this case.
Moreover we have that both $R$ and $S \setminus R$ 
are not empty.
Two cases can arise at this point: \Cref{alg:region-growing-separation} starts either from an element of $R$ or from an item of $S \setminus R$.
Let \Cref{alg:region-growing-separation} start sampling from $R$.
The fact that $\mathcal{R} = \{ S \}$ implies that $\exists pq \in \delta(R)$ such that $c_{pq} < 0$ or $\exists pqr \in T_{\delta(R)}$ such that $c_{pqr} < 0$ by definition of \Cref{alg:region-growing-separation}.
However, this is in contradiction with the assumption that $R$ satisfies \eqref{eq:edge-cut-condition-1}--\eqref{eq:edge-cut-condition-2}.
Since the second scenario is symmetrical, we again reach a contradiction 
by applying an analogous reasoning.
Therefore, we have showed that if there exists a nontrivial subset of $S$ that fulfills \eqref{eq:edge-cut-condition-1}--\eqref{eq:edge-cut-condition-2}, then \Cref{alg:region-growing-separation} finds such a subset.
\end{delayedproof}
}

\ifthenelse{\boolean{proofs}}{}{
\begin{delayedproof}{eq:cubic-st-cut-reduction-min-st-cut}
Let $R \subseteq S$. 
Observe that
\begin{align} \label{eq:identity-sum-all-TdeltaR-triplets}
\sum_{pqr\in T_{\delta(R)}}c_{pqr} & = \sum_{pq\in \tbinom R2}\sum_{r\in S \setminus R}c_{pqr} + \sum_{pq\in \tbinom{S\setminus R}{2}}\sum_{r\in R} c_{pqr} \\
& = \frac 12 \sum_{p\in R}\sum_{q\in R \setminus \{p\}}\sum_{r\in S\setminus R}c_{pqr} + \frac 12 \sum_{p\in S \setminus R}\sum_{q\in S \setminus \left(R \cup \{p\}\right)}\sum_{r\in R}c_{pqr} \\
& = \frac 12 \sum_{p\in R}\sum_{q\in S \setminus R}\left(\sum_{r\in R \setminus \{p\}}c_{pqr} + \sum_{r\in S \setminus \left(R \cup \{q\}\right)}c_{pqr}\right) = \frac 12 \sum_{pq\in \delta(R)}\sum_{r\in S \setminus \{p, q\}}c_{pqr}.
\end{align}%
\end{delayedproof}
}

\ifthenelse{\boolean{proofs}}{}{
\begin{delayedproof}{lemma:qpbo-translation}
Let $R \subseteq S$ such that $i\in R$ and $j\not \in R$, $\forall j \in S_0$. 
We define $y\in \{0, 1\}^S$ such that $y = \mathbbm{1}_R$. 
Then, we have that $y_i = 1$ and $y_j = 0$, $\forall j \in S_0$. 
Moreover, it follows that
\begin{align}
\sum_{pq\in \delta(R)} c_{pq} & = \sum_{pq\in \tbinom S2}c_{pq}\left(y_p (1-y_q) + y_q (1-y_p)\right) = \sum_{pq\in \tbinom S2}c_{pq}\left(y_p + y_q - 2y_p y_q\right) \\
& = -2\sum_{pq\in \tbinom S2}c_{pq}y_p y_q + \sum_{\substack{p,q \in S \\ p \neq q}} c_{pq}y_p  \\ 
& = -2 \sum_{pq\in \tbinom{S'}{2}}c_{pq}y_p y_q - 2\sum_{p\in S'}c_{pi} y_p + \sum_{p\in S'}\sum_{q\in S \setminus \{p\}}c_{pq}y_p + \sum_{q\in S \setminus \{i\}}c_{qi}\\
& = -2\sum_{pq\in \tbinom{S'}{2}}c_{pq}y_p y_q + \sum_{p\in S'}\left(-2c_{pi} + \sum_{q\in S \setminus \{p\}}c_{pq}\right)y_p + \sum_{q\in S \setminus \{i\}}c_{qi} \\
& = \sum_{pq\in \tbinom{S'}{2}}c'_{pq}y_p y_q + \sum_{p\in S'} c'_p y_p + c'_\emptyset.
\end{align}
This concludes the proof. 
\end{delayedproof}
}

\ifthenelse{\boolean{proofs}}{}{
\begin{delayedproof}{lemma:contraction-cost-adjustments}
Let $x\in \cp_S \vert_{x_{ij} = 1}$. 
We show that $\phi_c(x) = \phi_{c'}(\varphi_{ij}(x))$. 
Let $x' = \varphi_{ij}(x)$. 
We use the fact that $x_{pi} = x_{pj}$, $\forall p\in S \setminus \{i, j\}$, and $x_{ij} = 1$. 
It follows that
\small{
\begin{align}
\phi_{c'}(x') & = \sum_{pqr\in \tbinom{S'}{3}}c'_{pqr}x'_{pq}x'_{pr}x'_{qr} + \sum_{pq\in \tbinom S2} c'_{pq}x'_{pq} + c'_{\emptyset}\\
& = \sum_{pq\in \tbinom{S \setminus\{i, j\}}{2}}c'_{pqi}x'_{pi}x'_{qi}x'_{pq} + \sum_{pqr\in \tbinom{S \setminus\{i, j\}}{3}}c'_{pqr}x'_{pq}x'_{pr}x'_{qr} + \sum_{p\in S \setminus\{i, j\}}c'_{pi}x'_{pi}+ \sum_{pq\in \tbinom{V\setminus\{i, j\}}{2}}c'_{pq}x'_{pq} + c'_{\emptyset}\\
& = \sum_{pq\in \tbinom{S \setminus\{i, j\}}{2}}(c_{pqi} + c_{pqj})x_{pi}x_{qi}x_{pq} + \sum_{pqr\in \tbinom{S \setminus\{i, j\}}{3}}c_{pqr}x_{pq}x_{pr}x_{qr} + \sum_{pq\in \tbinom{S \setminus\{i, j\}}{2}}c_{pq}x_{pq} \\
&+ \sum_{p\in S \setminus\{i, j\}}(c_{pi} + c_{pj} + c_{pij})x_{pi}+ c'_{\emptyset}\\
&= \sum_{pq\in \tbinom{S \setminus\{i, j\}}{2}}c_{pqi}x_{pi}x_{qi}x_{pq} + \sum_{pq\in \tbinom{S \setminus\{i, j\}}{2}}c_{pqj}x_{pj}x_{qj}x_{pq} + \sum_{pqr\in \tbinom{S \setminus\{i, j\}}{3}}c_{pqr}x_{pq}x_{pr}x_{qr} \\
& + \sum_{p\in V}c_{pij}x_{ij}x_{pi}x_{pj} 
+ \sum_{pq\in \tbinom{S \setminus\{i, j\}}{2}}c_{pq}x_{pq} + \sum_{p\in V\setminus\{i, j\}}c_{pi}x_{pi} + \sum_{p\in S \setminus\{i, j\}}c_{pj}x_{pj} + c_{ij}x_{ij} + c_{\emptyset}  \\
&= \sum_{pqr\in \tbinom S3}c_{pqr}x_{pq}x_{pr}x_{qr} + \sum_{pq\in S}c_{pq}x_{pq} + c_{\emptyset} = \phi_c(x).
\end{align}
}
\normalsize
Therefore we have that
\begin{equation}
\min_{x\in \cp_S \vert_{x_{ij} = 1}} \phi_c(x) = \min_{x\in \cp_S \vert_{x_{ij} = 1}} \phi_{c'}(\varphi_{ij}(x)) = \min_{x\in \cp_{S'}}\phi_{c'}(x).
\end{equation}
This concludes the proof.
\end{delayedproof}
}

\subsection{Reduction of QPBO to Min-$st$-Cut}
\label{appendix:qpbo-to-min-st-cut-propositions}

\begin{lemma}
	\label{lemma:qubo-to-max-flow-analogue}
	Let $S \neq \emptyset$ and $c\in \mathbb{R}^{S \cup \tbinom S2}$. 
	We define $c'\in \mathbb{R}^{S \cup \tbinom S2}$ as
		$c'_{p} = c_{p} + \frac 12 \sum_{q\in S \setminus \{p\}} c_{pq}$, for every $p \in S$, $c'_{pq} = -\frac 12 c_{pq}$, for every $pq \in \tbinom S2$.
	Then, for any $y\in \{0, 1\}^V$ we have that
		\begin{align}
			\sum_{pq\in \tbinom S2}c_{pq} y_py_q + \sum_{p\in S} c_p y_p =  \sum_{p\in S} \sum_{q\in S\setminus \{p\}} c'_{pq}y_p (1-x_q) + \sum_{\substack{p\in S \\ c'_p > 0}} c'_p y_p -\sum_{\substack{p\in S \\ c'_p< 0}} c'_p (1-y_p) + \sum_{\substack{p\in S \\ c'_p < 0}}c'_p.
		\end{align}
\end{lemma}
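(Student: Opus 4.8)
The plan is to verify the identity by direct algebraic expansion of the right-hand side, substituting the definitions of $c'$ and regrouping terms. This is a purely computational lemma that records the standard rewriting of a quadratic pseudo-Boolean function into the form of a cut function, i.e.\ pairwise terms $y_p(1-y_q)$ plus unary source/sink contributions; hence no structural insight is required beyond careful bookkeeping. Throughout I read the displayed $x_q$ as $y_q$ and the index set $V$ as $S$.

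First I would treat the double sum $\sum_{p\in S}\sum_{q\in S\setminus\{p\}} c'_{pq} y_p(1-y_q)$. Using $c'_{pq} = -\tfrac12 c_{pq}$ together with $y_p(1-y_q) = y_p - y_p y_q$, this splits into a quadratic and a linear part. For the quadratic part, each unordered pair $pq$ is counted twice (once as the ordered pair $(p,q)$ and once as $(q,p)$), so $\tfrac12\sum_{p}\sum_{q\neq p} c_{pq} y_p y_q = \sum_{pq\in\tbinom S2} c_{pq} y_p y_q$, reproducing exactly the quadratic term of the left-hand side. For the linear part, I would factor out $-\tfrac12\sum_{p} y_p \sum_{q\neq p} c_{pq}$ and recognize, from the definition of $c'_p$, that $\tfrac12\sum_{q\neq p} c_{pq} = c'_p - c_p$; thus the double sum contributes $\sum_{pq} c_{pq} y_p y_q - \sum_p c'_p y_p + \sum_p c_p y_p$.

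Next I would simplify the three unary terms. The two sums restricted to $c'_p < 0$ combine, after cancelling the constant $\sum_{p:\,c'_p<0} c'_p$, as $-\sum_{p:\,c'_p<0} c'_p(1-y_p) + \sum_{p:\,c'_p<0} c'_p = \sum_{p:\,c'_p<0} c'_p y_p$. Together with $\sum_{p:\,c'_p>0} c'_p y_p$, and since indices with $c'_p = 0$ contribute nothing, all three collapse to $\sum_{p\in S} c'_p y_p$. Adding this to the contribution of the double sum, the two occurrences of $\mp\sum_p c'_p y_p$ cancel and what remains is precisely $\sum_{pq\in\tbinom S2} c_{pq} y_p y_q + \sum_{p\in S} c_p y_p$, the left-hand side. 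The only real pitfall is the ordered-versus-unordered pair bookkeeping (the factor of two) and keeping the sign case-split on $c'_p$ clean; beyond that there is no genuine obstacle.
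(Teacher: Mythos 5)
Your proposal is correct and is essentially the paper's own argument run in reverse: the paper expands the left-hand side into the cut-function form (using $y_py_q = y_p - y_p(1-y_q)$, the factor-$\tfrac12$ ordered/unordered pair conversion, and the sign split on $c'_p$), while you expand the right-hand side back to the quadratic form using exactly the same three identities. The direction of the computation is immaterial, so no substantive difference exists between the two proofs.
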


\begin{proof}
Let $y\in \{0, 1\}^S$. 
We have that
\begin{align}
\sum_{pq\in \tbinom S2}c_{pq}y_p y_q + \sum_{p\in S}c_py_p & = \frac{1}{2}\sum_{p\in S}\sum_{q\in S \setminus \{p\}}c_{pq}y_p y_q + \sum_{p\in V}c_py_p \\
& = -\frac{1}{2}\sum_{p\in S} \sum_{q\in S \setminus \{p\}} c_{pq}y_p (1-y_q) + \frac{1}{2}\sum_{p\in S} \sum_{q\in S \setminus \{p\}} c_{pq}y_p  + \sum_{p\in S} c_{p}y_p \\
& = -\frac{1}{2}\sum_{p\in S} \sum_{q\in S \setminus \{p\}} c_{pq}y_p (1-y_q) + \sum_{p\in S} \left(c_p + \frac{1}{2}\sum_{q\in S \setminus \{p\}} c_{pq}\right)y_p  \\
& = \sum_{p\in S} \sum_{q\in S \setminus \{p\}} c'_{pq}y_p (1-y_q) + \sum_{p\in V} c'_p y_p = \sum_{p\in S} \sum_{q\in S \setminus \{p\}} c'_{pq}y_p (1-y_q) \\ 
& + \sum_{\substack{p\in V\\ c'_p > 0}} c'_p y_p -\sum_{\substack{p\in S \\ c'_p< 0}} c'_p (1-y_p) + \sum_{\substack{p\in S \\ c'_p < 0}}c'_p.
\end{align}
We therefore reach the thesis. 
\end{proof}

We reduce this problem to solving an instance of min-$st$-cut.
If $c_{pq} \leq 0$, $pq\in \tbinom S2$, and therefore $c'_{pq} \geq 0$, $\forall pq\in \tbinom S2$, in Lemma~\ref{lemma:qubo-to-max-flow-analogue} the resulting instance can be solved efficiently.

\begin{proposition}\label{prop:last-prop}
	Let $S \neq \emptyset$ and $c\in \mathbb{R}^{S \cup \tbinom S2}$. 
	We define $\phi_c \colon \{0, 1\}^S \to \mathbb{R}$ such that for all $y\in \{0, 1\}^S$ it holds that
		\begin{align}
			\phi_c(y) = &\sum_{p\in S} \sum_{q\in S \setminus \{p\}} c_{pq}y_p (1-y_q) 
			+ \sum_{\substack{p\in S\\ c_p > 0}} c_p y_p - \sum_{\substack{p\in S \\ c_p< 0}} c_p(1-y_p).
		\end{align}
	Furthermore, we define $S' = S \cup \{s, t\}$, $P'\subset S'\times S'$ such that 
	\begin{align}
		(s, p)\in P' \Leftrightarrow c_p < 0&,\quad \forall p \in S, \\
		(p, t)\in P' \Leftrightarrow c_p > 0&,\quad \forall p \in S, \\
		(p, q)\in P' \land (q, p)\in P'& , \quad \forall pq\in \tbinom S2,
	\end{align}
	and $c' \in \mathbb{R}^{P'}$ such that 
	\begin{align}
		c'_{(s, p)} &= -c_p, \;\,\forall (s, p) \in P', \\
		c'_{(p, t)} &= c_p, \quad\,\, \forall (p, t)\in P',\\
		c'_{(p, q)} = c'_{(q, p)}&= c_{pq}, \quad \forall pq\in \tbinom S2 .
	\end{align}
	Moreover, we define the function $\varphi_{c'}: \{0, 1\}^{S'} \to \mathbb{R}$ such that for all $y\in \{0, 1\}^{S'}$ it holds that 
	\begin{equation}
		\varphi_{c'}(y) = \sum_{(p,q)\in P'}c'_{(p, q)}y_p(1- y_q).
	\end{equation}
	Then we have that
	\begin{equation}
		\min_{x\in \{0, 1\}^S}\phi_c(x) = \min_{\substack{y\in \{0, 1\}^{S'} \\ y_s = 1 \\ y_t = 0}} \varphi_{c'}(y).
	\end{equation}
\end{proposition}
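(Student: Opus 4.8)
The plan is to exhibit an explicit objective-preserving bijection between the feasible set $\{0,1\}^S$ on the left-hand side and the constrained set $\{y \in \{0,1\}^{S'} \mid y_s = 1,\ y_t = 0\}$ on the right-hand side, after which equality of the two minima is immediate. Concretely, to each $x \in \{0,1\}^S$ I would associate the vector $y \in \{0,1\}^{S'}$ defined by $y_p = x_p$ for all $p \in S$, together with $y_s = 1$ and $y_t = 0$. This map is a bijection onto $\{y \in \{0,1\}^{S'} \mid y_s = 1,\ y_t = 0\}$, with inverse the restriction $y \mapsto y|_S$. It therefore suffices to prove that $\varphi_{c'}(y) = \phi_c(x)$ for every such pair $(x,y)$, since minimizing the two equal objectives over the two sets in bijection yields the stated identity.

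The key step is to evaluate $\varphi_{c'}(y) = \sum_{(p,q) \in P'} c'_{(p,q)}\, y_p(1 - y_q)$ by splitting $P'$ into its three defining classes of arcs and substituting $y_s = 1$ and $y_t = 0$. First, the source arcs $(s,p)$, present exactly when $c_p < 0$, contribute $\sum_{p : c_p < 0} (-c_p)\, y_s (1 - y_p) = -\sum_{p : c_p < 0} c_p (1 - y_p)$, which matches the last term of $\phi_c$. Second, the sink arcs $(p,t)$, present exactly when $c_p > 0$, contribute $\sum_{p : c_p > 0} c_p\, y_p (1 - y_t) = \sum_{p : c_p > 0} c_p\, y_p$, which matches the middle term. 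Third, for each unordered pair $pq \in \tbinom{S}{2}$ both arcs $(p,q)$ and $(q,p)$ lie in $P'$ with capacity $c_{pq}$, so together they contribute $c_{pq}\bigl(y_p (1 - y_q) + y_q (1 - y_p)\bigr)$.

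It then remains only to reconcile this third contribution with the first term of $\phi_c$, namely $\sum_{p \in S}\sum_{q \in S \setminus \{p\}} c_{pq}\, y_p (1 - y_q)$. Here I would group the ordered double sum by unordered pairs: the summands indexed by $(p,q)$ and $(q,p)$ together give $c_{pq}\bigl(y_p(1-y_q) + y_q(1-y_p)\bigr)$ because $c_{pq} = c_{qp}$, which is precisely the internal-arc contribution computed above. Adding the three matched contributions yields $\varphi_{c'}(y) = \phi_c(x)$, completing the argument. There is no serious obstacle: the proof is a direct expansion, and the only points requiring care are the bookkeeping distinction between the ordered pairs appearing in $\phi_c$ and the unordered pairs indexing the capacities $c'$, and tracking which source/sink arcs are present according to the sign of $c_p$.
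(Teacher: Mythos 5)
Your proposal is correct and follows essentially the same route as the paper's proof: both exhibit the bijection $x \mapsto (x, y_s{=}1, y_t{=}0)$, expand $\varphi_{c'}$ over the three arc classes (source, sink, and internal pairs), and match the resulting terms with those of $\phi_c$. The only cosmetic difference is that you group the internal-arc contribution by unordered pairs before comparing with the ordered double sum, whereas the paper identifies the two ordered sums directly; the bookkeeping is identical.
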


\begin{proof}
First, the map $\chi\colon \{0, 1\}^S \to \{y \in \{0, 1\}^{S'}\mid y_s = 1 \,\land \, y_t = 0\}$ such that $\chi(y)_s = 1$, $\chi(y)_t = 0$ and $\chi(y)_p = y_p$ for any $p\in S$ is bijective. 
Second, for any $y\in \{0, 1\}^S$ it holds that
\small{
\begin{align}
\varphi_{c'}(\chi(y)) & = \sum_{(p, q)\in P'}c'_{pq}\chi(y)_p (1- \chi(y)_q) = \sum_{p\in S}\sum_{q\in S \setminus \{p\}}c'_{(p, q)}y_p(1- y_q) + \sum_{\substack{p\in S \\ c_p > 0}}c'_{(p, t)}y_p + \sum_{\substack{p\in S \\ c_p < 0}}c'_{(s, p)}(1- y_p) \\
& = \sum_{p\in S}\sum_{q\in V\setminus \{p\}}c_{pq}y_p(1- y_q) +\sum_{\substack{p\in S \\ c_p > 0}}c_py_p - \sum_{\substack{p\in S \\ c_p < 0}}c_p(1- y_p) = \phi_c(y)
\end{align}
}
\normalsize
This concludes the proof.
\end{proof}

\end{document}